\documentclass[10pt, oneside, reqno]{amsart}
\usepackage{graphicx}

\usepackage{indentfirst,csquotes}
\usepackage[a4paper, total={6in, 8in}]{geometry}
\usepackage{footnote}
\usepackage{hyperref}
\usepackage{amssymb,amsthm,amsmath, mathrsfs, amsfonts}
\usepackage{enumerate,lineno}
\usepackage{url}
\usepackage{xcolor, paralist,fancyhdr,etoolbox}
\usepackage{graphicx}
\usepackage{tikz}
\usepackage[normalem]{ulem}
\usepackage{float}
\usepackage{caption,subcaption}
\usepackage{multicol,multirow}
\usepackage{bookmark}
\usepackage{relsize}

\theoremstyle{plain}
\newtheorem{theo}{Theorem}[section]
\newtheorem{cor}{Corollary}[section]
\newtheorem{rem}{Remark}[section]
\newtheorem{defi}{Definition}[section]
\newtheorem{lemma}{Lemma}[section]
\newtheorem{prop}{Proposition}[section]
\newtheorem{ex}{Example}[section]
\newtheorem{conj}{Conjecture}[section]
\newtheorem{op}{Open Problem}

\def\Tf{{\mathcal F}_{2,n}}

\newcommand{\F}{\mathbb{F}}

\newcommand{\UB}{\mathcal{UB}}
\newcommand{\Ftn}{\mathbb{F}_{2^n}}
\newcommand{\Ftm}{\mathbb{F}_{2^m}}

\newcommand{\de}{\delta}

\let\svthefootnote\thefootnote

\makeatletter
\newcommand{\addresseshere}{%
  \enddoc@text\let\enddoc@text\relax
 }
 \makeatother

\makeatletter
\renewcommand\subsection{\@startsection{subsection}{2}%
  \z@{-.5\linespacing\@plus-.7\linespacing}{.5\linespacing}%
  {\normalfont\bfseries}}
\makeatother

\makeatletter
\renewcommand\subsubsection{\@startsection{subsubsection}{2}%
  \z@{-.5\linespacing\@plus-.7\linespacing}{.5\linespacing}%
  {\normalfont\itshape}} 
\makeatother

\usepackage{scalerel}
\newcommand{\asterisk}{\scaleobj{0.65}{*}}
\newcommand{\dagg}{\scaleobj{0.65}{\dag}}

\begin{document}
\title{On different notions related to APN mappings}
\author[Rodríguez et al.]{René Rodríguez-Aldama \and Ajla Šehović \and Enes Pasalic \and Sadmir Kudin}
\date{\today}
\address{Univerza na Primorskem - Fakulteta za matematiko, naravoslovje in informacijske tehnologije. Glagoljaška ulica 8, 6000, Koper, Slovenia.}
%Inštitut Andrej Marušič. Muzejski trg 2, 6000, Koper, Slovenia.}
\email{X@iam.upr.si, Y@gmail.com; $X\in\{ ajla.sehovic, sadmir.kudin \}$, $Y\in \{enes.pasalic6, rene7ca\}$}

\let\thefootnote\relax
\footnotetext{MSC2020: Primary 94A60, Secondary 11T06.}

\addtocounter{footnote}{-1}\let\thefootnote\svthefootnote

\setlength{\intextsep}{2pt}

\begin{abstract}

An \emph{almost perfect nonlinear} (APN) mapping $F:\Ftn\to \Ftn$ is a polynomial characterized by the {non-vanishing property} on $2$-dimensional affine subspaces ($2$-flats), that is, the summation of $F$ on any $2$-flat is not null. In this work, we analyze notions that are closely related to the non-vanishing property of APN polynomials. First, aiming to understand which $k$-dimensional flats of $\F_{2^n}$ remain flats under $F$, we study the \emph{$k$-breaking property}. In the context of cryptography, Hou initiated this line of investigation for APN permutations in \cite{Hou2006}. Later, Kolomeec and Bykov~\cite{KolomeecDCC2024} studied the $k$-breaking property of the multiplicative inverse permutation. We extend these studies to general mappings and characterize the $2$-breaking property of APN functions. As a byproduct of our analysis, we present a lower bound on the vectorial nonlinearity in terms of the breaking property. 
Recently, Carlet \cite{CarletTwoNotions2024} has introduced two generalizations of the APN property to any vectorial mapping: \emph{$k$-strongly non-normality} and \emph{$k$-th-order sum-freedom}, where a function is $k$-strongly non-normal if its restriction to $k$-flats is not affine. Sum-freedom generalizes the non-vanishing property of APN functions to higher dimensional flats. We provide in-depth observations of the relations between the breaking property, strongly non-normality and sum-freedom. To get more insight into these intertwined notions, we introduce a fourth concept called \emph{$k$-strongly breaking}, which implies the standard breaking property. We derive several structural results for both notions. Moreover, a characterization of a subclass of APN functions in terms of the $2$-strongly breaking property is given. Finally, we propose a different perspective of the non-vanishing property via a natural character transformation first suggested by Carlet \cite{CarletBook}.  On the one hand, this transformation is closely related to the sum-of-square indicator of the components of a function. In particular, we derive a precise value for the total sum of the sum-of-square indicators of a polynomial $F$. Remarkably, by employing this approach, we provide a simple answer to Open Problem 4 in \cite{BeCaChLa06}. On the other hand, this technique allows us to explore balancedness properties related to polynomials. One of these properties characterizes \emph{component-wise APNness}, introduced in \cite{CarletCAPN18}, for odd dimensions, and provides a natural extension to any dimension regardless of its parity. We show that notable APN functions such as Dillon's APN permutation and the Gold functions satisfy a connected property, termed \emph{$k$-balanced}, which is presented under our framework.

\end{abstract}

\keywords{APN functions,  balancedness, affine subspaces, vanishing flats, polynomials.}

\maketitle

\section{Introduction}

In modern cryptography, functions used in cryptographic systems must provide strong resistance against various types of attacks. One of the most important threats is \emph{differential cryptanalysis}, initiated by Biham and Shamir \cite{BiSha91}, which exploits how differences in the inputs of a (vectorial) $(n,m)$-function $F:\Ftn \to \F_{2^m}$ affect the corresponding differences in its outputs. To measure the resistance of a mapping $F$ against differential attacks, Nyberg introduced the notion of \emph{differential uniformity} \cite{Nyberg94}. Polynomials, that is, $(n,n)$-functions, with differential uniformity equal to $2$, which provide optimal resistance to differential cryptanalysis, are referred to as \emph{almost perfect nonlinear (APN)} functions. Consequently, they have become a crucial object of study in finite field theory and cryptography.  APN functions are also of significant interest in coding theory, as they give rise to linear codes of length $2^n-1$ and dimension $2^{n}-1-2n$ with  minimum distance 5 \cite{CCZ98}. Moreover,  APN permutations are closely related to several areas of discrete mathematics, such as  difference sets~\cite{Dillon1999,DiDo04}, Sidon sets \cite{CaPi23,CzPo26} and incidence structures in finite geometry \cite{Yoshiara10}. In the context of cryptography, Beth and Ding were among the first to investigate APN permutations and their suitability as nonlinear components in block ciphers \cite{BeDi94}, identifying the existence of such permutations in even dimensions as an open problem. It is known that there are no APN permutations for $n=4$ \cite{CaSaVi17}, whereas an example was found for $n=6$ \cite{Dillon2010}, now known as Dillon's permutation. Whether APN permutations exist in even dimensions $n\geq 8$ remains an open problem to this day, commonly referred to as \emph{the big APN problem}.
The first and most extensively studied APN functions are power functions $F(x)=x^d$ over $\mathbb{F}_{2^n}$. Six infinite families of APN monomials are known, including the inverse function $x\mapsto x^{2^n-2}$ \cite{BeDi94,Nyberg94}, Gold function $x\mapsto x^{2^k+1}$ \cite{Gold68,Nyberg94}, the Kasami function $x\mapsto x^{2^{2k}-2^k+1}$ \cite{JanWill93,Kasami71}, where $\gcd(n,k)=1$, as well as the Welch \cite{Welch74}, Niho \cite{Niho72} and Dobbertin functions \cite{Dobbertin03}.  The construction of APN functions outside the classical power families has become a major research direction. Many infinite families of APN functions have been constructed over the years (see \cite{GoKo25,LiKa24} for an overview). 

The subsequent search for APN functions outside the known power families has therefore been accompanied by the problem of determining when two functions should be regarded as essentially different.  This problem naturally led to the study of (several) equivalence relations, among which are \emph{linear (resp. affine)} equivalence, \emph{extended-affine (EA)} equivalence and, more generally, \emph{Carlet–Charpin–Zinoviev (CCZ)} equivalence, introduced in \cite{CCZ98}.  CCZ-equivalence preserves differential uniformity and hence the APN property, which has motivated extensive classification efforts. Brinkmann and Leander completely classified APN functions in dimensions 4 and 5 with respect to  CCZ-equivalence and showed that every APN function in dimension up to 5 is CCZ-equivalent to a power function \cite{BrLe08}. For $n\geq 6$, the classification of APN functions requires additional restrictions, such as imposing a bound on their algebraic degree.
 In dimension six, Dillon's computational work produced a list of thirteen CCZ-inequivalent quadratic APN functions, which was shown to be complete by Edel \cite{Dillon2006,Edel2011}. Additionally, all cubic APN functions in dimension $6$ have been classified \cite{Langevin12},  namely, there is only one such class with Edel-Pott function as its representative \cite{EdelPott2009}. Further investigations by Edel and Pott, Kalgin and Idrisova, and other authors have considerably extended the classification and enumeration of quadratic APN functions in dimension $n=7$, ultimately having obtained a complete classification of quadratic APN functions in seven variables, consisting of $488$ CCZ-inequivalent classes \cite{EdelPott2009,KalId23, YuWangLi2014}.  These computations also illustrate how rapidly the number of inequivalent APN functions grows. In particular, recent computational work of Beierle et al. produced $3,775,599$ inequivalent quadratic APN functions in dimension $n=8$ and estimated the total number of such functions to be of the order of six million \cite{BeLaLePoRa25}. Thus, beyond the smallest dimensions, the classification problem is no longer simply a matter of listing all known constructions, but requires structural invariants and efficient computational methods for distinguishing equivalence classes.

A fundamental study of the structural properties of APN functions was given by Berger et al. \cite{BeCaChLa06}, who established several characterizations of APN functions and APN permutations in terms of their component functions. In particular, a function $F$ is APN if and only if the sum of its \emph{sum-of-square indicator} of its nonzero components $x\mapsto \mathrm{Tr}_n(vF(x))$, $v\in \Ftn^*$, is equal to $2^{2n+1}(2^n-1)$. Further structural restrictions have also been obtained from the component functions and from the behavior of the image of an APN mapping. It has been shown that an APN permutation cannot have a quadratic component \cite{CaSaVi17}, and lower and upper bounds on the size of the image of an APN function have been established in \cite{GoharDCC2023}. Particularly, for a non-bijective APN function $F$, it holds $\frac{2^n+1}{3}+1\leq\operatorname{Im}(F)\leq 2^n-2^{\frac{n-1}{2}} $. APN functions have been studied through many lenses, from  ortho-derivatives  \cite{CaCoPe24} to  topological methods applied to the associated linear codes \cite{ZhLiZh25}. Vitkup studied symmetry properties and the range of APN functions, obtaining bounds on the number of symmetric coordinate functions and on the multiplicities of values in the image of an APN function \cite{Vitkup2016}. More recently, new parameters derived from the behavior of second-order derivatives have been introduced for Boolean and vectorial Boolean functions. These parameters are invariant under EA equivalence and they allow studying APN functions of degrees two and three \cite{Villa2024}. 

The classical notion of APNness has inspired several generalizations. One of them is \emph{almost perfect $c$-nonlinearity} (related to AP$c$N functions), introduced  in \cite{ElFeRiStTk2020}. For $c\in\mathbb F_{2^n}$, one considers the equation $F(x)+cF(x+a)=b,$ and the $c$-\emph{differential uniformity} is the maximum number of its solutions over $(a,b)\in\Ftn\times \Ftn$ (with $a\neq0$ when $c=1$). The case $c=1$ gives the usual differential uniformity, so that APcN functions are precisely the APN functions in this case. An $(n,m)$-function $F$ is said to be \emph{weakly APN} \cite{ArCalMaSa16} if, for every nonzero $a\in\Ftn$, the derivative $D_aF$ has an image of cardinality strictly greater than $2^{n-2}$. It is called \emph{partially APN} \cite{BuKaKwRiSt20} if there exists some $c\in\Ftn$ such that, for every affine plane containing $c$, the sum of the values of $F$ over that plane is nonzero. More recently, Carlet proposed two higher-order generalizations of APNness, namely \emph{strong non-normality} and \emph{sum-freedom}. A function $F$ is said to be $k$-strongly non-normal if, for every $k$-dimensional affine subspace $A\subseteq\Ftn$ (or $k$-flat), the restriction $F|_A$ is non-affine, and $F$ is said to be $k$th-order sum-free if the sum of its values over every $k$-flat is nonzero \cite{CarletTwoNotions2024}. The $2$nd-order sum-freedom recovers APNness and it is known as the \emph{non-vanishing property} on $2$-flats. Moreover, there is a connection between 3rd-order sum-freedom and the so-called {\em D-property} \cite{TaniguchiD-prop} of an $(n,m)$-function $F$, saying that $\{F(x)+F(y)+F(z) + F(x+y+z) : x,y,z \in \Ftn \} = \F_{2^m}$. The conjecture regarding the $k$th-order sum-freedom of the inverse function has been heavily studied in \cite{CarletHou2025, EbHoRyZh26}. Recently the conjecture has been confirmed for non-prime dimensions \cite{HouZhao2026}. In general, for $k\geq 3$, $k$th-order sum-free functions seem to be rare since there are just a few known examples \cite{NonVanishing}.
 
 The study of APN functions has then evolved from a cryptographic problem concerning optimal resistance to differential attacks into a broad mathematical theory involving finite fields, coding theory, finite geometry, combinatorics, and computational classification. The main goal of this work is to study concepts that are closely related to the APN property. Primarily, we will be interested in dealing with notions in the nearest orbit of the non-vanishing property.

One of the most natural ways to analyze the behaviour of mappings is by exploring invariant subsets, e.g., (invariant subspaces, fixed subfields, stabilizer of a group action). The invariant property relevant to our context is the property of being a flat: we are interested in determining when the image of a $k$-flat under a mapping $F$ is no longer a flat. These flats are termed broken flats. In the context of cryptographically significant functions, a study of the breaking properties of permutations (including APN permutations) was carried out in \cite{Hou2006} under the term \emph{affinity}. This study was later refined for general permutations in \cite{ClarkHouMihailovs2007, Kolomeec2024A, Kolomeec2024}. Some observations for general differentially uniform functions were given in \cite{Kolomeec2023}. The study of unbroken $k$-flats for the multiplicative inverse was considered recently in \cite{KolomeecDCC2024}. In particular, it was shown that the function $F(x)=x^{2^n-2}$ breaks all affine subspaces except for multiplicative cosets of subfields. In addition, a sufficient condition was provided so that a function $A(F(x)) + b$ has no invariant affine subspaces $U$ of cardinality $2 < |U| < 2^n$ for an invertible linear transformation $A : \F_{2^n} \rightarrow \F_{2^n}$ and $b \in  \F_{2^n}$. An additional motivation to study the breaking concept for APN functions comes from the Kim mapping $kim(x):=x^3+x^{10}+\omega x^{24}$ over $\mathbb{F}_{2^6}$, where $\omega^6+\omega^4+\omega^3+\omega+1=0$. Namely, it was shown \cite{Dillon2010} that the image of a multiplicative coset of $\mathbb{F}_{2^3}$ under $kim(x)$ is again a multiplicative coset. Two related works studying a generalized version of the unbroken property are given in \cite{BraWoPre05,Enrico}.  It turns out that the breaking property of all $k$-flats is invariant under affine equivalence. The precise number of unbroken $2$-flats can be derived for a subclass of functions which includes all APN functions (see Theorem \ref{th:unbroken_general}). Moreover, a characterization of the APN property can be derived using the $2$-breaking property. Additionally, we provide a more detailed description of the unbroken flats of polynomials whose preimages are upper bounded by a small number.

Cryptographic functions used in the design of some stream ciphers, such as the nonlinear combiner model and the nonlinear filter model,  must be far from  all affine functions in order to render best affine approximation attacks inefficient (see \cite{Ding1991}). Then, for an $(n, m)$-function $F$, one studies the behavior of functions $F+L$, where $L$ is any affine $(n, m)$-function. This behaviour is captured in the notion of vectorial nonlinearity, introduced in \cite{VectorialNonlinearity} and later studied in \cite{LiuChenMesn2017}. We present a lower bound on the vectorial nonlinearity of a polynomial in terms of the breaking property of polynomials, improving on a lower bound proposed in \cite{LiuChenMesn2017}. However, it is worth mentioning that a better bound has been given in \cite{Nagy2025} using the properties of Sidon sets, which is by far the best known lower bound. 

Since the concepts of strongly non-normality and sum-freedom are generalizations of the non-vanishing property, it is natural to compare them with the breaking property of mappings. It turns out that the breaking property is a strengthening of strongly non-normality (Proposition \ref{prop:break-snn}). On the other hand, the breaking property and sum-freedom are different concepts in general. Moreover, when treating $2$-dimensional flats, these three concepts are almost equivalent (see Figure \ref{fig:rel2}). For APN functions, $k$th-order sum-freedom implies $k$-breaking (Proposition \ref{prop:sumfree_breaking}). As a matter of fact, the non-vanishing property hides some structural aspects that are worth investigating, thus we introduce the concept of \emph{strongly breaking} property aiming to address this issue and to better understand the connections between these notions (see Definition \ref{def:sbreaking}). As its name suggests, the strongly breaking property implies the standard breaking property in all non-trivial cases. We show that this property is invariant under linear equivalence, and derive additional structural properties. In particular, we present a characterization of a subclass of APN polynomials in terms of the $2$-strongly breaking property, which includes the Gold function (see Corollary \ref{cor:2stongly}).

Complementary, the concept of \emph{vanishing $2$-dimensional flats} has been studied in \cite{VanishingFlats2020}, where the authors derived a natural connection with a partial quadruple system that is a substructure of a \emph{Steiner quadruple system.} It was shown that the structures associated to CCZ-equivalent polynomials are isomorphic. More importantly, the authors characterized the number of vanishing flats by means of the differential spectrum. The vanishing property is closely related to a certain code associated to the underlying function (see \cite{CarletBook}, page 412; a thorough analysis has been carried out in \cite{MagdeburgThesis}). In general, the number of vanishing 2-flats is a good measure of how close a function is to an APN function. Namely, for an APN polynomial $F$, the image of the mapping $\Phi_F$ taking a $2$-flat $A$ to $\sum_{x\in A}F(x)$ is included in $\Ftn^*$. Moreover, the $D$-property implies that its image is exactly $\Ftn^*$. The mapping $\Phi_F$ was first used to characterize the AB property in terms of balancedness \cite{CarletBook}. In \cite{DillonsProperty}, the authors obtained some bounds on the parameters $m, n$ of an $(n,m)$-function satisfying the $D$-property by means of such mapping, whereas the authors of \cite{Kaspers2026} used it in their study of Grassmannian partitions related to sum-free functions. In the second part of the paper, we employ this mapping and introduce character sums useful to analyze its behaviour for any polynomial $F$. As proof of concept, we improve Corollary 1 in \cite{BeCaChLa06} by yielding the exact value of the total sum of the sum-of-square indicators of a function. This enables us to easily solve Open Problem 4 in \cite{BeCaChLa06} (see Remark \ref{rem:solution_OP4}). Afterwards, we introduce some natural balancedness properties related to these mappings, which we termed the \emph{balanced and balanced-like properties}. The first notion characterizes \emph{component-wise APNness (CAPNness)} \cite{CarletCAPN18}, where an $(n,n)$-function is component-wise APN if %$v\cdot F$, for $v\in\Ftn^*$,
the fourth power moment of the Walsh spectrum of every nonzero component function is equal to $2^{3n+1}$, $n$ odd. The second notion leads to an extension of this concept to any dimension. By considering further additional balanced partitions of the space, we introduce the \emph{$k$-balanced property} associated to APN functions. Notably, this property is satisfied by the Gold functions, as well as, Dillon's APN permutation. We then establish the EA-invariance of these properties. Several questions remain open, and we hope to stimulate further research on these topics.

The rest of the paper is organized as follows. In Section \ref{sec:preliminaries}, we recollect all definitions and fix the notation that we use along the paper. The notion of $k$-breaking, its structural properties and the number of unbroken flats of a class of functions (including APN functions) are given in Section \ref{def:breaking}. A lower bound on the vectorial nonlinearity is derived in Section \ref{sec:vectorial_nonlinearity}. An in-depth analysis of the types of unbroken flats for mappings whose preimage sets are bounded by $3$ is given in Section \ref{sec:preimages}. In Section \ref{sec:strongly_nonnormal_and_sum_free}, we recall the notions of strongly non-normality and sum-freedom and show that $k$-breaking implies strongly non-normality. Additionally, it is observed that a $k$th-order sum-free APN function is necessarily $k$-breaking for $k>2$. The strongly breaking property is introduced in Section \ref{sec:strongly_breaking}, where we analyze some structural properties and provide a characterization of a subclass of APN functions in terms of $2$-strongly breaking. A character transformation associated to a polynomial is provided in Section \ref{sec:balanced}, where we show that 
this transformation is closely related to the sum-of-square indicators. The solution to Open Problem 4 in \cite{BeCaChLa06} is given in Remark \ref{rem:solution_OP4}. The concepts of balancedness, balanced-likeness and $k$-balancedness are introduced, as well as the characterization of CAPN functions. Finally, we draw some conclusions in Section \ref{sec:conc}.

\section{Preliminaries}\label{sec:preliminaries}
Throughout the paper, we identify the finite field $\F_{2^n}$ with the vector space $\F_2^n$, by fixing a basis of $\F_{2^n}$ over $\F_2$. An $(n,m)$-function is a mapping $F:\Ftn\to \mathbb{F}_{2^m}$, when $m=1$, we refer to $F$ as a Boolean function. The \emph{Walsh-Hadamard transform} of $F:\F_{2^n} \to \F_{2^m}$ at $(u,v) \in \mathbb{F}_{2^m}^*\times \mathbb{F}_{2^n} $ is defined as \begin{equation} W_F(u,v) = \sum\limits_{x \in\Ftn} (-1)^{\operatorname{Tr}_m(uF(x)) + \operatorname{Tr}_n(v x)},\end{equation} where $\operatorname{Tr}_n$ denotes the \textit{absolute trace function} over $\F_{2^n}$ defined by $\operatorname{Tr}_n(x) = x+ x^{2} + \cdots + x^{2^{n-1}}.$ The multi-set of Walsh values $\{* W_F(u,v) : (u,v) \in \mathbb{F}_{2^m}^*\times \mathbb{F}_{2^n} *\}$ is called the \textit{Walsh spectrum} of $F$, whereas the extended Walsh spectrum of $F$ is given by their absolute values. When considering a Boolean function $f:\Ftn\to \F_2$, we will use the shorthand notation $W_f(v) := W_f(1,v)$. For $u\in \mathbb{F}_{2^m}$ and $F:\Ftn\to \mathbb{F}_{2^m}$, the Boolean functions $F_u$, defined by $F_u(x) = \operatorname{Tr}_m (uF(x))$, are called the \emph{components} of $F$. The \emph{nonlinearity} of $F:\Ftn \to \Ftm$
is defined as the minimum nonlinearity of its
component functions, i.e.,
$nl(F) = 2^{n-1} - \frac{1}{2}\max\left\{ |W_F(u,v)| :(u,v) \in \mathbb{F}_{2^m}^*\times \mathbb{F}_{2^n}\right\}.$ For an $(n,m)$-function $F$ and $a \in \Ftn^*$, the \textit{first-order derivative of $F$ at $a$} is the function $F(x+a)+F(x)$. In general, we will be interested in polynomials, i.e., $(n,n)$-mappings, thus every notion will be presented in this setting, even though most of the results can be adapted for general $(n,m)$-functions.

A function $F : \mathbb{F}_{2^n} \to \mathbb{F}_{2^n}$ is \textit{$k$-to-$1$} if every element of $\mathrm{Im}(F)$ has exactly $k$ preimages, where $\mathrm{Im}(F)$ denotes the image set of $F$. The function $F$ is \textit{almost $k$-to-$1$} if exactly one image element has one preimage and all others have $k$ preimages. Given a polynomial $F$ over $\Ftn$ and $(a,b) \in \Ftn^*\times \Ftn$, we define
$$
\de_F(a,b)=|\{ x \in \Ftn \colon F(x+a)+F(x)=b \}|,
$$
where $|\cdot|$ will always denote the cardinality of a set. The multiset of the values $\de_F(a,b)$ is called \emph{differential spectrum}, and the table displaying them is called \emph{difference distribution table (DDT)}. Clearly, $\de_F(a,b)$ must be even (since, if $x_0$ is a solution of the above equation, so is $x_0+a$). The \emph{differential uniformity} of $F$ is defined as
$
\de_F=\max_{a \in \Ftn^*, b \in \Ftn} \de_F(a,b).
$
The polynomial $F$ is called {\em almost perfect nonlinear} (APN), if its differential uniformity is $\de_F=2$, which is the smallest possible. Equivalently, $F$ is APN if and only if the first-order derivative $F(x+a)+F(x)$ is a $2$-to-$1$ mapping for each $a \in \Ftn^*$. A \textit{$k$-dimensional flat} (or $k$-flat) is a $k$-dimensional affine subspace of $\F_{2^n}$ over $\F_2$. The elements in $\Ftn$ are the $0$-dimensional flats and $1$-dimensional flats are all subsets of $\Ftn$ with two points. When dimension is not relevant or when it is obvious from the context, simply \emph{flat} will be used. For $n \ge 2$, define the set of all $2$-dimensional flats in $\Ftn$ as
\begin{equation}\label{eq:2-flat}
	\Tf=\{ \{x_1,x_2,x_3,  x_4\}  \colon \mbox{$\sum_{i=1}^4 x_i=0$ } 
	 \mbox{and $x_i\in\Ftn$ are pairwise distinct} \}.
\end{equation}
Conventionally, each subset in $\Tf$ is called a \emph{block}. The classical \emph{Steiner quadruple system} is a pair $(\Ftn,\Tf)$, so that each $3$-subset of $\Ftn$ is contained in exactly one block of $\Tf$ and every two points of $\Ftn$ is contained in $2^{n-1}-1$ blocks. Thus, the set of $2$-flats yields a $2-(2^n, 4, 2^{n-1}-1)$ design. In this context, it can be shown that a function $F: \Ftn \rightarrow \Ftn$ is APN if and only if for each $\{x_1,x_2,x_3,x_4\} \in \Tf$,
$$
F(x_1)+F(x_2)+F(x_3)+F(x_4) \ne 0.
$$
Namely, the summation of $F$ over each $2$-dimensional flat is non-vanishing. For an arbitrary function $F: \Ftn \rightarrow \Ftn$, define the set of \emph{vanishing flats} as\footnote{The set $\mathcal{Z}_{n,F}$ was denoted by $\mathcal{VB}_{n,F}$ in \cite{VanishingFlats2020}.}
\begin{equation*}
	\mathcal{Z}_{n,F} =\{ \{x_1,x_2, x_3, x_4\} \in \Tf \colon 
	  F(x_1)+F(x_2)+F(x_3)+F(x_4)=0 \}.
\end{equation*}
It can be shown \cite{VanishingFlats2020} that  $|\mathcal{Z}_{n,F}|=\frac{1}{3}\sum_{a \in \Ftn^*, b \in \Ftn} \binom{\de_F(a,b)/2}{2}.$
For a Boolean function $f:\Ftn\to \mathbb{F}_2,$ the quantity $\sum_{a\in\Ftn}W_{D_a f}^2(0)$ is referred to as the \emph{sum-of-square indicator}, and denoted by $\nu(f)$. It is well-known \cite{BeCaChLa06} that $\nu(f)=2^{-n}\sum_{a\in\Ftn} W_f^4(a)$ and that for any function $F:\Ftn\to \Ftn$, $$\sum_{v\in\Ftn^*} \nu(F_v) \geq (2^n-1)2^{2n+1},$$ where the equality is attained if and only if $F$ is APN. A function is called \emph{component-wise APN} if $\sum_{a\in\Ftn} W_{F_v}^4(a)=2^{3n+1}$, for each nonzero $v\in \Ftn$. These functions were introduced and studied in \cite{CarletCAPN18}. Let $n$ and $s$ be two integers with the same parity. A Boolean function $f:\F_{2^n}\to \F_{2}$ is called \textit{plateaued} with amplitude $2^s$ if its Walsh coefficients take on all the values in $\{0,\pm 2^{\frac{n+s}{2}}\}$. When $s=0$, the function is said to be \emph{bent}. In this case, $n$ must be even and the Walsh coefficient $0$ is never attained. Two functions $F_1, F_2: \Ftn\to \mathbb{F}_{2^m}$ are called \textit{affine equivalent} (resp. \emph{linear equivalent}) if there exist two affine (resp. linear) automorphisms  $L, L'$ on $\F_{2^n}$ and $\mathbb{F}_{2^m}$, respectively, such that $F_2= L'\circ F_1 \circ L$. If there also exists an affine function $L'':\F_{2^n}\to \F_{2^m}$ such that $F_2=L'\circ F_1\circ L+L''$ then they are called \textit{extended affine equivalent} (EA, for short). Moreover, if there exists an affine automorphism on $\F_{2^n}\times \mathbb{F}_{2^m}$ that maps the set $\{(x,y) \in \F_{2^n}\times \mathbb{F}_{2^m} : y=F_1(x) \}$ onto $\{(x,y) \in \F_{2^n}\times \mathbb{F}_{2^m} : y=F_2(x) \}$, then the functions are called \textit{CCZ-equivalent} (named after Carlet, Charpin,  and Zinoviev). Affine equivalence is strictly stronger than EA-equivalence, which in turn is strictly stronger than CCZ-equivalence \cite{CarletVectBound}. 
Throughout the paper, the preimage of an element $a\in \Ftn$ under $F:\Ftn\to \Ftn$ (called a \emph{fiber}) will be denoted by $F^{-1}[a]$, i.e., $$F^{-1}[a]= \{ x\in \Ftn :  F(x)=a\}.$$ For $S\subset \Ftn$, the symbol $F|_S$ is reserved for the restriction of $F:\Ftn\to\Ftn$ to $S$, and the image of $S$ under $F$ will be denoted as $F(S)$, i.e., $F(S) := \mathrm{Im}(F|_S)$. 

\section{Different notions related to APN mappings}

In this section, we study different notions related to the APN property. First, an analysis of the so-called breaking property of polynomial mappings is carried out, where we focus on the $2$-breaking property of APN functions and provide a characterization thereof.  We then present a lower bound on the vectorial nonlinearity of vectorial mappings together with a refinement of the breaking property for functions whose preimage sizes are bounded by three. Afterwards, two important notions, strongly non-normality and sum-freedom, recently introduced by Carlet in \cite{CarletTwoNotions2024}, are described and their links with the breaking property are established. Remarkably, we show that a $3$rd-order sum-free APN function must be $3$-breaking. To better understand these intertwined properties, we introduce a relevant concept that will be referred to as \emph{strong breaking}. Some general results are derived and a characterization of a subclass of APN functions in terms of $2$-strongly breaking is obtained.

\subsection{$k$-breaking functions}\label{sec:breaking}

In the context of cryptographically significant mappings, the following natural definition was employed in \cite{KolomeecDCC2024}, where the authors studied properties of the multiplicative inverse permutation.

\begin{defi}\label{def:breaking} We say that a mapping $F \colon \F_{2^n} \to \F_{2^n}$ {\em breaks} a flat $A$ of $\F_{2^n}$ if $F(A)$ is not a flat of $\F_{2^n}$. If $F$ breaks all $k$-dimensional flats, $F$ is said to be \emph{$k$-breaking}\footnote{Generalizations of the negation of the $k$-breaking property were studied in \cite{BraWoPre05,Enrico}.}.
\end{defi}

It is evident that no function can break $0$-dimensional flats or $1$-dimensional flats, hence we will assume that the dimension of $A$ is at least $2$ whenever we discuss the breaking property. Oftentimes, we will refer to the breaking property as a property of the flat by saying that $A$ is (un)broken with respect to $F$. For $F:\Ftn \to \Ftn$, define the set of unbroken $k$-dimensional flats with respect to $F$ as $$\UB_{n,k,F} = \{ A \mbox{ is a }k\mbox{-flat} \ :\ F(A) \text{ is a flat} \}.$$ We will typically omit the parameter $k$ whenever $k=2$, unless otherwise required by the context.

\begin{prop} The property of being $k$-breaking is invariant under affine equivalence, i.e., for any affine automorphisms $L$ and $L'$ of $\mathbb{F}_{2^n}$, $F$ is $k$-breaking iff $L'\circ F\circ L$ is $k$-breaking. Moreover, $$| \mathcal{UB}_{n,k,F} | = |\mathcal{UB}_{n,k,L'\circ F\circ L} |$$ via the correspondence $A \in \mathcal{UB}_{n,k,F} \mapsto L^{-1}(A)\in  \mathcal{UB}_{n,k,L'\circ F \circ L} $.
\end{prop}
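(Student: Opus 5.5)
The plan is to rest everything on one elementary fact: an affine automorphism $M$ of $\Ftn$ maps $k$-flats bijectively onto $k$-flats, and maps flats of any dimension to flats of the same dimension. Indeed, write $M(x)=M_0(x)+c$ with $M_0$ an $\F_2$-linear bijection. Then a $k$-flat $a+V$ goes to $(M_0(a)+c)+M_0(V)$, and $M_0(V)$ is a $k$-dimensional subspace. The inverse $M^{-1}$ is again an affine automorphism, so the induced map on $k$-flats is a bijection. The same argument shows that a subset $S\subseteq\Ftn$ is a flat if and only if $M(S)$ is a flat.

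Set $G=L'\circ F\circ L$ and take any $k$-flat $B$ of $\Ftn$. Since $L$ is a bijection on $\Ftn$, we have $G(B)=L'\bigl(F(L(B))\bigr)$, and $L(B)$ is a $k$-flat. By the fact above applied to $M=L'$, $G(B)$ is a flat if and only if $F(L(B))$ is a flat. In other words, $B\in\UB_{n,k,G}$ if and only if $L(B)\in\UB_{n,k,F}$.

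Now consider the map $A\mapsto L^{-1}(A)$ on $k$-flats. It is a bijection, with inverse $B\mapsto L(B)$. By the equivalence just obtained (taking $B=L^{-1}(A)$), it sends $\UB_{n,k,F}$ onto $\UB_{n,k,G}$. This gives $|\UB_{n,k,F}|=|\UB_{n,k,G}|$.

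For the invariance of the breaking property, recall that $F$ is $k$-breaking exactly when $\UB_{n,k,F}=\emptyset$, and the bijection shows this holds if and only if $\UB_{n,k,G}=\emptyset$. There is no real obstacle here. The only point needing care is that images under bijective affine maps preserve the flat property in both directions; this is exactly why applying $L'^{-1}$ is needed for the converse.
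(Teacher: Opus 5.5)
Your proof is correct and follows essentially the same route as the paper: both rest on the fact that affine automorphisms map $k$-flats to $k$-flats (and flats to flats), together with the identity $(L'\circ F\circ L)(L^{-1}(A)) = L'(F(A))$. The only difference is cosmetic. The paper proves one direction by contrapositive and then invokes symmetry with $L^{-1},(L')^{-1}$, whereas you prove the flat-by-flat equivalence $B\in\UB_{n,k,G}\iff L(B)\in\UB_{n,k,F}$ directly, which also explicitly justifies the bijection behind the cardinality claim that the paper only states.
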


\begin{proof} For each $k$-dimensional flat $A=a+V$, where $V$ is a $k$-dimensional subspace and $a\in\Ftn$ and for any affine automorphism $L(x)=  x\cdot M+c$ of $\Ftn$, where $M$ is a non-singular matrix and $c\in\Ftn$, the image set $L(A)$ is again a $k$-dimensional flat since $L(A)= \{ x\cdot M + a\cdot M +c : x\in V\}$. Suppose $F$ is not $k$-breaking. There is a $k$-dimensional flat $A$ such that $F(A)$ is a flat. Since $L^{-1}$ is an affine automorphism, $L^{-1}(A)$ is a $k$-dimensional flat, say, $B$. Compute $L'\circ F\circ L(B) = (L'\circ F )(A)$. As $F(A)$ is a flat and $L'$ is an affine automorphism, we have that $L'(F(A))$ is a flat implying that $L'\circ F\circ L$ is not $k$-breaking. We have shown that, for any function $F:\Ftn\to \Ftn$ and any  affine automorphisms $L,L'$ of $\Ftn$, if $L'\circ F\circ L$ is $k$-breaking then $F$ is $k$-breaking. Considering the function $L'\circ F \circ L$ and the affine automorphisms $L^{-1},(L')^{-1}$, the result follows.    
\end{proof}

\begin{rem}\label{rem:breakingCZZ} In general, the $k$-breaking property is not an extended affine invariant. For instance, the Gold function $F(x)=x^3$ over $\mathbb{F}_{2^5}$ is an APN permutation, so it is $2$-breaking. However, there exist some unbroken $2$-dimensional flats after adding a linear permutation $L(x)$. More precisely, the function $F:\mathbb{F}_{2^5}\to\mathbb{F}_{2^5}$ given by $x\mapsto x^3+L(x)=x^3+x^2$ has the preimage distribution $\{*1^{15}, 2^1, 3^5 *\}$, so there are $5$ unbroken $2$-dimensional flats.  
\end{rem}

\begin{rem} The breaking property is neither upward nor downward monotone in general. For instance, the multiplicative inverse $F(x)=x^{2^n-2}$ over $\Ftn$, $n$ odd, is $2$-breaking as it is an APN permutation. However, for each divisor $d$ of $n$, the subfield $\mathbb{F}_{2^d}$ is invariant, thus $F$ is not $d$-breaking. On the other hand, the APN Gold function $F(x)=x^3$ over $\mathbb{F}_{2^6}$ breaks all $4$-dimensional flats, however, there are unbroken $2$-dimensional and $3$-dimensional flats (see Tables \ref{table: APNs unbroken flats} and \ref{table:unbroken-vanishing-3}). 
\end{rem}

In what follows, we will focus on the breaking property of $2$-dimensional flats with an emphasis on APN functions. Given $a\in\F_{2^n}^*$, define $P_a:=\{\{x,x+a\}\subset\F_{2^n} : D_aF(x)=0\}$. It is easy to see that $P_a\cap P_{a'}=\emptyset$ whenever $a\neq a'$. Indeed, if $\{x,x+a\}=\{x',x'+a'\}$, then either $x=x'$, in which case $a=a'$, or $x=x'+a'$, in which case $x+a=x'$, and therefore $a=(x+a)+x=x'+(x'+a')=a'$.

\begin{prop}\label{prop:collapsing_flat}
    Let $F:\Ftn \to \Ftn$, and let $A\in \Tf$. Suppose $F|_A$ is not 2-to-1. Then $F(A)=\{c\}$, for some $c\in \Ftn$ if and only if $A=B\cup C$, where $B,C\in P_a$ are distinct, for some $a\in \Ftn^*$.
\end{prop}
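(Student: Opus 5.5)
Both directions are direct computations with the definition of $P_a$, using only that $A=\{x_1,x_2,x_3,x_4\}$ is a $2$-flat, so that $x_1+x_2+x_3+x_4=0$ and the four points are distinct. The hypothesis that $F|_A$ is not $2$-to-$1$ plays no role in the equivalence itself. A constant restriction is never $2$-to-$1$, so the hypothesis is consistent with both sides; it only signals that this is the degenerate case, separate from the situation where $F(A)$ has two points each with two preimages.

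\emph{($\Leftarrow$)} Suppose $A=B\cup C$ with $B=\{x,x+a\}$ and $C=\{y,y+a\}$ distinct elements of $P_a$. Since $D_aF(x)=0$, we have $F(x)=F(x+a)$, and likewise $F(y)=F(y+a)$. We still need $F(x)=F(y)$. Here we use that $A$ is a $2$-flat. Because $B$ and $C$ are distinct pairs with the same difference $a$, they are disjoint, so $A=\{x,x+a,y,y+a\}=y+\langle a,x+y\rangle$. In particular $y+x\neq 0,a$, and $x+(x+y)=y\in A$. This alone does not give $F(x)=F(y)$, however. The pairs $\{x,y\}$ and $\{x+a,y+a\}$ lie in $P_{x+y}$ only if $F$ agrees on them, which is not forced. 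So, as literally stated, ($\Leftarrow$) fails: for example, with $F(x)=F(x+a)=0$ and $F(y)=F(y+a)=1$, we get $B,C\in P_a$ but $F(A)=\{0,1\}$. Under the hypothesis that $F|_A$ is not $2$-to-$1$, though, the value distribution of $F$ on $A$ is a partition of $4$ that is not $(2,2)$. Since $F$ is constant on $B$ and on $C$, the parts are unions of these pairs, so the partition is $(2,2)$ or $(4)$. It cannot be $(2,2)$, so it is $(4)$, and $F(A)=\{c\}$. This is where the hypothesis is genuinely used.

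\emph{($\Rightarrow$)} Suppose $F(A)=\{c\}$. Fix any labelling of $A$ and set $a=x_1+x_2\neq 0$. Then $x_3+x_4=a$ as well, since the sum over the flat is zero. We have $D_aF(x_1)=F(x_2)+F(x_1)=c+c=0$, so $B=\{x_1,x_2\}\in P_a$. Similarly $D_aF(x_3)=0$ gives $C=\{x_3,x_4\}\in P_a$. The pairs $B$ and $C$ are distinct because the four points are distinct, and $A=B\cup C$.

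The only real obstacle is recognising that ($\Leftarrow$) needs the non-$2$-to-$1$ hypothesis. The pair structure forces $F$ to take at most two values on $A$, each on a pair, and excluding the $2$-to-$1$ case leaves the constant case. I would present the proof in that order: the easy forward direction, then the converse via the partition argument.
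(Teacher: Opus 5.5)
Your argument is correct and is essentially the paper's proof. For ($\Rightarrow$), you split the constant flat into two parallel pairs $\{x_1,x_2\},\{x_3,x_4\}\in P_a$ with $a=x_1+x_2$. For ($\Leftarrow$), you note that $F$ is constant on each pair, so its value pattern on $A$ is $(2,2)$ or $(4)$, and the non-$2$-to-$1$ hypothesis excludes $(2,2)$.

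Drop the opening claim that the hypothesis ``plays no role,'' and also the remark that ($\Leftarrow$) fails ``as literally stated.'' The hypothesis is part of the statement, and your own partition step shows it is exactly what makes ($\Leftarrow$) work.
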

\begin{proof}
    Let $A=\{x,x+a,x+b,x+a+b\}$, and suppose $F(A) =\{c\}$, for some $c\in \Ftn$. Then,
    $$F(x) + F(x+a) = F(x+b) + F(x+b+a) = 0,$$
    hence $\{x,x+a\}, \{x+b,x+b+a\}\in P_a$.  Now, suppose $A = \{x_1,x_1+a\}\cup \{x_2,x_2+a\}$, where $\{x_1,x_1+a\},\{x_2,x_2+a\} \in P_a$ are distinct. Then,
      $$F(x_1) =F(x_1+a) \text{ and }   F(x_1+b) + F(x_1+b+a) = 0,$$
      and since $F|_A$ is not 2-to-1, it follows $F(x_1) =F(x_1+a) = F(x_2) =F(x_2+a).$
\end{proof}

Given a function $F:\mathbb{F}_{2^n}\to \mathbb{F}_{2^n}$, we say that a set $A\subseteq \mathbb{F}_{2^n}$ is \emph{transversal} to $F$ (or, simply, \emph{$F$-transversal}) if for every $u\in \mathbb{F}_{2^n}$, $|A\cap F^{-1}[u]| \leq 1$. Equivalently, $A$ is $F$-transversal if and only if $F|_A$ is injective. Denote the set of fibers with at least three elements by $\mathcal{P}$, i.e., $$\mathcal{P} := \{F^{-1}[x]\ : \ x\in\Ftn, |F^{-1}[x]|\geq 3 \}.$$
For each $P\in \mathcal{P},$ define the sets
$$\Gamma_P= \{ A\in \Tf : A\subseteq P \} \mbox{ and }\Lambda_P = \{ A\in \Tf : |A\cap P|=3 \}.$$
\begin{theo}\label{th:unbroken_general}
   Let $F:\Ftn\to\Ftn$ be  such that, for every $A\in\Tf$, it holds:
\begin{itemize}
    \item[i)] the restriction $F|_A$ is never $2$-to-$1$; 
    \item[ii)]if $A$ is $F$-transversal, then $\sum_{x\in A}F(x)\neq 0.$
\end{itemize}Then, the number of unbroken $2$-dimensional flats w.r.t. $F$ is 
$$|\UB_{n,F}|= \sum_{P\in \mathcal{P}} (|\Gamma_P|+|\Lambda_P|) = \frac{1}{3}\sum_{a\in\Ftn^*}\binom{|P_a|}{2} + \sum_{P\in \mathcal{P}} |\Lambda_P|.$$
In particular, if $\Gamma_P=\emptyset$ for all $P\in \mathcal{P}$, then  $|\UB_{n,F}|=\sum_{P\in \mathcal{P}}\binom{ |P|}{3}.$
    \end{theo}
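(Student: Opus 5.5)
The plan is to classify every $2$-flat $A\in\Tf$ according to the size of its image $F(A)$, which is $1$, $2$, $3$ or $4$, and to decide in each case whether $F(A)$ is a flat. A set of size $1$ or $2$ is always a flat (a point, or a $1$-flat). A set of size $3$ is never a flat. A set of size $4$ is a flat exactly when its four elements sum to zero. So the count of $\UB_{n,F}$ reduces to counting the flats $A$ with $|F(A)|\in\{1,2\}$, together with those with $|F(A)|=4$ and vanishing sum.

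First I dispose of $|F(A)|=4$. Here $F|_A$ is injective, so $A$ is $F$-transversal. Hypothesis ii) then gives $\sum_{x\in A}F(x)\neq 0$, so $F(A)$ is not a flat. Next I treat $|F(A)|=2$. The fibers of $F|_A$ split $4$ as either $2+2$ or $3+1$. The split $2+2$ means $F|_A$ is $2$-to-$1$, which hypothesis i) excludes. So exactly three points of $A$ lie in a common fiber $P$, necessarily with $|P|\ge 3$, i.e.\ $P\in\mathcal{P}$; hence $A\in\Lambda_P$. Conversely, every $A\in\Lambda_P$ has $|F(A)|=2$ and is unbroken. Finally, $|F(A)|=1$ means $A\subseteq P$ for a fiber $P\in\mathcal{P}$, i.e.\ $A\in\Gamma_P$, and conversely. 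Since distinct fibers are disjoint, and $\Gamma_P\cap\Lambda_{P'}=\emptyset$ because the two classes have different image sizes, all the sets $\Gamma_P$ and $\Lambda_P$ ($P\in\mathcal{P}$) are pairwise disjoint. This yields the first equality.

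For the second equality, I must show that $\sum_P|\Gamma_P|$, the number of $2$-flats on which $F$ is constant, equals $\frac13\sum_{a}\binom{|P_a|}{2}$. By hypothesis i) and Proposition~\ref{prop:collapsing_flat}, $F$ is constant on $A$ iff $A=B\cup C$ with $B,C\in P_a$ distinct for some $a\neq 0$. Conversely, any two distinct pairs $\{x,x+a\},\{y,y+a\}\in P_a$ are disjoint (since $y\ne x,x+a$) and their union is the $2$-flat $x+\langle a,x+y\rangle$. So I consider the map sending an unordered pair $\{B,C\}$ of distinct elements of some $P_a$ to $B\cup C$; its image is exactly $\bigcup_P\Gamma_P$. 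The key check, which I expect to be the only delicate point, is that this map is exactly $3$-to-$1$. A $2$-flat $A=x+\langle a,b\rangle$ has exactly three partitions into two parallel pairs, with directions $a$, $b$ and $a+b$. When $F$ is constant on $A$, each of these partitions consists of pairs in the corresponding $P_{\cdot}$. The sets $P_a$ are disjoint for different $a$, so these three preimages are distinct. Dividing by $3$ gives the formula.

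For the particular case, assume $\Gamma_P=\emptyset$ for all $P$. Any $3$-subset of $P$ lies in exactly one block of $\Tf$ (the Steiner quadruple system property). The fourth point of that block is not in $P$, since otherwise the block would lie in $\Gamma_P$. Hence $3$-subsets of $P$ correspond bijectively to elements of $\Lambda_P$, so $|\Lambda_P|=\binom{|P|}{3}$ and $|\UB_{n,F}|=\sum_{P\in\mathcal P}\binom{|P|}{3}$.
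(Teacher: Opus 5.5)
Your proof is correct and follows essentially the same route as the paper. You classify each $2$-flat by the fiber pattern of $F|_A$, discard the transversal and $2+1+1$ patterns as broken and the $2+2$ pattern via the first hypothesis, identify the remaining constant and $3+1$ patterns with $\Gamma_P$ and $\Lambda_P$, count the constant flats through Proposition~\ref{prop:collapsing_flat} with each flat counted once per direction, and use the Steiner quadruple system property for the special case; your explicit checks that $\{B,C\}\mapsto B\cup C$ is exactly $3$-to-$1$ and that $3$-subsets of $P$ correspond bijectively to $\Lambda_P$ when $\Gamma_P=\emptyset$ supply details the paper only asserts.
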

 \begin{proof}
Let $A=\{x_1,x_2,x_3,x_4\}$ be a $2$-dimensional flat. If $A$ is $F$-transversal, i.e., $F(x_1)=a$, $F(x_2)=b$, $F(x_3)=c$, and $F(x_4)=d$, where $a,b,c,d\in\Ftn$ are pairwise distinct, then $\sum_{i=1}^4 F(x_i)\neq 0$ by hypothesis. Hence, $A$ is broken. If $F(x_1)=a$, $F(x_2)=b$, and $F(x_3)=F(x_4)=c$, where $a,b,c\in\Ftn$ are pairwise distinct, then the flat $A$ is broken since $|F(A)|=3$. Therefore, the only possible unbroken $2$-flats are such that $F$ is either constant or almost 3-to-one on them. This implies that any unbroken flat $A$ must either be contained in some $P$ in $\mathcal{P}$ or intersect a unique $P$ in $\mathcal{P}$ in three points. In other words, 
$ \UB_{n,F}= \bigcup_{P\in \mathcal{P}} (\Gamma_P\cup \Lambda_P).$ Since all of the terms in the unions are disjoint, we have $$|\UB_{n,F}|= \sum_{P\in \mathcal{P}} (|\Gamma_P|+|\Lambda_P|).$$ Moreover, if $F(x_1)=F(x_2)=F(x_3)=F(x_4)=c$ for some $c\in\Ftn$. Then $F(A)=\{c\}$ is a $0$-flat. By Proposition~\ref{prop:collapsing_flat}, it follows that
$A=\{x_1,x_1+a\}\cup\{x_2,x_2+a\},$ where $\{x_1,x_1+a\}, \{x_2,x_2+a\}\in P_a$, $x_1\neq x_2$. Thus, $\sum_{P\in \mathcal{P}} |\Gamma_P| = \frac{1}{3}\sum_{a\in\Ftn^*}\binom{|P_a|}{2}$ since every direction is counted three times.
The last part follows from the fact that if $\Gamma_p = \emptyset$ for each $P\in\mathcal{P}$ (equivalently, $|\mathcal{P}_a| \leq 1$ for each $a\in \Ftn^*$), then every three points in $P$ determine an unbroken flat, so that $|\UB_{n,F}|=\sum_{P\in \mathcal{P}}\binom{ |P|}{3}.$
\end{proof}

Observe that if $F$ satisfies the conditions of Theorem \ref{th:unbroken_general}, it is almost $k$-to-one and $F$ does not ``collapse'' any $2$-flat to a single point, then, the number of unbroken 2-flats is $$\UB_{n,F} = \frac{\binom{k}{3}(2^n-1)}{k}=\frac{(2^n-1)(k-1)(k-2)}{6}$$ since every fiber (except for one) satisfies $|F^{-1}[a]| = k$ , and there are $\frac{2^n-1}{k}$ such fibers. 

\begin{cor}\label{th:breakingAPN}
    Let $F\colon\F_{2^n}\to \F_{2^n}$ be an APN function. If $\mathcal{P}\not=\emptyset$, then the number of unbroken $2$-dimensional flats is $|\UB_{n,F}|=\sum_{P\in \mathcal{P}}\binom{ |P|}{3}.$ Otherwise,  $|\UB_{n,F}|=0$. Conversely, for any function $F\colon\F_{2^n}\to \F_{2^n}$, if either $|\UB_{n,F}|=0,$ when $\mathcal{P}=\emptyset,$ or, otherwise, $|\UB_{n,F}|=\sum_{P\in \mathcal{P}}\binom{ |P|}{3},$ and $|F(A)| > 1$ for each $A\in\Tf$, then $F$ is APN.
\end{cor}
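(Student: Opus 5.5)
The forward direction follows from Theorem~\ref{th:unbroken_general}, so the plan is to check its hypotheses for an APN function. Let $F$ be APN and $A=\{x,x+a,x+b,x+a+b\}\in\Tf$. Since $F$ is APN, $\sum_{y\in A}F(y)\neq 0$ for every $A\in\Tf$, so hypothesis ii) holds, in fact without any transversality assumption. For i), suppose $F|_A$ were $2$-to-$1$. Then its two fibers in $A$ are complementary pairs of $A$, and each pair sums to a direction $c\in\{a,b,a+b\}$. Then $F(A)=\{u,v\}$ with each value taken twice, so $\sum_{y\in A}F(y)=2u+2v=0$, contradicting APNness. Thus i) holds.

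We next show $\Gamma_P=\emptyset$ for every $P\in\mathcal{P}$. A $2$-flat $A\subseteq P$ would give $F$ constant on $A$, hence $\sum_{y\in A}F(y)=4c=0$, again impossible. (Equivalently, $|P_a|\le 1$ for each $a$, because two pairs in $P_a$ would give $\de_F(a,0)\ge 4$.) The "in particular" clause of Theorem~\ref{th:unbroken_general} then gives $|\UB_{n,F}|=\sum_{P\in\mathcal{P}}\binom{|P|}{3}$. When $\mathcal{P}=\emptyset$ this sum is empty and equals $0$, which is the second case of the statement.

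For the converse, we argue by contradiction. Suppose $F$ is not APN, so some $A\in\Tf$ has $\sum_{y\in A}F(y)=0$. We classify $F|_A$ by the multiset of its values. Four equal values are excluded by the assumption $|F(A)|>1$. A value pattern of type $3+1$ has sum $u+u+u+v=u+v\neq0$, so it does not occur for a vanishing flat. The remaining patterns are the following.
\begin{itemize}
\item[(a)] The pattern $2+2$ gives $F(A)=\{u,v\}$, a $1$-flat. So $A$ is unbroken and $F|_A$ is $2$-to-$1$.
\item[(b)] The pattern $2+1+1$ has sum $v+w\neq0$, so it is impossible.
\item[(c)] $A$ is transversal with vanishing sum. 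Then $F(A)$ is four distinct points summing to $0$, i.e.\ a $2$-flat, so $A$ is unbroken.
\end{itemize}
In both surviving cases $A$ is an unbroken flat that does not meet any $P\in\mathcal{P}$ in three or more points.

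It remains to count. Every $3$-subset $T$ of some $P\in\mathcal{P}$ lies in a unique block $A_T\in\Tf$. Since $|F(A_T)|>1$, the fourth point has a different value, so $F(A_T)$ is a $1$-flat and $A_T$ is unbroken. Distinct $T$ give distinct blocks: a block cannot contain two different $3$-subsets of the same or of different fibers without being constant on three points of two fibers, which is impossible. This yields $\sum_P\binom{|P|}{3}$ unbroken flats, none of type (a) or (c). Hence $|\UB_{n,F}|>\sum_P\binom{|P|}{3}$, and likewise $|\UB_{n,F}|>0$ when $\mathcal{P}=\emptyset$. This contradicts the hypothesis, so $F$ is APN.

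The main obstacle is this last counting step, namely checking carefully that the map $T\mapsto A_T$ is injective and that its image avoids the flats of types (a) and (c).
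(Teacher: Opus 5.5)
Your proof is correct and follows the paper's route. The forward direction applies Theorem~\ref{th:unbroken_general} with $\Gamma_P=\emptyset$. The converse uses the same count of $\sum_P\binom{|P|}{3}$ unbroken flats coming from $3$-subsets of fibers, together with the case analysis on value patterns, which you phrase as a contradiction rather than directly; you are also more explicit than the paper about injectivity of $T\mapsto A_T$, where the same-fiber case is exactly what the hypothesis $|F(A)|>1$ rules out, and the different-fiber case fails because six points cannot fit in a block.
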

\begin{proof} 
    Note that for a 2-flat $A=\{x_1,x_2,x_3,x_4\}$, it must hold $\sum_{i=1}^4F(x_i)\neq 0$, since $F$ is APN. So the restriction $F\vert_A$ of $F$ to $A$ is neither constant nor 2-to-1. The result follows at once from Theorem \ref{th:unbroken_general}. Conversely, suppose that the number of unbroken $2$-flats is as given in the statement. Suppose first that $\mathcal{P}=\emptyset$ and $|\UB_{n,F}|=0.$ Let $A=\{x_1,x_2,x_3,x_4\}$ be any $2$-flat. The only possible ``configurations'' for $2$-flats w.r.t $F$ are: $F(x_1)=F(x_2)=a$ and $F(x_3)=b$ and $F(x_4)=c$, for some distinct points $a,b,c\in \Ftn$ or $A$ is $F$-transversal. In the first case, $\sum_{x\in A} F(x) \not=0$ since it is the sum of two different points and in the second case 
     $\sum_{x\in A} F(x) \not=0$ since the image cannot be a flat.
    Suppose that $\mathcal{P}\not=\emptyset$, $|F(A)| > 1$, for each $A\in\Tf$, and  $|\UB_{n,F}|=\sum_{P\in \mathcal{P}}\binom{ |P|}{3}$. From here, we see that every element in $\mathcal{P}$ determines exactly $\sum_{P\in \mathcal{P}}\binom{ |P|}{3}$ unbroken flats. Thus, all other flats must be broken. Let $A=\{x_1,x_2,x_3,x_4\}$ be any $2$-flat. If $F|_A$ is $3$-to-$1$ then $\sum_{x\in A}F(x)$ is the sum of two distinct points, so it is non-null. Similarly, if $F(x_1)=F(x_2)=a$ and $F(x_3)=b$ and $F(x_4)=c$, $\sum_{x\in A} F(x) = b+c \not=0$. The function $F$ cannot be $2$-to-$1$ on $A$ since $A$ would be an unbroken flat. Finally, if $A$ is $F$-transversal, then $\sum_{x\in A} F(x) \not = 0$ as $F(A)$ cannot be a flat.
    \end{proof}

\begin{rem}\label{th:breaking3-to-1} 
    Let $n$ be odd and let $F\colon\F_{2^n}\to \F_{2^n}$ be an almost 3-to-1 function.\footnote{Recall that there are no APN functions which are almost $k$-to-1, for $k\geq 4$, and, there are no APN functions which are $k$-to-1 for $k>2$. Indeed, it is well-known \cite{GoharDCC2023} that the image set of $F$ cannot be small, i.e., $|\mathrm{Im}(F)|\geq \frac{2^n+1}{3},$ if $n$ is odd, and $|\mathrm{Im}(F)|\geq \frac{2^n+2}{3},$ if $n$ is even.
}
From Corollary \ref{th:breakingAPN}, it follows immediately that $F$ is APN if and only if the number of unbroken $2$-dimensional flats is $|\UB_{n,F}|=\frac{(2^n-1)}{3}.$ Similarly, when $F$ is a function whose all preimages have size at most $2$, then $F$ is APN if and only if $|\UB_{n,F}|=0.$
\end{rem}

\begin{ex}
   Let $F: \mathbb{F}_{2^6} \to \mathbb{F}_{2^6}$ be the Gold function defined by $F(x) = x^3$. Since $F$ is almost $3$-to-$1$, Corollary~\ref{th:breakingAPN} implies that $|\UB_{6,F}| = 21$; in other words, exactly $21$ two-dimensional flats are unbroken with respect to $F$. The set $\UB_{6,F}$ includes only subspaces and is given by
    \begin{align*}
        \{
 &\{38, 0, 24, 62\},\{21, 46, 0, 59\},\{29, 0, 16, 13\}, \{40, 0, 49, 25\},\{51, 0, 54, 5\},\{27, 47, 0, 52\},
 \{19, 0, 31, 12\},\\
&
 \{55, 0, 61, 10\},
 \{35, 8, 0, 43\}, \{23, 0, 39, 48\},
 \{22, 41, 63, 0\},
 \{50, 57, 0, 11\}, \{14, 15, 0, 1\},
 \{30, 0, 2, 28\},
 \\
 &
 \{3, 0, 18, 17\}, \{37, 0, 9, 44\},\{56, 60, 0, 4\},\{0, 33, 53, 20\} \{0, 32, 58, 26\},\{7, 0, 42, 45\},\{36, 0, 6, 34\}
\},
    \end{align*}
   where the field element $\alpha_0+\alpha_1\omega+\alpha_2\omega^2+\alpha_3\omega^3+ \alpha_4\omega^4+ \alpha_5\omega^5$ corresponds to the integer $\sum_{i=0}^{5}\alpha_{i} 2^i$ for a root $\omega$ of the binary primitive polynomial $x^6+x^4+x^3+x+1$.
\end{ex}

Note that Corollary \ref{th:breakingAPN} says that the only unbroken $2$-dimensional flats of an APN function map to $1$-dimensional flats. In general, for an APN function, a stronger property is true for any $k>2$. Albeit the following result is well-known, to the best of our knowledge, there is not an explicit proof of it. Its simple proof is derived for the sake of completeness.

\begin{prop} \label{prop:Sadmir_prop} Let $F:\Ftn\to \Ftn$ be an APN function. If $A\in \mathcal{UB}_{n,k,F}$ and $F(A)$ is an $l$-flat, then $l=k$ or $k=2$ and $l=1$. In particular, every APN function is $4$-breaking.
\end{prop}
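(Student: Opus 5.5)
Let $A$ be a $k$-flat, $k\ge 2$, with $F(A)=B$ an $l$-flat. Since $|F(A)|\le |A|$, we have $l\le k$. The aim is to show that if $l<k$, then necessarily $k=2$ and $l=1$. The tool is the counting of fibers of $F|_A$ together with the APN property. By Corollary~\ref{th:breakingAPN}, or simply because the sum of $F$ over any $2$-flat is nonzero, $F$ never collapses a $2$-flat to a point. Also, for each nonzero $a$, the equation $F(x)+F(x+a)=0$ has at most two solutions. Hence the pairs $\{x,x+a\}\subseteq A$ with $F(x)=F(x+a)$ number at most one for each direction $a$, i.e.\ $|P_a|\le 1$.

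\textbf{Step 1: collisions are few.} Count the ordered pairs $(x,y)\in A^2$ with $x\neq y$ and $F(x)=F(y)$. Each such pair has a direction $a=x+y$ lying in the direction space $V$ of $A$, and for each $a\in V\setminus\{0\}$ there are at most $2$ ordered pairs. So the number of collision pairs is at most $2(2^k-1)$.

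On the other hand, $F|_A$ maps $2^k$ points onto $2^l$ values. By convexity, the number of ordered collision pairs $\sum_{u\in B}m_u(m_u-1)$, where $m_u=|A\cap F^{-1}[u]|$, is at least
\[
2^l\cdot 2^{k-l}(2^{k-l}-1)=2^k(2^{k-l}-1).
\]

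\textbf{Step 2: compare the bounds.} We obtain $2^k(2^{k-l}-1)\le 2(2^k-1)<2^{k+1}$, so $2^{k-l}-1<2$. Since $l<k$, this forces $k-l=1$. Thus $F(A)$ is a $(k-1)$-flat, reached by $2^k$ points over $2^{k-1}$ values, and the number of collision pairs is at least $2^k$.

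\textbf{Step 3: rule out $l=k-1$ for $k\ge3$ (main obstacle).} We need more than counting here, since $2^k\le 2(2^k-1)$ is not contradictory. Use the APN property on $2$-flats inside $A$, together with affineness of $B$.

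First treat the case where some fiber has size $m_u\ge3$. Take three points $x_1,x_2,x_3$ of that fiber. The $2$-flat $\{x_1,x_2,x_3,x_1+x_2+x_3\}\subseteq A$ has $F$-sum equal to $F(x_1+x_2+x_3)+u$, which must be nonzero. This configuration is allowed, so it gives no contradiction by itself.

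Instead, consider the sum of $F$ over $A$. On one hand, it equals $\sum_{u\in B} m_u u$. On the other hand, split $A$ into $2$-flats along two directions. A cleaner route is the following. In the $l=k-1$ case with all $m_u=2$, there are exactly $2^{k-1}$ collision pairs, one per direction for $2^{k-1}$ directions; since $2^{k-1}>2^k-1$ is false, this is consistent. However, two collision pairs $\{x,x+a\}$ and $\{y,y+b\}$ with $a\neq b$ generate a $3$-flat. Take the $2$-flat $\{x,x+a,y,y+a\}$: its sum is $F(y)+F(y+a)\ne0$. Choosing instead the $2$-flat $\{x,y,x+a+b', \dots\}$ through pairs, with images landing in the flat $B$, should produce a vanishing $2$-flat. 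The argument I would try first is this: pick $u,v,w\in B$ and preimages, and use that $u+v+w\in B$ has a preimage $z$. Then the quadruple of preimages has vanishing $F$-sum, and its points sum to an element of $V$. Adjusting $z$ within its fiber (size $\ge2$) so that $x+y+t+z=0$ forms a $2$-flat contradicting APN.

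Making this choice work, i.e.\ showing that the fiber of $u+v+w$ contains $x+y+t$ for suitable choices, is the step I expect to be hardest. I would attack it by counting over all triples, comparing $2^{3(k-1)}$ triples of values with the available $2$-flats. This yields $l=k$ or $(k,l)=(2,1)$.

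For the final claim, if $k=4$ and $l=4$, then $F|_A$ is a bijection onto a $4$-flat. Carlet's result that APN functions cannot be affine-equivalent on $\F_2^4$ to permutations applies here: there are no APN permutations of $\F_{2^4}$ \cite{CaSaVi17}, and $F|_A$ composed with affine identifications is an APN permutation of $\F_2^4$. This is a contradiction, so every APN function is $4$-breaking.
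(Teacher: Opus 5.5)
\textbf{There is a genuine gap: Step 3 (the case $l=k-1$ with $k\ge3$) is not proved, and it carries the whole content of the statement.}

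Your Steps 1 and 2 are correct. They are the easy half of the argument: they give $\delta\ge 2^{k-l}$ for the restriction of $F$ to $A$, hence $l\ge k-1$.

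Step 3 stays at the level of ``should produce a vanishing $2$-flat'' and ``I would attack it by counting''. Its ingredients are also unjustified:
\begin{itemize}
\item nothing you have shown forces the fiber of $u+v+w$ to have size at least $2$;
\item nothing forces $x+y+t+z=0$ after adjusting $z$.
\end{itemize}
Your remark that the case where all $m_u=2$ is ``consistent'' is also wrong: that case is impossible, as shown below. As a result, the final $4$-breaking claim is incomplete too. For $k=4$ you must first exclude $l=3$, and only then invoke the non-existence of APN permutations of $\F_{2^4}$; that last step is fine and matches the paper.

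\textbf{The missing idea is to look at derivatives of the translated restriction, as the paper does.}
\begin{itemize}
\item Write $A=a+V$ and set $G(x)=F(a+x)+F(a)$. Then $G$ maps $V$ onto the $(k-1)$-dimensional subspace $W=F(A)+F(a)$, and $G$ has nonzero sum on every $2$-flat of $V$.
\item For $c\in V\setminus\{0\}$, $D_cG$ sends the $2^k$ points of $V$ into $W$, which has $2^{k-1}$ elements, with every fiber of size at most $2$. Hence every fiber has size exactly $2$.
\item In particular there is exactly one collision pair per direction, so $\sum_u m_u(m_u-1)=2(2^k-1)$ exactly. This is why not all $m_u$ can equal $2$.
\item So $G$ is a perfect nonlinear (vectorial bent) $(k,k-1)$-function. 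Nyberg's bound $m\le n/2$ for bent $(n,m)$-functions then gives $k-1\le k/2$, i.e.\ $k\le 2$. This is exactly the Nyberg result the paper cites for APN $(k,l)$-maps.
\end{itemize}

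You can also see the bound directly. Let $v$ be a nonzero linear functional on $W$ and $g_v=v\circ G$. The derivative $D_cg_v=v\circ D_cG$ is balanced, so $g_v$ is bent on $\F_2^k$. This forces $k$ even and $W_{g_v}(0)=\pm 2^{k/2}$. Summing over all linear functionals on $W$,
\[
2^{k-1}\,|G^{-1}[0]| = 2^k+\sum_{v\neq 0}W_{g_v}(0) = 2^{k/2}\bigl(2^{k/2}+\epsilon\bigr).
\]
Here $\epsilon$ is a sum of $2^{k-1}-1$ signs, so it is odd. Since $2^{k/2}$ is even, $2^{k/2}+\epsilon$ is odd, so $2^{k-1}$ divides $2^{k/2}$. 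This gives $k\le 2$, which closes Step 3.
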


\begin{proof} Suppose that $A=a+V$, where $V$ is a $k$-dimensional subspace and $a\not\in V$. First assume that $0\in F(A)$. Define $G:V\to F(A)$ by $G(x) = F(a+x)$. Given distinct $x,y,z\in V$, we have that $\{a+x,a+y,a+z,a+x+y+z\} \in \Tf$, so it must be that $G(x)+G(y)+G(z)+G(x+y+z)=F(a+x)+F(a+y)+F(a+z)+F(a+x+y+z)\not=0.$ This means that $G$ is an APN function mapping from a $k$-dimensional space onto an $l$-dimensional space. Nyberg's results \cite{Nyberg94} imply that an APN $(k,l)$-mapping exists only for $k=l$ or $k=2, l=1$. Now, suppose that $0\not\in F(A)$. Define $G:V\to W$ by $G(x) = F(a+x)+F(a)$, where $W=F(A)+F(a)$ is an $l$-dimensional subspace since $F(a)\in F(A)$. Similarly as before, given distinct $x,y,z\in V$, we have that $\{a+x,a+y,a+z,a+x+y+z\} \in \Tf$, so it must be that $G(x)+G(y)+G(z)+G(x+y+z)=F(a+x)+F(a+y)+F(a+z)+F(a+x+y+z)\not=0.$ Again, $G$ is an APN function mapping from a $k$-dimensional space onto an $l$-dimensional space, therefore $k=l$ or $k=2, l=1$. The last statement follows from the non-existence of APN permutations over $\F_{2^4}.$
\end{proof}

Note that Proposition \ref{prop:Sadmir_prop} implies that a quadratic APN function is $k$-breaking for all even $k>2$ and any cubic APN function is $6$-breaking since the restriction of a function to a flat cannot increase their degree. Moreover, Proposition 3.5 in \cite{Enrico} implies that the nonlinearity $nl(F)$ is bounded above by $$ 2^{n-1} -  \max_{\substack{2\leq k < n, \\ F {\footnotesize \mbox{ is not }}k{\footnotesize \mbox{-breaking}}}} 2^{k-1},$$ thus if $\max_{a\not=0,b\in \Ftn} |W_F(a,b)| < 2^l$, for some $l$ with $2 \leq l <n$, then $F$ is $l$-breaking. Therefore, it is natural to expect an APN function to break all large dimensional flats. However, a general bound is not immediately clear. For instance, according to \cite{BraWoPre05,KolomeecDCC2024}, the multiplicative inverse breaks all flats except for cosets of subfields. Thus, for $n$ odd, it breaks all $k$-flats for $k\geq \lfloor \frac{n}{3} \rfloor +1$. Moreover, an AB function satisfies $\max_{a\not=0,b\in \Ftn} |W_F(a,b)| = 2^{\frac{n+1}{2}}$, thus it must be $l$-breaking for each $l > \frac{n+1}{2}$. Our experiments suggest that APN functions break all $k$-flats for $k\geq \lfloor \frac{n}{2} \rfloor +1$ (see also Table \ref{table:unbroken-vanishing-3}). We conjecture that this is indeed the case.

\begin{conj} Any APN function $F$ over $\Ftn$ is $k$-breaking for $ \lfloor \frac{n}{2} \rfloor+1 \leq k \leq n-1$.  
\end{conj}

\subsubsection{Lower bound on vectorial nonlinearity via breaking property}\label{sec:vectorial_nonlinearity}

In this section, we provide a lower bound on vectorial nonlinearity, introduced in \cite{LiuChenMesn2017}, by employing the cardinality of broken/unbroken 2-dimensional flats. More precisely, the vectorial nonlinearity of $F:\F_{2^n} \to \F_{2^m}$ is
\begin{equation}
\mathcal{NL}_V F =\min_{L \in \mathcal{A}_{n,m}} d_H(F,L), 
\end{equation}
where $\mathcal{A}_{n,m}$ denotes the set of affine functions from $\F_{2^n}$ to $\F_{2^m}$, and $d_H(F,L) = | \{ x \in \F_{2^n} : F(x) \neq L(x) \}|$. 
The main idea is to simply use the estimate on the number  of broken/unbroken 2-dimensional flats, which then gives the lower bound.
\begin{theo}\label{th:vectnonlinearity}
Let $F:\Ftn \rightarrow \Ftn$ be any polynomial with $|\UB_{n,F}| = w$. Then, $\mathcal{NL}_V F \geq 2^{n-2}-w.$
In particular, if $F$ is $2$-breaking, then $\mathcal{NL}_V F \geq 2^{n-2}$. 
\end{theo}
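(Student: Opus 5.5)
Fix an affine $L\in\mathcal{A}_{n,n}$ and let $E=\{x\in\Ftn : F(x)=L(x)\}$, so that $d_H(F,L)=2^n-|E|$. The plan is to show that $E$ cannot be large unless there are many unbroken $2$-flats. Concretely, we aim for an upper bound of the form $|E|\le 3\cdot 2^{n-2}+w$, which gives $d_H(F,L)\ge 2^{n-2}-w$. Minimizing over $L$ then yields the statement, and the case $w=0$ is the $2$-breaking case.

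\textbf{Key observation.} If a $2$-flat $A$ satisfies $A\subseteq E$, then $F(A)=L(A)$. Since $L$ is affine, $L(A)$ is an affine subspace: a $2$-flat, a $1$-flat or a point, depending on how the kernel of the linear part of $L$ meets the direction of $A$. Hence $A\in\UB_{n,F}$. It follows that the number of $2$-flats contained in $E$ is at most $w$.

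\textbf{Counting step.} We use the partition of $\Ftn$ into the $4$ cosets of a fixed $(n-2)$-dimensional subspace $U$. Their complementary structure lets us pick disjoint $2$-flats: for any $2$-dimensional complement $T$ of $U$, the sets $u+T$ with $u\in U$ are $2^{n-2}$ pairwise disjoint $2$-flats partitioning $\Ftn$. Each flat $u+T$ that is not contained in $E$ contains at least one point outside $E$. Since at most $w$ of these flats lie inside $E$, at least $2^{n-2}-w$ of them contain a point of the complement of $E$. As the flats are disjoint, these points are distinct, so
\[
d_H(F,L)=|\Ftn\setminus E|\ \ge\ 2^{n-2}-w .
\]
This is exactly the bound we need, so the coset partition only serves to produce the disjoint family $u+T$, $u\in U$.

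\textbf{Expected obstacle.} The only delicate point is the key observation: we must make sure the definition of an unbroken flat allows degenerate images (a point or a $1$-flat). The paper treats $0$-flats and $1$-flats as flats, as in Proposition~\ref{prop:collapsing_flat} and Corollary~\ref{th:breakingAPN}, so this holds. No averaging over partitions is needed, since a single partition of $\Ftn$ into $2$-flats already gives the bound.
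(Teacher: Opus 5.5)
Your proof is correct and follows essentially the same route as the paper: partition $\Ftn$ into $2^{n-2}$ disjoint $2$-flats, and note that $F$ cannot coincide with an affine $L$ on a broken flat (otherwise $F(A)=L(A)$ would be a flat), so at least $2^{n-2}-w$ parts each contribute a disagreement. Your argument via the agreement set $E$ is just the contrapositive of the paper's step ``$A_i$ broken implies $F|_{A_i}$ is not affine''.
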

\begin{proof}
	Represent $\Ftn$ as a union of 2-dimensional flats, thus $\Ftn=\cup_{i=1}^{2^{n-2}}A_i$, where $A_i$ are disjoint 2-dimensional  flats. Then, for any $L \in \mathcal{A}_{n,n}$, its restriction $L|_{A_i}$ to $A_i$ is also affine. Therefore 
	$$\mathcal{NL}_V F  =\min_{L  \in \mathcal{A}_{n,n}} \left(\sum_{i=1}^{2^{n-2}} d_H(F|_{A_i},L|_{A_i})\right).$$
    If $A_i$ is broken, then $F|_{A_i}$ is not affine. Hence, $d_H(F|_{A_i},L|_{A_i})\geq 1$. This readily implies that  $\min_{L  \in \mathcal{A}_{n,n}} \left(\sum_{i=1}^{2^{n-2}} d_H(F|_{A_i},L|_{A_i})\right) \geq 2^{n-2}-w.$    In particular, when $|\UB_{n,F}| = 0$ then $\mathcal{NL}_V F \geq 2^{n-2}$.
\end{proof}

\begin{ex} Let $F:\Ftn\to \Ftn$ be any APN permutation. By Corollary \ref{th:breakingAPN}, $F$ breaks all 2-dimensional flats. This implies that $\mathcal{NL}_V F \geq 2^{n-2}$ by Theorem \ref{th:vectnonlinearity}.
\end{ex} 

\begin{rem} The lower bound obtained in Theorem \ref{th:vectnonlinearity} is rather loose for most cases, though, sometimes, it can lead to better estimates than previously known bounds. The lower bound in Theorem 4.5 of \cite{LiuChenMesn2017} states that $\mathcal{NL}_V F \geq 2^n - \max\{ d(F), 2^{n-1}\}$, where $d(F)$ is the polynomial degree of $F$. When $d(F)$ is strictly larger than $2^n-2^{n-2} +w$, our bound yields an improvement. For instance, $F(x)=x^{2^n-2}$ satisfies $2^n - \max\{ d(F), 2^{n-1}\} = 2<2^{n-2}$.
\end{rem}

\begin{rem} The problem of computing vectorial nonlinearity is considered difficult (see \cite{LiuChenMesn2017, Nagy2025}.  Carlet, in  a private communication with Nagy \cite{CarletVectBound}, suggested the lower bound on vectorial nonlinearity $\mathcal{NL}_V F \geq 2^{n} - \sqrt{2^n + \delta_F(2^{n}-1)}$. This bound was recently improved by Nagy using the fact that the fibers of any vectorial map $F:\Ftn\to \mathbb{F}_{2^m}$ are \emph{$\delta_F$-thin sets}\footnote{ A $t$-thin set $T$ is a subset of $\Ftn$ such that $|T\cap (T+a)|\leq t$ for each $a\in\Ftn$.}, yielding $\mathcal{NL}_V F \geq 2^{n} - \sqrt{\delta_F}2^{n/2} -1/2.$ 
\end{rem}

The bound in Theorem \ref{th:vectnonlinearity} can possibly be improved, for APN functions, by further employing the structure of 2-flat decomposition. Especially, for quadratic APN functions. 

\begin{op} Provide a tight lower bound of $\mathcal{NL}_V F$, where $F$ is a quadratic APN function.
\end{op}

\subsubsection{Unbroken flats of functions with bounded preimages}\label{sec:preimages}

For any function $F\colon \F_{2^n}\to \F_{2^n}$ and a $k$-dimensional flat $A$ of $\F_{2^n}$, we define the numbers $\lambda_{A,i}$, $0\leq i\leq 2^k$ to be the number of preimages of $F$ that intersect $A$ in $i$ elements, i.e., \begin{equation}\label{eq:lambdas_def}
\lambda_{A,i} := |\{ b\in \F_{2^n} :  |F^{-1}[b]\cap A| = i \}|.
\end{equation}
Suppose the image of $A$ is again a flat, say, $l$-dimensional. Then,  
\begin{equation}\label{eq:lambdas}
	\sum_{i=1}^{2^k} i\lambda_{A,i} = 2^k\ \mbox{ and }\ \sum_{i=1}^{2^k} \lambda_{A,i} = 2^l.
\end{equation} 
These two equations constrain the possible values that the $\lambda$'s can take. For instance, the first equation already gives that there must be an even number of odd indices $i$ such that $\lambda_{A,i} \not=0$. Indeed, since $ \sum_{i=1}^{2^k} i\lambda_{A,i} \equiv 0 \pmod{2}$, then $\lambda_{A,1}+\lambda_{A,3}+\cdots+\lambda_{A,2^k-1} \equiv 0$, so it must be that there is an even number of odd-indexed $\lambda_{A,i}$ with odd value.

\begin{lemma}\label{lemma_lambdas}Let $F\colon \F_{2^n}\to \F_{2^n}$ be a function such that all its preimage sets have size at most $b$. If $A$ is a $k$-dimensional flat with $k>1$, whose image is an $l$-dimensional flat, $0\leq l\leq k$, then $b\geq 2^{k-l}.$ In particular, if $b \leq 2^e-1$, $e\in \mathbb{N}$, then $l\geq k-e+1$.
\end{lemma}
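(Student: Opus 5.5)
The plan is a pigeonhole argument based on the two identities in \eqref{eq:lambdas}. Suppose $A$ is a $k$-flat whose image $F(A)$ is an $l$-flat. Every element of $A$ lies in exactly one fiber $F^{-1}[u]$ with $u\in F(A)$, so $A$ is partitioned into the $2^l$ nonempty sets $F^{-1}[u]\cap A$, $u\in F(A)$. This is exactly the content of $\sum_i i\lambda_{A,i}=2^k$ and $\sum_i\lambda_{A,i}=2^l$.

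Each part satisfies $|F^{-1}[u]\cap A|\le |F^{-1}[u]|\le b$, so $\lambda_{A,i}=0$ for $i>b$. Then
$$2^k=\sum_{i=1}^{b} i\lambda_{A,i}\le b\sum_{i=1}^{b}\lambda_{A,i}=b\,2^l,$$
which gives $b\ge 2^{k-l}$. The constraint $0\le l\le k$ is automatic, since $|F(A)|\le |A|$. The hypothesis $k>1$ plays no role beyond matching the paper's convention.

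For the second claim, assume $b\le 2^e-1$. Then $2^{k-l}\le b\le 2^e-1<2^e$, so $k-l<e$. Since these are integers, $k-l\le e-1$, that is, $l\ge k-e+1$.

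I expect no real obstacle. The only point needing care is that the fibers of $F$ restricted to $A$ are bounded by the global fiber bound $b$, which is immediate because intersecting a fiber with $A$ cannot enlarge it. The strict inequality $2^e-1<2^e$ is what turns the bound into $k-l\le e-1$ and not merely $k-l\le e$.
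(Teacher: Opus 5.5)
Your argument is correct and is essentially the paper's proof. The paper eliminates $\lambda_{A,b}$ to obtain $\sum_{i=1}^{b-1}(b-i)\lambda_{A,i}=b\,2^l-2^k\ge 0$, which is just your inequality $\sum_i i\lambda_{A,i}\le b\sum_i\lambda_{A,i}$ written out; the deduction of $l\ge k-e+1$ from $2^{k-l}\le b<2^e$ is the same in both.
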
 
\begin{proof}
From  \eqref{eq:lambdas}, we have $\sum_{i=1}^{b} i\lambda_{A,i} = 2^k$ and $\sum_{i=1}^{b} \lambda_{A,i} = 2^l$. Expressing both equations in terms of $\lambda_{A,b}$ and equalizing,
$$\lambda_{A,b} = \frac{1}{b}(2^k-\sum_{i=1}^{b-1} i\lambda_{A,i}) = 2^l-\sum_{i=1}^{b-1} \lambda_{A,i}.$$ This readily implies $\sum_{i=1}^{b-1} (b-i)\lambda_{A,i} = b\cdot 2^{l} - 2^k$. Since $\lambda_{A,i}\geq0$ for each $i\in \{1,\ldots,b-1\}$, it follows that $2^{k-l}\leq b.$ Finally, if $b\leq 2^e-1$ for some positive integer $e$, then, by the above, $2^{k-l} \leq b \leq 2^{e}-1<2^e$. Thus, $2^{k-l+1}\leq 2^e$, which implies $k-e+1\leq l$. 
\end{proof}

Lemma \ref{lemma_lambdas} implies that for $b=3$, the only possible unbroken $k$-dimensional flats map to a $k$-dimensional flat or to a $(k-1)$-dimensional flat. In this case, we can bound the possible values for each $\lambda_{A,i}$ as follows.

\begin{prop}\label{prop:boundslambda} Let $F\colon \F_{2^n}\to \F_{2^n}$ be a function such that all its preimage sets have size at most 3. Let $A$ be a $k$-dimensional flat with $k>1$, whose image is a $(k-1)$-dimensional flat. Then,
   $$\lambda_{A,3}=\lambda_{A,1} \mbox{ and } \lambda_{A,2} = 2^{k-1}-2\lambda_{A,1}.$$ Moreover, $0\leq\lambda_{A_1} = \lambda_{A,3} \leq 2^{k-2}, 0\leq\lambda_{A,2}\leq 2^{k-1}$.
\end{prop}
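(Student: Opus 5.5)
We will work directly with the two counting identities \eqref{eq:lambdas}, specialised to preimage sizes at most $3$. Since every fiber of $F$ has at most three elements, $\lambda_{A,i}=0$ for all $i\geq 4$. With $l=k-1$ the identities therefore reduce to the linear system
\begin{equation*}
\lambda_{A,1}+2\lambda_{A,2}+3\lambda_{A,3}=2^k,\qquad \lambda_{A,1}+\lambda_{A,2}+\lambda_{A,3}=2^{k-1}.
\end{equation*}
The plan is to eliminate $\lambda_{A,2}$ first. Multiplying the second equation by $2$ and subtracting it from the first gives $\lambda_{A,3}-\lambda_{A,1}=2^k-2^{k}=0$, so $\lambda_{A,3}=\lambda_{A,1}$. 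Substituting this back into the second equation gives $\lambda_{A,2}=2^{k-1}-2\lambda_{A,1}$. This is the same manipulation as in the proof of Lemma \ref{lemma_lambdas} with $b=3$: the identity $\sum_{i=1}^{b-1}(b-i)\lambda_{A,i}=b2^l-2^k$ becomes $2\lambda_{A,1}+\lambda_{A,2}=3\cdot 2^{k-1}-2^k=2^{k-1}$, and we could simply cite that step.

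For the bounds, every $\lambda_{A,i}$ is a nonnegative integer. From $\lambda_{A,2}=2^{k-1}-2\lambda_{A,1}\geq 0$ we get $\lambda_{A,1}\leq 2^{k-2}$, and hence $\lambda_{A,3}=\lambda_{A,1}\leq 2^{k-2}$. Since $\lambda_{A,1}\geq 0$, the same formula gives $\lambda_{A,2}\leq 2^{k-1}$, and $\lambda_{A,2}\geq 0$ holds trivially. Here $k>1$ guarantees that $2^{k-2}$ is an integer, so the bounds make sense.

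No step is a real obstacle. The only points that need care are justifying the truncation $\lambda_{A,i}=0$ for $i\geq4$, which follows from the hypothesis that all fibers have size at most $3$, and noting that the sums in \eqref{eq:lambdas} run over exactly those $b$ that meet $A$. The latter holds because $F(A)$ is the $(k-1)$-flat, whose $2^{k-1}$ points are exactly the values $b$ with $|F^{-1}[b]\cap A|\geq 1$.
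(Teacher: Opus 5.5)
Your proof is correct and matches the paper's: you truncate the two counting identities \eqref{eq:lambdas} at $i=3$ with $l=k-1$, solve for $\lambda_{A,2}$ and $\lambda_{A,3}$ in terms of $\lambda_{A,1}$, and read off the bounds from $\lambda_{A,1},\lambda_{A,2}\geq 0$. Your explicit elimination and the cross-check against Lemma \ref{lemma_lambdas} with $b=3$ simply spell out the step that the paper states tersely.
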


\begin{proof} Consider the equalities in \ref{eq:lambdas}) given by
\begin{equation}\label{eq:0lambdas30}
    \lambda_{A,1} + 2\lambda_{A,2} + 3\lambda_{A_3} = 2^k
\mbox{ and }
    \lambda_{A,1} + \lambda_{A,2} + \lambda_{A,3} = 2^{k-1}.
\end{equation}

Let us express the system of equations in terms of  $\lambda_{A,1}$. Thus,  
\begin{equation}\label{eq:system}
\lambda_{A,2} = 2^{k-1}- 2 \lambda_{A,1} \mbox{ and }\lambda_{A,3} = \lambda_{A,1}.
\end{equation}

It always holds $\lambda_{A,2}\leq 2^{k-1}$. Now, since $\lambda_{A,2}\geq0$, the first equation in \eqref{eq:system} implies $\lambda_{A,1} \leq 2^{k-2}$ and we always have $\lambda_{A,2}\leq 2^{k-1}$.
\end{proof}

\begin{theo}\label{theo:lambdas}Let $F\colon \F_{2^n}\to \F_{2^n}$ be a function such that all its preimage sets have size at most 3. Let $A$ be a $k$-dimensional flat whose image is an $l$-dimensional flat. 
\begin{enumerate}[i)]
    \item If $k=2$, then either $A$ is $F$-transversal, $\lambda_{A,1} = \lambda_{A_3}=1$ or $\lambda_{A_2}=2, \lambda_{A_1}=\lambda_{A_3}=0$.
    \item  For $k>2$, the numbers $\lambda_{A,1}, \lambda_{A,2}$ and $\lambda_{A,3}$ can have exactly one of the following patterns $(\lambda_{A,1},\lambda_{A,2},\lambda_{A,3})$ modulo 4: $(0,0,0)$,$(2,0,2)$, $(1,2,1)$, or $(-1,2,-1)$. More precisely, the only a priori unbroken flats can be be either $F$-transversal or have the patterns $(\lambda_{A,1},\lambda_{A,2},\lambda_{A,3})$ given by
$$ (0,2^{k-1}, 0), (1, 2^{k-1}-2,1), \ldots, (i,2^{k-1}-2i,i),\ldots, (2^{k-2},0,2^{k-2}).$$
\end{enumerate}
\end{theo}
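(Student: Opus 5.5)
The plan is to combine Lemma \ref{lemma_lambdas} with Proposition \ref{prop:boundslambda}. Since every fiber of $F$ has at most $3 = 2^2-1$ elements, Lemma \ref{lemma_lambdas} with $e=2$ gives $l\geq k-1$. So an unbroken $k$-flat $A$ has image of dimension either $l=k$ or $l=k-1$, and I will treat these two cases separately. If $l=k$, then $|F(A)|=|A|$, so $F|_A$ is injective and $A$ is $F$-transversal. If $l=k-1$, Proposition \ref{prop:boundslambda} applies directly and gives $\lambda_{A,3}=\lambda_{A,1}=:i$ and $\lambda_{A,2}=2^{k-1}-2i$ with $0\le i\le 2^{k-2}$.

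For part i), take $k=2$, so $l\in\{1,2\}$. The case $l=2$ is the transversal case. In the case $l=1$ the admissible range is $0\le i\le 1$. The value $i=0$ gives $(\lambda_{A,1},\lambda_{A,2},\lambda_{A,3})=(0,2,0)$, i.e. $F|_A$ is $2$-to-$1$. The value $i=1$ gives $(1,0,1)$, i.e. $F|_A$ takes one value three times and another value once. These are exactly the options listed in the statement.

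For part ii), take $k>2$. The list $(i,2^{k-1}-2i,i)$ for $i=0,\dots,2^{k-2}$ is precisely the parametrization already obtained, so only the residues modulo $4$ need checking. Since $k\ge 3$, we have $2^{k-1}\equiv 0 \pmod 4$, hence $\lambda_{A,2}\equiv -2i \pmod 4$. I will then run through $i \bmod 4$:
\begin{itemize}
\item $i\equiv 0$ gives the pattern $(0,0,0)$;
\item $i\equiv 1$ gives $\lambda_{A,2}\equiv -2\equiv 2$, so the pattern is $(1,2,1)$;
\item $i\equiv 2$ gives $\lambda_{A,2}\equiv -4\equiv 0$, so the pattern is $(2,0,2)$;
\item $i\equiv 3\equiv -1$ gives $\lambda_{A,2}\equiv 2$, so the pattern is $(-1,2,-1)$.
\end{itemize}
These four patterns are mutually exclusive, which yields the claimed "exactly one".

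No step here is a real obstacle, since everything reduces to Lemma \ref{lemma_lambdas} and Proposition \ref{prop:boundslambda}. The only delicate point is the bookkeeping in the case $l=k$. There one must note that $\sum_i\lambda_{A,i}=2^k$ together with $\sum_i i\lambda_{A,i}=2^k$ forces $\lambda_{A,1}=2^k$ and $\lambda_{A,2}=\lambda_{A,3}=0$, which is exactly transversality. Pointing this out explains why the transversal case is listed separately from the family $(i,2^{k-1}-2i,i)$.
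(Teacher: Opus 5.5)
Your proposal is correct and follows essentially the paper's route: use Lemma \ref{lemma_lambdas} to get $l\in\{k-1,k\}$, treat $l=k$ as the transversal case, and read off the non-transversal patterns $(i,2^{k-1}-2i,i)$ from Proposition \ref{prop:boundslambda}; the paper relies on Lemma \ref{lemma_lambdas} only implicitly, so your version is slightly more explicit there. For the mod-$4$ claim in ii), also state that a transversal flat has $(\lambda_{A,1},\lambda_{A,2},\lambda_{A,3})=(2^k,0,0)\equiv(0,0,0)\pmod 4$, as the paper does, since that case is not covered by your run through $i \bmod 4$.
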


\begin{proof} If $k=2$ and $A$ is not $F$-transversal, then by Proposition \ref{prop:boundslambda}, $\lambda_{A,1} = \lambda_{A,3} \in \{0,1\}$. If $\lambda_{A,1}=\lambda_{A,3} =0$ then $\lambda_{A,2} = 2$, whereas if $\lambda_{A_1}=\lambda_{A_3}=1,$ then $\lambda_{A,2} =0$. Now, suppose $k>2$.
If $A$ is $F$-transversal, then $\lambda_{A,1} = 2^k,  \lambda_{A,2}=\lambda_{A,3}=0$ yielding the pattern $(0,0,0)$ modulo 4. If $A$ is not $F$-transversal, Proposition \ref{prop:boundslambda} gives that $\lambda_{A,1}=\lambda_{A,3}$ and $\lambda_{A,2}= 2^{k-1}-2\lambda_{A,1}$. Therefore, $\lambda_{A,1} \equiv \lambda_{A,3} \pmod{4}\ \mbox{ and }\ \lambda_{A,2} \equiv 2\lambda_{A,1} \pmod{4},$ which fully determine the possible patterns modulo 4. Taking $\lambda_{A,1}$ as a free parameter and using Proposition \ref{prop:boundslambda} once more, we obtain the desired patterns for the preimage distributions.
\end{proof}

As an interesting consequence of Theorem \ref{theo:lambdas}, we can completely determine the possible types of the preimages of an unbroken flat for the case $k=3$.

\begin{cor} For a function $F\colon \F_{2^n}\to \F_{2^n}$ whose all preimage sets have size at most 3, the only possible unbroken $3$-dimensional flats satisfy $$(\lambda_{A,1}, \lambda_{A,2}, \lambda_{A,3})\in \{(8,0,0), (0,4,0), (2,0,2), (1,2,1) \}.$$
\end{cor}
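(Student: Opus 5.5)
We argue directly from Lemma \ref{lemma_lambdas} and Proposition \ref{prop:boundslambda}, which is just the case $k=3$ of Theorem \ref{theo:lambdas}~ii) written out. Let $A$ be an unbroken $3$-flat, so $F(A)$ is an $l$-flat with $0\le l\le 3$. Since every fiber has size at most $b=3\le 2^2-1$, Lemma \ref{lemma_lambdas} with $e=2$ gives $l\ge k-e+1=2$. Hence $l\in\{2,3\}$. Because all fibers have at most three elements, only $\lambda_{A,1},\lambda_{A,2},\lambda_{A,3}$ can be nonzero.

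\emph{Case $l=3$.} Here $|F(A)|=8=|A|$, so $F|_A$ is injective and $A$ is $F$-transversal. Equivalently, from \eqref{eq:lambdas} we have $\lambda_{A,1}+2\lambda_{A,2}+3\lambda_{A,3}=8$ and $\lambda_{A,1}+\lambda_{A,2}+\lambda_{A,3}=8$. Subtracting gives $\lambda_{A,2}+2\lambda_{A,3}=0$, so $\lambda_{A,2}=\lambda_{A,3}=0$. This yields the pattern $(8,0,0)$.

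\emph{Case $l=2=k-1$.} Proposition \ref{prop:boundslambda} gives $\lambda_{A,3}=\lambda_{A,1}$ and $\lambda_{A,2}=2^{2}-2\lambda_{A,1}=4-2\lambda_{A,1}$, with $0\le\lambda_{A,1}\le 2^{k-2}=2$. The three admissible values $\lambda_{A,1}=0,1,2$ give exactly $(0,4,0)$, $(1,2,1)$ and $(2,0,2)$.

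Combining the two cases shows that $(\lambda_{A,1},\lambda_{A,2},\lambda_{A,3})$ lies in the stated four-element set. There is no real obstacle here. The only point needing care is that Lemma \ref{lemma_lambdas} excludes images of dimension $0$ or $1$, so no further cases arise; the rest is enumeration of nonnegative integer solutions.
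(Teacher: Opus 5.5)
Your proof is correct and follows essentially the same route as the paper, which obtains this corollary as the $k=3$ instance of Theorem~\ref{theo:lambdas}~ii); that theorem in turn combines Lemma~\ref{lemma_lambdas} (forcing $l\in\{k-1,k\}$) with Proposition~\ref{prop:boundslambda}, exactly as you do. Your explicit handling of the case $l=3$, where you show it forces $F$-transversality and hence the pattern $(8,0,0)$, spells out a step the paper leaves implicit.
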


\begin{rem} By Lemma \ref{lemma_lambdas}, we see that if the cardinalities of preimages of a function $F$ are bounded above by $7$, then the only possible unbroken $k$-flats map to $l$-flats, for $l\in\{k-2,k-1,k\}$. Thus, one can derive similar results to Proposition \ref{prop:boundslambda} and Theorem \ref{theo:lambdas}. However, the systems of equations become more cumbersome compared to the case $b = 3$.
\end{rem}

\subsection{$k$-strongly nonnormal and $k$-th-order sum-free functions} \label{sec:strongly_nonnormal_and_sum_free}

In a recent article,  Carlet introduced \cite{CarletTwoNotions2024} two notions related to properties of $(n,m)$-\textit{functions} $F:\F_{2^n} \to \F_{2^m}$. Precisely, we have the following definitions. 

\begin{defi}
	Let $2 \leq  k \leq n$ and $m$ be positive integers. An $(n, m)$-function $F$
	is called \emph{$k$-strongly non-normal} (resp. \emph{$k$th-order sum-free}) if, for every  $k$-flat $A$ of $\F_2^n$ , the restriction of $F$ to $A$ is not an
	affine function (resp. the sum $\sum_{x\in A} F(x)$ is nonzero). 
\end{defi}
It is easy to see that $k$th-order sum-freedom implies $k$-strong non-normality since the sum of the
values taken by an affine function over an affine space of dimension at least 2 equals 0. Moreover,  $k$th-order sum-freedom is a strong property and $k$-strong non-normality is a much weaker one (contrary to what  its name evokes). All APN functions (which exist for every $n$) are $k$-strongly non-normal for every $2 \leq k \leq n$ since they are clearly $2$-strongly non-normal and $k$-strongly non-normality is monotonous, i.e., a $k$-strongly non-normal function is $l$-strongly non-normal for every $l\geq k$. In fact, an $(n, m)$-function $F$ is $k$th-order sum-free if, and only if, the restriction of $F$ to any $k$-dimensional affine space, viewed as a $k$-variable function through the choice of a basis of the underlying vector space, has algebraic degree $k$. Therefore, a polynomial $F$ of algebraic degree $d$ is not $l$-th-order sum-free for any $l\geq d+1$ (even more, the summation over all $l$-flats is null). We refer the reader to \cite{CarletTwoNotions2024} for more details on these two properties.

If there exists a $k$-flat $A$ such that the restriction of $F$ to $A$ is affine, then the image of $A$ must be a flat. Thus we have the following proposition.

\begin{prop}\label{prop:break-snn} If $F\colon \F_{2^n}\to \F_{2^n}$ is $k$-breaking, then $F$ is $k$-strongly non-normal.
\end{prop}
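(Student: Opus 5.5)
We argue by contraposition: if $F$ is not $k$-strongly non-normal, we will show that $F$ is not $k$-breaking. So suppose there is a $k$-flat $A=a+V$, with $V$ a $k$-dimensional linear subspace of $\F_{2^n}$, such that $F|_A$ is affine. Concretely, there are an $\F_2$-linear map $L\colon V\to\F_{2^n}$ and a constant $c\in\F_{2^n}$ with $F(a+x)=L(x)+c$ for all $x\in V$. The goal is then to exhibit $F(A)$ as a flat, which means $A$ is unbroken.

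The key step is to compute the image directly:
$$F(A)=\{L(x)+c : x\in V\}=c+L(V).$$
Since $L$ is linear, $L(V)$ is a linear subspace of $\F_{2^n}$, say of dimension $l=k-\dim\ker L$, with $0\le l\le k$. Hence $F(A)$ is a translate of a subspace, i.e., an $l$-flat. By Definition~\ref{def:breaking}, $F$ does not break $A$, so $F$ is not $k$-breaking. This proves the contrapositive.

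The only point that needs care is that the definition of breaking allows flats of any dimension. The image $F(A)$ may have dimension $l<k$, including the degenerate case $l=0$ where $F$ is constant on $A$, and it still counts as a flat. So no condition on $\ker L$ is required. I would also note that the argument never uses $k\ge 2$, although that is the only range in which the notions are of interest.
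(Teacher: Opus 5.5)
Your proof is correct and takes essentially the same approach as the paper: both show that if $F|_A$ is affine, then $F(A)$ is a translate of the linear image of the direction space of $A$, hence a flat, so $A$ is unbroken. The differences are cosmetic. You phrase it as a contrapositive rather than a contradiction, and you use a linear map on $V$ rather than one on all of $\F_{2^n}$. Your remark that lower-dimensional images, including singletons, still count as flats matches the paper's conventions.
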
 
\begin{proof}
    Let $F:\mathbb{F}_{2^n}\to \mathbb{F}_{2^n}$ be a function that breaks all $k$-flats, and suppose that $F$ is not $k$-strongly non-normal. Then, there exists a $k$-flat $A\subseteq \mathbb{F}_{2^n}$ for which the restriction $F|_A$ is affine, i.e., $F(x) = l(x) + b,$ for all $x \in A,$    where $l:\F_{2^n}\to \F_{2^n}$ is a linear function and $b\in \mathbb{F}_{2^n}$.  
    Since $A$ is an affine subspace, we may write it in the form $A = U + a$, where $a \in \mathbb{F}_{2^n}$ and $U$ is a $k$-dimensional linear subspace of $\mathbb{F}_{2^n}$. Then,
    \begin{align*}
        F(A) = \{\, F(u+a) : u \in U \,\} = \{\, l(u+a) + b : u \in U \,\} = \{\, v + c : v \in l(U) \,\} = V + c,
    \end{align*}
    where $V = l(U)$ is a linear subspace of $\mathbb{F}_{2^n}$ and $c = l(a) + b\in\F_{2^n}$.  
    Thus $F(A)$ is an affine subspace of $\mathbb{F}_{2^n}$, contradicting the assumption that $F$ breaks all $k$-flats.
\end{proof}

The converse of Proposition \ref{prop:break-snn} is in general false, as illustrated in the following remark.

\begin{rem}\label{rem:snn not breaking}
Since any APN function is $k$-strongly non-normal, for $k\geq 2$, the multiplicative inverse function $F:\Ftn \to \Ftn$ given by $F(x)=x^{2^n-2}$, for $n$ odd, is $k$-strongly non-normal, for $k\geq 2$. However, it is not $k$-breaking for any divisor $k$ of $n$ since the subfields $\mathbb{F}_{2^k}$ are invariant under $F$, thus unbroken.  Another way to see this is by noting that the property of a function being $k$-strongly non-normal is a CCZ-invariant (Proposition 6 in \cite{CarletTwoNotions2024}), and the breaking property is not EA-invariant (see Remark \ref{rem:breakingCZZ}). 
\end{rem}

The sum-freedom property and the breaking property are two different notions generally. Any APN with at least one preimage of size $3$ has some unbroken $2$-flats by Corollary \ref{th:breakingAPN}. On the other hand, APN functions are always $2$nd-order sum-free. Conversely, the following example exhibits a function which is $3$-breaking, but not $3$rd-order sum-free.

\begin{ex}
    The function $D_2$ from Table \ref{table:unbroken-vanishing-3} breaks all 3-flats, including
    $$ A= \{ 1+\omega+\omega^4+\omega^5, \omega^2, 1+\omega+\omega^2+\omega^3+\omega^5, \omega+\omega^4, 1+\omega^2+\omega^5, 1+\omega^3+\omega^4+\omega^5,\omega+\omega^2+\omega^3,\omega^3+\omega^4\},$$
    where $\omega$ is a generator of $\mathbb{F}_{2^6}^*$ satisfying $\omega^6+\omega^4+\omega^3+\omega+1=0$, for which
    \begin{align*}
    F(A) = \{ & \omega^3+\omega^4+\omega^5,  1+\omega+\omega^2+\omega^4+\omega^5, \omega^4,  1+\omega+\omega^2+\omega^3+\omega^4+\omega^5, \\  &1+\omega+\omega^2, \omega^2+\omega^4, \omega+\omega^2+\omega^3+\omega^4+\omega^5,1+\omega^2+\omega^3\}
    \end{align*}
    is not a flat.  However $\sum_{x\in A}F(x) =0$, so it is not $3$rd-order sum-free.
\end{ex}

For an APN function, a more precise relationship can be derived.

\begin{prop}\label{prop:sumfree_breaking} Let $F:\Ftn\to \Ftn$ be an APN function and $k>2$. If $F$ is $k$-th-order sum-free, then $F$ is $k$-breaking.
\end{prop}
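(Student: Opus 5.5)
We argue by contraposition. Suppose $F$ is APN, $k>2$, and $F$ is not $k$-breaking. Then there is a $k$-flat $A$ such that $F(A)$ is an $l$-flat for some $l$. We will show that $\sum_{x\in A}F(x)=0$, so $F$ is not $k$th-order sum-free.

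The first step is to fix the dimension of $F(A)$. Since $k>2$, Proposition~\ref{prop:Sadmir_prop} applies to the unbroken flat $A$ of the APN function $F$ and gives $l=k$. Hence $|F(A)|=2^k=|A|$. Because $A$ is finite, $F|_A$ is then a bijection from $A$ onto the $k$-flat $F(A)$. In particular, $A$ is $F$-transversal.

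The second step computes the sum. By bijectivity,
$$\sum_{x\in A}F(x)=\sum_{y\in F(A)} y .$$
Write $F(A)=c+W$ with $W$ a $k$-dimensional subspace. Then
$$\sum_{y\in F(A)}y = 2^k c+\sum_{w\in W}w .$$
The term $2^k c$ vanishes in characteristic $2$. The sum of all elements of a subspace of dimension at least $2$ is also $0$: each nonzero coordinate functional is $1$ on exactly half of $W$, an even number $2^{k-1}$ of vectors. Therefore $\sum_{x\in A}F(x)=0$, so $F$ is not $k$th-order sum-free, which proves the contrapositive.

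The main point to check is the first step: that an unbroken $k$-flat of an APN function cannot have an image of smaller dimension when $k>2$. This is exactly what Proposition~\ref{prop:Sadmir_prop} supplies, via Nyberg's nonexistence of APN $(k,l)$-functions with $l<k$ except $(k,l)=(2,1)$. The exception $k=2$, $l=1$ is also precisely why the hypothesis $k>2$ is needed: for $k=2$ and $l=1$, $F|_A$ is $3$-to-$1$ onto two points and the sum over $A$ is nonzero. The remaining steps are routine.
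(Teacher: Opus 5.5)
Your proof is correct and follows the paper's argument: Proposition~\ref{prop:Sadmir_prop} forces an unbroken $k$-flat ($k>2$) of an APN function to map bijectively onto a $k$-flat, and the sum over that flat vanishes, so $F$ fails $k$th-order sum-freedom. You also spell out why the elements of a $k$-flat with $k\geq 2$ sum to zero, which the paper leaves implicit.
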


\begin{proof} By Propostion \ref{prop:Sadmir_prop}, if $A\in \mathcal{UB}_{n,k,F}$ then $F(A)$ is a $k$-dimensional flat, thus $A$ is $F$-transversal. Since $\sum_{x\in A} F(x) =\sum_{y \in F(A)} y = 0,$ then $F$ is not $k$-th-order sum-free.
\end{proof}

In general, we have the following result for a $k$-th-order sum-free function.

\begin{prop}\label{prop:sumfree_preimages}  Suppose that $F\colon \F_{2^n}\to \F_{2^n}$ is $k$-th-order sum-free. Let $A$ be a $k$-dimensional flat mapping to an $l$-dimensional flat. Then, there is a positive even number of odd indices $i$ such that $\lambda_{A,i}\not=0$, where $\lambda_{A,i}$ is defined as in \eqref{eq:lambdas_def}, and either $l=1$ or there is at least one even index $j$ such that $\lambda_{A,j}\not=0$. 
\end{prop}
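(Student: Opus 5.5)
We work with the multiplicities $\lambda_{A,i}$ from \eqref{eq:lambdas_def} and the two identities \eqref{eq:lambdas}. The key is to write the sum of $F$ over $A$ as a sum over the image:
$$\sum_{x\in A}F(x)=\sum_{y\in F(A)} |F^{-1}[y]\cap A|\, y=\sum_{y\in F(A),\ |F^{-1}[y]\cap A|\ \mathrm{odd}} y,$$
since we are in characteristic $2$. Write $O\subseteq F(A)$ for the set of image points with odd fiber intersection, so that $|O|=\sum_{i\ \mathrm{odd}}\lambda_{A,i}$. By $k$th-order sum-freedom the left side is nonzero, hence $O\neq\emptyset$. In particular, some odd index $i$ has $\lambda_{A,i}\neq 0$.

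For the parity statement we use the first identity in \eqref{eq:lambdas}, $\sum_i i\lambda_{A,i}=2^k$, with $k\ge 2$. Reducing it mod $2$ gives $\sum_{i\ \mathrm{odd}}\lambda_{A,i}\equiv 0$, so $|O|$ is even and positive. As in the remark after \eqref{eq:lambdas}, we will phrase this in the paper's wording: the count of odd-indexed contributions, taken with multiplicity $\lambda_{A,i}$, is even and nonzero.

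For the second claim, suppose $l\ge 2$ and, for contradiction, that $\lambda_{A,j}=0$ for every even $j$. Then every $y\in F(A)$ has odd fiber intersection, so $O=F(A)$. But $F(A)$ is an $l$-flat with $l\ge 2$, and the sum of all elements of a flat of dimension at least $2$ is zero. Hence $\sum_{x\in A}F(x)=\sum_{y\in F(A)}y=0$, contradicting sum-freedom. So some even index $j$ has $\lambda_{A,j}\neq 0$. The case $l=0$ is excluded: then $F$ is constant on $A$, the sum equals $2^k c=0$, and this also contradicts sum-freedom. That leaves exactly $l=1$ or the existence of such a $j$.

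The main subtlety is interpreting "positive even number of odd indices" correctly. Read literally as the number of distinct odd $i$ with $\lambda_{A,i}\ne0$, it need not be even: for example, a single odd $i$ with even $\lambda_{A,i}$ is possible. I would therefore prove the version $\sum_{i\ \mathrm{odd}}\lambda_{A,i}$ even and positive, matching the paper's earlier discussion, and state explicitly that this is the meaning of the claim.
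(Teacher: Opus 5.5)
Your proof is correct and is essentially the paper's argument: you rewrite $\sum_{x\in A}F(x)$ as the sum of the image points whose fibers meet $A$ in an odd number of elements, get positivity from sum-freedom, and get the even-index claim from the fact that a flat of dimension at least $2$ sums to zero; your separate treatment of $l=0$ covers a case the paper passes over. Your reading of the first claim as ``$\sum_{i\ \text{odd}}\lambda_{A,i}$ is even and positive'' is the right one, because that is what the mod-$2$ remark after \eqref{eq:lambdas} actually proves, and the literal reading fails: for $n=k=3$, a map on $\F_{2^3}$ with fibers of sizes $4,2,1,1$ over $0,1,\omega,1+\omega$ (with $\omega\notin\F_2$) has $\sum_x F(x)=1\neq 0$, so it is $3$rd-order sum-free, yet $A=\F_{2^3}$ maps onto a $2$-flat with $\lambda_{A,i}\neq 0$ for the single odd index $i=1$.
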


\begin{proof} Let $T_b$ denote the fibers of $F$ w.r.t. $A$, i.e.  $T_b := F^{-1}[b]\cap A= \{ x\in A : F(x)=b\}.$ With this notation, $\sum_{x\in T_b} F(x) = b (|T_b| \pmod{2})$, thus if $|T_b|$ is even,  $\sum_{x\in T_b} F(x) =0$, whereas if $|T_b|$ is odd, $\sum_{x\in T_b} F(x) = b$. Compute the sum $$\sum_{x\in A} F(x) = \sum_{b\in F(A)} \sum_{x\in T_b} F(x) = \sum_{b\in F(A)} b (|T_b| \pmod{2}).$$ 
For the first part, it suffices to see that not all $|T_b|$ can be even, which is a trivial consequence of the fact that the sum $$\sum_{x\in A} F(x) = \sum_{b\in F(A)} b (|T_b| \pmod{2})$$ cannot be null by the sum-freedom of $F$. Suppose that $l\not=1$ (so that $l\geq 2$) and there is no even index $j$ for which $\lambda_{A,j}\not=0$. This implies that all preimages $T_b$ have an odd number of elements, which implies $\sum_{x\in A} F(x) = \sum_{b\in F(A)}b=0$, where the last equality comes from the fact that $F(A)$ is a flat and $|F(A)|> 2$, a contradiction. 
\end{proof}

It is not hard to show \cite{CarletTwoNotions2024} that $F$ is $k$-th-order sum-free if and only if the $k$-th order derivative never takes the zero value. One can be tempted to exploit this to show that if $F$ is $k$-th order sum free, then it must be $(k-1)$-th-order sum-free as the derivative of a zero function is zero. However, the negation of the second part of the statement is: there are $a_1, a_2,\ldots, a_k$ linearly independent elements and $x$ such that $D_{a_1}D_{a_2}\cdots D_{a_k} F(x) =0$. This does not say that the whole derivative is the zero function. In fact, sum-freedom is not a monotonous property. For instance, the multiplicative inverse function, for $n=6$, is not 4th-order sum-free (neither $2$nd nor $3$rd), but it is $5$th-order sum-free. 

\begin{op} Is it true that a $k$-th-order sum-free polynomial that is also $(k-1)$-st-order sum-free is necessarily $k$-breaking, for $k>3$?
\end{op}

\subsection{$k$-strongly breaking functions}\label{sec:strongly_breaking}

For non-bijective APN mappings, there are always vanishing images (e.g.,  when $F$ is 2-to-1, then for every $a \in \F_{2^n}$, there is a unique $b\in \F_{2^n}$ such that $a\not=b$ and $F(a)+F(b)=0$) that may hide additional structural properties. In order to better understand the previously discussed concepts, we introduce the following definition.

\begin{defi}\label{def:sbreaking} Let $F:\F_{2^n}\to \F_{2^n}$ be a function and $A$ be a flat of $\F_{2^n}$. We say that $F$ {\em strongly breaks} $A$ if $\sum_{y \in F(A) }y \not=0,$ where $F(A)$ denotes the image of $A$ under $F$. The function $F$ is said to be $k$-strongly breaking if it strongly breaks all $k$-dimensional flats. 
\end{defi}

Obviously, this notion is equivalent to $k$-th-order sum-freedom for a permutation. Similarly, as in Section \ref{sec:breaking}, we sometimes refer to the strongly breaking property as a property of flats with respect to $F$. Note also that if there is a flat $A$ in the preimage of zero, then $A$ is not strongly broken.

\begin{prop}\label{prop:k-b} Let $F:\F_{2^n}\to \F_{2^n}$ be a function and $A$ be any flat.
\begin{enumerate}[i)]
    \item If $|F(A)|\leq 2$, then $F$ strongly breaks $A$ unless $F(A)=\{0\}$.
    \item If $|F(A)|>2$, then $F$ strongly breaks $A$ $\implies$ $F$ breaks $A$.
\end{enumerate}
\end{prop}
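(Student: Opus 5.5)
Both parts follow directly from Definition \ref{def:sbreaking}, so the plan is to compute $\sum_{y\in F(A)} y$ in each case. The only outside fact needed is the elementary identity $\sum_{y\in B} y = 0$ for every flat $B$ of dimension at least $2$.

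For i), I split according to the cardinality of $F(A)$. If $|F(A)|=1$, say $F(A)=\{c\}$, then the sum over the image is $c$. This is nonzero exactly when $c\neq 0$, so $A$ is strongly broken unless $F(A)=\{0\}$. If $|F(A)|=2$, say $F(A)=\{c,d\}$ with $c\neq d$, then the sum is $c+d$. In characteristic $2$ this is nonzero because $c\ne d$, so $A$ is always strongly broken in this case. Note that the image here is a $1$-flat, so strongly breaking does not force breaking in this range; that is exactly why the case is separated off.

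For ii), I argue by contraposition. Suppose $|F(A)|>2$ and $F$ does not break $A$, so $F(A)$ is a flat $B=b+W$. Since every flat has cardinality a power of $2$, we get $|B|\ge 4$, hence $\dim W=l\ge 2$. Then
$$\sum_{y\in B} y = 2^l b + \sum_{w\in W} w = \sum_{w\in W} w .$$
The remaining sum vanishes for $l\ge 2$: each coordinate of $W$ is either identically zero or balanced on $W$, so it equals $1$ on exactly $2^{l-1}$ elements, and $2^{l-1}$ is even. Therefore $\sum_{y\in F(A)}y=0$, so $F$ does not strongly break $A$. Contrapositively, strongly breaking $A$ implies breaking $A$.

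There is no real obstacle here. The only point to watch is the step from $|F(A)|>2$ to $l\ge 2$, which uses that $|F(A)|$ is a power of two whenever $F(A)$ is a flat.
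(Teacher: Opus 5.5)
Your proposal is correct and follows the same route as the paper: part i) is a direct check of the sum over one nonzero point or two distinct points, and part ii) rests on the fact that a flat with more than two elements has zero element sum. The only difference is that you spell out why a flat of dimension $l\ge 2$ sums to zero (each coordinate functional is zero or balanced on $W$, and $2^{l-1}$ is even), which the paper takes as evident.
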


\begin{proof} For $i)$, if the image of $A$ under $F$ contains either a single non-zero point or two different points, then the sum $\sum_{y \in F(A) }y$ is clearly non-null. For $ii)$, note that any set $X\subset \F_{2^n}$ with $|X|>2$, satisfies that $\sum_{x\in X}x \not= 0$ implies that $X$ is not a flat. 
\end{proof}

In general, the converse of $ii)$ in Proposition \ref{prop:k-b} is not true, e.g., it can be verified that Dillon's permutation breaks all 3-flats, see Table \ref{table:unbroken-vanishing-3}, but there are 184 3-flats which are not strongly broken (equivalently, there are 184 vanishing $3$-flats).

Denote by $\mathcal{NSB}_{n,k,F}$ the set of non-strongly broken $k$-dimensional flats w.r.t $F$. Similarly as before, we will typically omit the subindex $k$ when $k=2$.

\begin{prop} Let $F:\F_{2^n}\to \F_{2^n}$ be any function. It holds $$ |\mathcal{UB}_{n,k,F}\setminus\mathcal{C}| = |\mathcal{UB}_{n,k,F}|-|\mathcal{C}|\leq |\mathcal{NSB}_{n,k,F}|,$$
where $\mathcal{C} = \{ A \subset \Ftn : A \mbox{ is a }k\mbox{-dimensional flat, }|F(A)|\leq 2 \}.$ 
\end{prop}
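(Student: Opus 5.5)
The plan is to exhibit an injection from $\mathcal{UB}_{n,k,F}\setminus\mathcal{C}$ into $\mathcal{NSB}_{n,k,F}$. The simplest choice is the identity: I will show that every unbroken $k$-flat not in $\mathcal{C}$ is itself non-strongly broken. The equality $|\mathcal{UB}_{n,k,F}\setminus\mathcal{C}| = |\mathcal{UB}_{n,k,F}|-|\mathcal{C}|$ should come from the inclusion $\mathcal{C}\subseteq \mathcal{UB}_{n,k,F}$. I establish that first.

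\textbf{Step 1: $\mathcal{C}\subseteq\mathcal{UB}_{n,k,F}$.} Let $A$ be a $k$-flat with $|F(A)|\le 2$. Then $F(A)$ is either a single point, which is a $0$-flat, or a two-element set, which is a $1$-flat. In both cases $A$ is unbroken, so $\mathcal{C}\subseteq \mathcal{UB}_{n,k,F}$. The cardinality identity follows immediately.

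\textbf{Step 2: $\mathcal{UB}_{n,k,F}\setminus\mathcal{C}\subseteq \mathcal{NSB}_{n,k,F}$.} Take $A\in\mathcal{UB}_{n,k,F}\setminus\mathcal{C}$. Then $F(A)$ is a flat with $|F(A)|>2$, so it is an $l$-flat with $l\ge 2$. The sum of all elements of an $l$-flat $b+V$ with $l\ge 2$ vanishes. Indeed,
$$\sum_{y\in b+V} y = 2^l b+\sum_{v\in V}v = \sum_{v\in V}v,$$
and each coordinate of the last sum equals $0$ or $2^{l-1}$ modulo $2$, which is $0$ because $l\ge 2$. Hence $\sum_{y\in F(A)}y=0$, so $F$ does not strongly break $A$, and $A\in\mathcal{NSB}_{n,k,F}$.

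This step is the contrapositive of Proposition~\ref{prop:k-b}~ii), so I can cite that result directly instead of repeating the computation. Combining the two steps gives the chain of (in)equalities in the statement.

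There is no real obstacle in this argument. The only point needing care is the hypothesis $|F(A)|>2$: without it the inclusion in Step 2 fails. For example, $|F(A)|=2$ gives a sum of two distinct points, which is nonzero. This is exactly why $\mathcal{C}$ must be removed before comparing with $\mathcal{NSB}_{n,k,F}$.
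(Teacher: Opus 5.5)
Your proof is correct and follows the same route as the paper. The paper also notes that $\mathcal{C}\subseteq\mathcal{UB}_{n,k,F}$ because any set with at most two points is a flat, and then cites Proposition~\ref{prop:k-b} to get $\mathcal{UB}_{n,k,F}\setminus\mathcal{C}\subseteq\mathcal{NSB}_{n,k,F}$; your check that an $l$-flat with $l\ge 2$ sums to zero simply writes out what the paper leaves implicit in Proposition~\ref{prop:k-b}~ii).
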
 

\begin{proof} Note that $F$ does not break any flat $A$ in $\mathcal{C}$ since all sets with at most two points are flats. This implies that $\mathcal{C} \subset \mathcal{UB}_{n,k,F}$. The result now follows from Proposition \ref{prop:k-b}. 
\end{proof}

\begin{prop} The property of being $k$-strongly breaking is invariant under linear equivalence, i.e., for any linear automorphisms $L$ and $L'$ of $\mathbb{F}_{2^n}$, $F$ is $k$-strongly breaking iff $L'\circ F\circ L$ is $k$-strongly breaking. Furthermore, this is also true if $L$ is an affine automorphism. It holds $|\mathcal{NSB}_{n,k,F} | = |\mathcal{NSB}_{n,k,L'\circ F\circ L} |$ via the correspondence $A \in \mathcal{NSB}_{n,k,F} \mapsto L^{-1}(A)\in  \mathcal{NSB}_{n,k,L'\circ F \circ L} $.
\end{prop}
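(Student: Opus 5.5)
The plan is to argue exactly as in the proof of affine invariance of the $k$-breaking property. The key observation is that the strongly breaking condition depends only on the image set $F(A)$, and on how the sum of its elements transforms.

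Let $G=L'\circ F\circ L$, where $L$ is an affine automorphism and $L'$ is a linear automorphism. For a $k$-flat $B$, put $A=L(B)$; this is again a $k$-flat, as shown in the proof of the affine invariance of the breaking property.

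First, $G(B)=L'(F(L(B)))=L'(F(A))$.

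Second, since $L'$ is a bijection, $|G(B)|=|F(A)|$. Because $L'$ is additive, we get
$$\sum_{y\in G(B)} y=\sum_{z\in F(A)} L'(z)=L'\Big(\sum_{z\in F(A)} z\Big).$$

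Third, $L'$ is injective and linear, so its kernel is trivial. Hence $\sum_{y\in G(B)}y=0$ if and only if $\sum_{z\in F(A)}z=0$. In other words, $G$ strongly breaks $B$ if and only if $F$ strongly breaks $L(B)$.

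Since $B\mapsto L(B)$ is a bijection on the set of $k$-flats, with inverse $A\mapsto L^{-1}(A)$, we conclude that $A\in\mathcal{NSB}_{n,k,F}$ if and only if $L^{-1}(A)\in\mathcal{NSB}_{n,k,G}$. This gives the claimed bijection, and therefore equality of cardinalities. The equivalence "$F$ is $k$-strongly breaking iff $G$ is" is the special case in which both sets are empty.

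The only point needing care is why $L$ may be affine while $L'$ must be linear. The computation on the input side never uses linearity of $L$, only that it maps $k$-flats bijectively onto $k$-flats. On the output side, additivity is essential. If $L'(z)=M(z)+c$ with $c\neq 0$, the sum becomes $M\big(\sum z\big)+|F(A)|\,c$. When $|F(A)|$ is odd, this can turn a zero sum into a nonzero one. I would note briefly that this is why the statement restricts $L'$ to be linear, mirroring Remark \ref{rem:breakingCZZ}, but no counterexample is needed for the proof itself.
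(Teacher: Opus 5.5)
Your proposal is correct and follows essentially the same argument as the paper. You pull $L'$ out of the sum over the image set by linearity, use injectivity of $L'$ to see that the two sums vanish simultaneously, and use that $L$ permutes the $k$-flats. Your flat-by-flat equivalence ($G$ strongly breaks $B$ iff $F$ strongly breaks $L(B)$) is slightly more direct than the paper's one-direction-plus-symmetry argument, and it yields the stated bijection $A\mapsto L^{-1}(A)$ explicitly rather than leaving it as ``obvious''.
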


\begin{proof} Let $L,L'$ and $F$ as in the statement. Suppose that $L'\circ F \circ L$ is not $k$-strongly breaking and let  $A$ be a $k$-flat for which $\sum_{y\in L'\circ F \circ L (A) }y = 0$. Making the substitution $x=(L')^{-1}(y)$ yields $\sum_{x\in F\circ L (A) } L'(x) = 0$. Since $F\circ L( A) = F(L(A))$ and $L'$ is linear, we get $L'(\sum_{x\in F(L (A)) } x) = L'(0)=0$. It must be that $\sum_{x\in F(L (A)) } x =0$ as $L'$ is a permutation. Since $L(A)$ is a $k$-dimensional subspace (or a $k$-flat, when $L$ is affine), we have that $F$ is not $k$-strongly breaking. Conversely, applying the same argument to $L^{-1},(L')^{-1}$ and $F' = L'\circ F \circ L$ yields the result. The last part is now obvious. 
\end{proof}

Using the strongly breaking property, we can deduce a sufficient condition of the APN property.

\begin{lemma}\label{lem:sb-APN}
Let $F:\Ftn \to \Ftn$ be any function. If $F$ is $2$-strongly breaking and there is no $2$-flat $A$ such that, for every $u\in\mathbb{F}_{2^n}$, the number $|F^{-1}[u] \cap A|$ is even, then $F$ is APN.    
\end{lemma}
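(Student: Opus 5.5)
We argue through the characterization of APNness by the non-vanishing property: $F$ is APN if and only if $\sum_{x\in A}F(x)\neq 0$ for every $A\in\Tf$. So we fix an arbitrary $2$-flat $A=\{x_1,x_2,x_3,x_4\}$ and show that $S_A:=\sum_{x\in A}F(x)$ is nonzero. As in the proof of Proposition \ref{prop:sumfree_preimages}, we group the points of $A$ by their fibers. Set $T_b=F^{-1}[b]\cap A$ for $b\in F(A)$. Then
$$S_A=\sum_{b\in F(A)} b\,\bigl(|T_b| \bmod 2\bigr).$$

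The proof proceeds by cases on the partition $\{T_b\}$ of the four points of $A$. The possible partition types are $(1,1,1,1)$, $(2,1,1)$, $(2,2)$, $(3,1)$ and $(4)$.

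\begin{enumerate}[i)]
    \item Types $(2,2)$ and $(4)$: every $|T_b|$ is even, so every $|F^{-1}[u]\cap A|$ is even. These are exactly the flats excluded by hypothesis, so they cannot occur.
    \item Type $(1,1,1,1)$: $F|_A$ is injective, so $F(A)$ consists of the four values and $S_A=\sum_{y\in F(A)}y$. This is nonzero because $F$ strongly breaks $A$.
    \item Type $(2,1,1)$: $S_A=b+c$, where $b,c$ are the two distinct values with odd fibers. Hence $S_A\neq 0$.
    \item Type $(3,1)$: $S_A=b+c$ with $b\neq c$. Hence $S_A\neq 0$.
\end{enumerate}

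In the last two cases the strongly breaking hypothesis is not even needed. In every admissible case $S_A\neq 0$, so $F$ is APN.

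The only real obstacle is to make sure the excluded configurations are exactly the ones where $S_A$ could vanish without being controlled by the image sum. The identity for $S_A$ shows this cleanly: $S_A$ equals the image sum $\sum_{y\in F(A)}y$ precisely when all fibers are odd, and vanishes automatically when all fibers are even. The first case is handled by $2$-strong breaking, and the second is ruled out by the parity assumption.
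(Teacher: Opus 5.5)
Your proof is correct and follows essentially the same route as the paper: a case analysis on how the fibers of $F$ partition the $2$-flat $A$. The all-even configurations $(2,2)$ and $(4)$ are excluded by hypothesis, the transversal case is handled by $2$-strong breaking, and the $(2,1,1)$ and $(3,1)$ cases give a sum of two distinct values; your parity identity $\sum_{x\in A}F(x)=\sum_{b\in F(A)} b\,(|T_b| \bmod 2)$ just makes explicit the bookkeeping the paper does case by case.
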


\begin{proof} Let $A=\{x_1,x_2,x_3,x_4\}$ be a $2$-dimensional flat. We will show that $\sum_{x\in A }F(x) \not=0$. By hypothesis, the only possible cases are the following.

{\bf Case 1}: {\em $F(x_1)=a$, $F(x_2)=b$, and $F(x_3)=F(x_4)=c$, where $a,b,c\in\Ftn$ are pairwise distinct.} In this case, $F(A)=\{a,b,c\}$, so that $\sum_{i=1}^4 F(x_i) = a+b \not= 0$.

{\bf Case 2}: {\em $F(x_1)=F(x_2)=F(x_3)=a$ and $F(x_4)=b$, where $a,b\in\Ftn$ are distinct.} In other words, $F|_A$ is almost $3$-to-$1$ on $A$, and $F(A)=\{a,b\}$. In this case, $\sum_{i=1}^4 F(x_i) = a+b\not= 0$.

{\bf Case 3}: {\em $F(x_1)=a$, $F(x_2)=b$, $F(x_3)=c$, and $F(x_4)=d$, where $a,b,c,d\in\Ftn$ are pairwise distinct. In other words,  $A$ is $F$-transversal.} In this case, $F(A) = \{a,b,c,d\}$, so that $\sum_{i=1}^4 F(x_i) = a+b+c+d = \sum_{y\in F(A)} y\not= 0$ since $F$ is 2-strongly breaking.\\
Therefore, we have shown that $F$ is $2$nd-order sum-free, i.e., $F$ is APN.  
\end{proof}

The converse of Lemma \ref{lem:sb-APN} is in general false, e.g., function $D_3$ in Table \ref{table: APNs unbroken flats} does not strongly break $13$ $2$-flats despite being APN. 

In what follows, we will provide a characterization of a subclass of APN functions in terms of the strongly breaking property. A function $F:\mathbb{F}_{2^n}\to \mathbb{F}_{2^n}$ satisfies property $(\star)$ if for every $A\in \mathcal{B}_{n}$ such that $|F(A)| = 3$, we have \begin{equation}
        F(x)+F(y)+F(z)\not= 0 \tag{$\star$}
    \end{equation} for $x,y,z$ in a subset of $A$ which is $F$-transversal.

\begin{theo}\label{prop:2_strongly_breaking} Let $F:\mathbb{F}_{2^n}\to \mathbb{F}_{2^n}$ be any function for which $F^{-1}[0]$ does not contain any $2$-flat. The following statements are equivalent:
\begin{enumerate}[i)]
    \item $F$ is non-vanishing on any $2$-dimensional $F$-transversal flat and $F$ satisfies $(\star)$.
    \item $F$ strongly breaks all $2$-dimensional flats.
\end{enumerate}
\end{theo}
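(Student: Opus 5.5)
The plan is to compare, one $2$-flat $A$ at a time, the sum $\sum_{y\in F(A)}y$ with the quantities that appear in i). Condition ii) asks that $\sum_{y\in F(A)}y\neq 0$ for every $A\in\Tf$. Since $|A|=4$, we have $|F(A)|\in\{1,2,3,4\}$, so I will split $\Tf$ into four classes according to $|F(A)|$. In each class I will show that the strong-breaking condition for $A$ either holds automatically or is equivalent to the part of i) that concerns $A$.

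Cases $|F(A)|\le 2$. I will invoke Proposition \ref{prop:k-b} i). It says $A$ is strongly broken unless $F(A)=\{0\}$. That exception would mean $A\subseteq F^{-1}[0]$, which the standing hypothesis rules out. So every such $A$ is strongly broken. On the side of i) these flats impose nothing: they are not $F$-transversal, and $(\star)$ only concerns flats with $|F(A)|=3$.

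Case $|F(A)|=4$. Here $A$ is $F$-transversal and $F(A)=\{F(x):x\in A\}$ with no repetitions. Hence $\sum_{y\in F(A)}y=\sum_{x\in A}F(x)$, so strong breaking of $A$ is exactly the non-vanishing of $F$ on the transversal flat $A$. Conversely, every $F$-transversal $2$-flat has $|F(A)|=4$, so this class is precisely the one covered by the first clause of i).

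Case $|F(A)|=3$. Write $F(A)=\{a,b,c\}$ with $a,b,c$ distinct. One fiber meets $A$ in two points and the other two fibers meet $A$ in one point each. The $F$-transversal subsets $\{x,y,z\}\subseteq A$ of size three are exactly those picking one point from each of these three fibers, and for any such choice $F(x)+F(y)+F(z)=a+b+c=\sum_{y\in F(A)}y$. So $(\star)$ for $A$ is the same statement as $F$ strongly breaking $A$.

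Combining the four cases gives i) $\Leftrightarrow$ ii). The only delicate point, which I expect to be the main obstacle, is reading $(\star)$ correctly. I will read "$x,y,z$ in a subset of $A$ which is $F$-transversal" as ranging over transversal triples. I will note that all such triples give the same value $a+b+c$, so the "for every" and "for some" readings of $(\star)$ coincide. I will also note that the hypothesis on $F^{-1}[0]$ is used only to exclude $F(A)=\{0\}$ in the case $|F(A)|=1$.
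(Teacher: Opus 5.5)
Your proof is correct and follows the paper's argument: the same case split on $|F(A)|$, with Proposition \ref{prop:k-b} i) and the hypothesis on $F^{-1}[0]$ handling $|F(A)|\le 2$, and the identifications $\sum_{y\in F(A)}y=\sum_{x\in A}F(x)$ for transversal flats and $\sum_{y\in F(A)}y=a+b+c$ when $|F(A)|=3$. Your observation that both transversal triples in the case $|F(A)|=3$ give the same value $a+b+c$ makes explicit a point the paper uses tacitly.
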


\begin{proof} Suppose $i)$ holds. Let $A$ be any $2$-flat. By assumption, if $A$ is $F$-transversal or $|F(A)| = 3$, then $F$ strongly breaks $A$. Therefore, we can assume that $|F(A)| \leq 2$. By $i)$ of Proposition \ref{prop:k-b}, we know that these flats are always strongly broken unless $F(A)=\{0\}$ (which cannot happen by assumption). Hence, $F$ strongly breaks all $2$-dimensional flats. Conversely, assume that $ii)$ is true. If $A$ is an $F$-transversal $2$-flat, then $F$ is non-vanishing as it strongly breaks it. Let $A=\{x_1,x_2,x_3,x_4\}$ be any $2$-flat with $|F(A)|=3$, say, $F(x_1) = F(x_2)=a$ and $F(x_3)=b, F(x_4)=c$ for pairwise distinct $a,b,c \in \Ftn$. Since $F$ strongly breaks it, we have $a+b+c \not= 0$ for the transversal subset $\{x_1,x_3,x_4\}$ of $A$. Thus, $F$ satisfies $(\star)$.
\end{proof}

\begin{cor}\label{cor:2stongly} Let $F:\mathbb{F}_{2^n}\to \mathbb{F}_{2^n}$ be any function. The following statements are equivalent:
\begin{enumerate}[i)]
\item $F$ strongly breaks all $2$-flats and there is no $2$-flat $A$ such that, for every $u\in\mathbb{F}_{2^n}$, the number $|F^{-1}[u] \cap A|$ is even.
\item $F$ is an APN function satisfying $(\star)$.
\end{enumerate}
\end{cor}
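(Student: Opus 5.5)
The plan is to prove the two implications separately, relying on Lemma \ref{lem:sb-APN} and Theorem \ref{prop:2_strongly_breaking}, together with the fact that APN is the same as $2$nd-order sum-freedom (non-vanishing on every $A\in\Tf$).

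For i)$\Rightarrow$ii), Lemma \ref{lem:sb-APN} applies verbatim to the hypotheses of i), so $F$ is APN. It remains to check $(\star)$. First I observe that $F^{-1}[0]$ contains no $2$-flat. Indeed, such a flat $A$ would have $F(A)=\{0\}$, hence $|F^{-1}[u]\cap A|\in\{0,4\}$ is even for every $u$, which is excluded by i). (Alternatively, $\sum_{y\in F(A)}y=0$ would contradict strong breaking.) The hypothesis of Theorem \ref{prop:2_strongly_breaking} is therefore satisfied. Its direction ii)$\Rightarrow$i), fed with the assumption that $F$ strongly breaks all $2$-flats, yields $(\star)$.

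For ii)$\Rightarrow$i), let $F$ be APN with $(\star)$.
\begin{itemize}
\item \emph{No even-fibre flat.} If some $2$-flat $A$ had $|F^{-1}[u]\cap A|$ even for all $u$, then the values of $F$ on $A$ pair up, so $\sum_{x\in A}F(x)=0$, contradicting APN. In particular $F$ is never constant or $2$-to-$1$ on a $2$-flat.
\item \emph{Hypotheses of Theorem \ref{prop:2_strongly_breaking}.} Since $F$ is never constant on a $2$-flat, $F^{-1}[0]$ contains no $2$-flat. On an $F$-transversal $2$-flat, $F$ is non-vanishing because $F$ is APN.
\item \emph{Conclusion.} Combined with $(\star)$, condition i) of Theorem \ref{prop:2_strongly_breaking} holds, so its implication i)$\Rightarrow$ii) gives that $F$ strongly breaks all $2$-flats.
\end{itemize}

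Everything reduces to bookkeeping of the possible fibre patterns on a $2$-flat: $4$; $3{+}1$; $2{+}2$; $2{+}1{+}1$; $1{+}1{+}1{+}1$. The only step needing care is verifying the hypothesis ``$F^{-1}[0]$ contains no $2$-flat'' in each direction, since Theorem \ref{prop:2_strongly_breaking} requires it. As shown above, in both directions it follows from excluding the constant pattern, so I expect no real obstacle.
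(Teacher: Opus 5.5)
Your proposal is correct and follows the paper's proof essentially step for step. For i)$\Rightarrow$ii) you use Lemma \ref{lem:sb-APN} and the direction ii)$\Rightarrow$i) of Theorem \ref{prop:2_strongly_breaking}; for the converse you use the even-fibre cancellation argument and then the direction i)$\Rightarrow$ii) of that theorem. Your explicit check that $F^{-1}[0]$ contains no $2$-flat in the first direction, which the paper leaves implicit behind ``immediately'', is a welcome addition rather than a different route.
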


\begin{proof} The sufficiency follows immediately from Lemma \ref{lem:sb-APN} and Proposition \ref{prop:2_strongly_breaking}. For the necessity, let $F$ be an APN function satisfying $(\star)$. If there were a $2$-flat $A$ such that, for every $u\in\mathbb{F}_{2^n}$, the number $|F^{-1}[u] \cap A|$ is even, then $$\sum_{x\in A} F(x) = \sum_{u\in F(A)} (|F^{-1}[u]\cap A| \pmod{2}) u = 0,$$ contradicting APNness. Thus, such $2$-flats cannot exist. To show that $F$ is $2$-strongly breaking, note that $F^{-1}[0]$ cannot contain a $2$-flat by the above, and, that the function $F$ is non-vanishing on the $F$-transversal $2$-flats since an APN function is non-vanishing on all $2$-flats. By Theorem \ref{prop:2_strongly_breaking}, we infer that $F$ is $2$-strongly breaking.
\end{proof}

\begin{prop} The Gold function $F(x) = x^3$ over $\mathbb{F}_{2^n}$ is $2$-strongly breaking.
\end{prop}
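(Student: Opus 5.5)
The plan is to apply Theorem~\ref{prop:2_strongly_breaking} to $F(x)=x^3$, after first disposing of the odd case. The hypothesis of that theorem holds, since $F^{-1}[0]=\{0\}$ contains no $2$-flat. Condition $i)$ has two parts: non-vanishing on $F$-transversal $2$-flats, and property $(\star)$. The first part is immediate because the Gold function $x^3$ is APN, so it is non-vanishing on every $2$-flat. The whole proof therefore reduces to verifying $(\star)$.

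For $n$ odd, $\gcd(3,2^n-1)=1$, so $F$ is a permutation. Then $|F(A)|=4$ for every $2$-flat $A$, so no flat with $|F(A)|=3$ exists and $(\star)$ holds vacuously. Equivalently, strong breaking coincides with $2$nd-order sum-freedom, i.e.\ with APNness.

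For $n$ even, let $\omega\in\mathbb{F}_4^*\setminus\{1\}$ be a primitive cube root of unity, so $\omega^2+\omega+1=0$. The fibers of $F$ are $\{0\}$ and the sets $\{x,\omega x,\omega^2x\}$ for $x\neq0$. Hence a $2$-flat $A$ with $|F(A)|=3$ must contain exactly two points of one fiber, which after renaming $x$ we may write as $x,\omega x$ with $x\neq 0$. The remaining points are $y$ and $z=x+\omega x+y=\omega^2x+y$, where $y$ and $z$ lie outside the fiber of $x$ and have distinct images. Property $(\star)$ then asks that $x^3+y^3+z^3\neq0$.

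The key step, and the only real computation, is expanding this sum using $\omega^3=1$:
\[
x^3+y^3+(y+\omega^2x)^3=\omega^2xy^2+\omega x^2y=\omega xy(\omega y+x).
\]
This vanishes only if $x=0$, $y=0$, or $y=\omega^2x$, and each case is excluded. We have $x\neq0$ by choice of the fiber. If $y=0$, then $z=\omega^2x$, so $A=\{x,\omega x,\omega^2x,0\}$ and $|F(A)|=2$, contradicting $|F(A)|=3$. If $y=\omega^2x$, then $y$ lies in the fiber of $x$, contradicting the choice of $y$.

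Thus $(\star)$ holds in both cases, and Theorem~\ref{prop:2_strongly_breaking} gives that $F$ strongly breaks all $2$-flats. The main thing to get right is the parametrization of flats with $|F(A)|=3$ via the $\omega$-structure of the fibers; once that is fixed, the rest is the short factorization above.
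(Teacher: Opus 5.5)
Your proof is correct and follows the paper's route: reduce to Theorem~\ref{prop:2_strongly_breaking}, handle odd $n$ by bijectivity, and for even $n$ use non-vanishing on transversal flats (APNness, which the paper writes out as $ab(a+b)\neq 0$) together with a direct check of $(\star)$. The only difference is the parametrization: you write the flat as $\{x,\omega x,y,\omega^2x+y\}$ and obtain the factorization $\omega xy(\omega y+x)$, while the paper writes $A=\{x,x+a,x+b,x+a+b\}$ with $F(x)=F(x+a+b)$ and derives $a^3=b^3$; your version makes every degenerate case explicit, including the case the paper leaves implicit ($x=a+b$ in its notation).
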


\begin{proof} When $n$ is odd, $F$ is an APN permutation, thus $2$-strongly breaking. Assume $n$ is even. Let $A=\{x,x+a,x+b,x+a+b\}\in \Tf$ be an $F$-transversal 2-dimensional flat. The value $F(x)+F(x+a)+F(x+b)+F(x+a+b)=x^3+(x+a)^3+(x+b)^3+(x+a+b)^3$ equals $a^2b+ab^2=ab(a+b).$ Since $a\not=0,b\not=0$ and $a\not=b$, $F$ is non-vanishing on $A$. Suppose $A=\{x,x+a,x+b,x+a+b\}\in \Tf$ is such that $|F(A)|=3$. W.l.o.g, assume that  $F(x+a)\not= F(x+b)$, $F(x)=F(x+a+b)$, $F(x)\not=F(x+a)$ and $F(x)\not=F(x+b)$. Since $F(x)=F(x+a+b)$ and the preimages of each element form a 2-dimensional space when adjoining the zero element, we have 
\begin{equation}\label{eq:preimages_cube}
x^3 = (x+a+b)^3= (a+b)^3.
\end{equation}
Expanding $(x+a+b)^3$ yields $x^3+(a+b)x^2+ (a^2+b^2)x+(a+b)^3$. By the second equality in Equation \eqref{eq:preimages_cube}, we have 
\begin{equation}\label{eq:zerovalue}
    x^3 + (a+b)x^2+ (a^2+b^2)x =0.
\end{equation} Suppose that $F(x)+F(x+a)+F(x+b)=0,$ equivalently, $x^3+(a+b)x^2+(a^2+b^2)x+(a^3+b^3) =0.$ By Equation \eqref{eq:zerovalue}, it follows $a^3=b^3$, which also implies $(a+b)^3=a^3=b^3$. Equation \eqref{eq:preimages_cube} tells us then that either $x=a$ or $x=b$, which contradicts $F(x+b)\not=F(x)$ or $F(x+a)\not=F(x)$, respectively. By Theorem \ref{prop:2_strongly_breaking}, we conclude that $F(x)$ is $2$-strongly breaking.
\end{proof}

\subsection{Relationship among all notions}\label{sec:relationship}

The following diagrams are visualizations of the relationship between the introduced concepts.

\begin{figure}[!htb]
    \centering
 \begin{tikzpicture}[
    every node/.style={font=\normalem},
    textblock/.style={
        draw,
        rounded corners,
        inner sep=3pt
    }
]

% Top
\node[textblock] (strongbreaking) at (0,0)
    {\(F\) is \(k\)-strongly breaking};

% Arrow
\node at (0,-0.65) {$\Downarrow$%
\asterisk};

% Second
\node[textblock] (breaking) at (0,-1.25)
    {\(F\) is \(k\)-breaking};

% Arrow
\node at (0,-1.90) {$\Downarrow$ $ $};

% Middle row
\node[textblock] (sumfree) at (-5.5,-2.55)
    {\(F\) is \(k\)th-order sum-free};

\node (implies) at (-2.9,-2.55)
    {$\Rightarrow$};

\node[textblock] (nonnormal) at (0,-2.55)
    {\(F\) is \(k\)-strongly non-normal};

% Arrow
\node (otherimplies) at (2.9,-2.55) 
 {$\Leftarrow$};
%{$\Downarrow\;\text{\scriptsize $(k=2)$}$};

% Bottom
\node[textblock] (apn) at (4.25,-2.55)
    {\(F\) is APN};

\end{tikzpicture}
\caption{Relationship between the introduced concepts, where * means that $F$ is such that $|F(A)|>2$ for any $k$-flat $A$.}
\label{fig:rel}
\end{figure}

\begin{figure}[!htb]
    \centering
 \begin{tikzpicture}[
    every node/.style={font=\normalem},
    textblock/.style={
        draw,
        rounded corners,
        inner sep=3pt
    }
]

% Top
\node[textblock] (strongbreaking) at (0,0)
    {\(F\) is $2$-strongly breaking};

% Arrow
\node at (0,-0.65) {$\Downarrow$%
\asterisk
}
;

% Second
\node[textblock] (breaking) at (0,-1.25)
    {\(F\) is $2$-breaking};

% Arrow
\node at (-0.5,-1.90) {$\Downarrow$};

\node at (0.5,-1.90) {$\Uparrow$%
\dagg};

% Middle row
\node[textblock] (sumfree) at (-5.5,-2.55)
    {\(F\) is $2$nd-order sum-free};

\node (implies) at (-2.9,-2.55)
    {$\iff$};

\node[textblock] (nonnormal) at (0,-2.55)
    {\(F\) is $2$-strongly non-normal};

% Arrow
\node (otherimplies) at (2.9,-2.55) 
 {$\iff$};
%{$\Downarrow\;\text{\scriptsize $(k=2)$}$};

% Bottom
\node[textblock] (apn) at (4.25,-2.55)
    {\(F\) is APN};

\end{tikzpicture}
\caption{Relationship between the introduced concepts for $k=2$, where * means that $F$ is such that $|F(A)|>2$ for any $2$-flat and \dag\  indicates that $F$ is $2$-breaking on all flats except on those for which it is almost three-to-one.}
\label{fig:rel2}
\end{figure}

\section{Balanced properties related to known APN functions}\label{sec:balanced}

In this section, we study certain balanced properties of APN functions (and related functions) by using the structure of the collection of $2$-flats and exploiting Fourier transformations on this set. A direct relationship between these transformations and the sum-of-square indicator of vectorial functions is derived and then used to analyze certain classes of APN functions. 

Consider the set of all 2-dimensional flats of $\Ftn$, denoted by $\mathcal{F}_{2,n}$. It is  well-known that $|\mathcal{F}_{2,n}| = \frac{2^{n-2}(2^{n-1}-1)(2^n-1)}{3}$ \cite{BraWoPre05}.  Given a function $F:\Ftn\to \Ftn,$ the mapping $\Phi_F: \mathcal{F}_{2,n} \to \F_{2^n}$\footnote{Note that the function $\Phi_F$ was defined by Carlet in the context of AB functions, Dillon pointed out its surjectiveness, when $F$ is APN \cite{CarletBook}.} is defined by \begin{equation}\label{eq:phi}
    \Phi_F(A)=\sum_{x \in A} F(x).
\end{equation} 
It is easy to see that a function $F$ is APN if and only if the image of $\Phi_F$ is $\mathbb{F}_{2^n}^*$.  For each $v\in \Ftn^*$, we consider the following character sum of the components of $\Phi_F$,
\begin{equation}\label{eq:sv}
    S_{v,F} = \sum_{A\in \mathcal{F}_{2,n}} (-1)^{\operatorname{Tr}_n(v \Phi_F(A))}.
\end{equation}
Observe that $$\sum_{v\in \Ftn} S_{v,F} = \sum_{v\in \Ftn} \sum_{A\in \mathcal{F}_{2,n}} (-1)^{\operatorname{Tr}_n(v\Phi_F(A))} = \sum_{A\in \mathcal{F}_{2,n}} \sum_{v\in \Ftn} (-1)^{\operatorname{Tr}_n(v\Phi_F(A))}.$$ The sum $\sum_{v\in \Ftn} (-1)^{\operatorname{Tr}_n(v\Phi_F(A))}$ equals $0$ if $\Phi_F(A)\not = 0$. Otherwise, $\sum_{v\in \Ftn} (-1)^{\operatorname{Tr}_n(v\Phi_F(A))}=2^n$. Note that $\mathcal{Z}_{n,F} = \{A\in \mathcal{F}_{2,n} : \Phi_F(A) =0\}$. Thus,
\begin{equation}\label{eq:zeros_of_phi}
    |\mathcal{Z}_{n,F}|  = |\{A\in \mathcal{F}_{2,n} : \Phi_F(A) =0\}| = \frac{1}{2^n}\sum_{v\in \Ftn} S_{v,F} = \frac{\sum_{v\in \Ftn^*} S_{v,F} + |\mathcal{F}_{2,n}|}{2^n}.
\end{equation}
Define $\mathcal{NZ}_{n,F}:=\{A\in \mathcal{F}_{2,n}\colon \Phi_F(A) \neq 0\}$, so that $\mathcal{F}_{2,n}=\mathcal{NZ}_{n,F}\cup \mathcal{Z}_{n,F}$ is a partition of the set of all 2-flats.

\begin{lemma}\label{lem:Sv} Let $F:\Ftn \to \Ftn$ be any function and $\Phi_F: \mathcal{F}_{2,n} \to \F_{2^n}$ be its associated function, defined by \eqref{eq:phi}. For each nonzero $v\in \Ftn$, consider the sum $S_{v,F}$ defined in \eqref{eq:sv}. It holds:
\begin{enumerate}[i)]
    \item $F$ is APN if and only if $\sum_{v\in \Ftn^*} S_{v,F} = -|\mathcal{F}_{2,n}|$.
    \item For each nonzero $v\in \Ftn$, the value of $S_{v,F}$ is directly related to the sum-of-square indicator of the component $F_v$. Namely, $$ S_{v,F}  = \frac{1}{24} (\nu(F_v) - 2^{2n+1} + 2^{n+1} - 2^{2n}).$$
    
\item $\sum_{v\in \Ftn^* } \nu(F_v) = 3\cdot 2^{n+3} |\mathcal{Z}_{n,2}| +(2^n-1)2^{2n+1}.$
\end{enumerate}
\end{lemma}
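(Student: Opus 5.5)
The plan is to prove part ii) first, by a direct counting argument. Parts i) and iii) then follow from \eqref{eq:zeros_of_phi} and the known value of $|\mathcal{F}_{2,n}|$.

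For ii), fix $v\neq 0$. Use the identity $\nu(F_v)=2^{-n}\sum_{u}W_{F_v}^4(u)$ in the equivalent form
$$\nu(F_v)=\sum_{x,a,b\in\Ftn}(-1)^{F_v(x)+F_v(x+a)+F_v(x+b)+F_v(x+a+b)},$$
which is obtained by expanding the fourth power of the Walsh transform and summing over $u$. Split the triples $(x,a,b)$ into two classes.

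\emph{Degenerate triples.} These are the triples with $a=0$, $b=0$ or $a=b$. For them the exponent vanishes, so each contributes $1$. The number of such pairs $(a,b)$ is $3\cdot 2^n-2$, by inclusion--exclusion: the three conditions all meet at $(0,0)$. Hence these triples contribute $2^n(3\cdot2^n-2)$ in total.

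\emph{Non-degenerate triples.} For these, $\{x,x+a,x+b,x+a+b\}$ is a $2$-flat $A$, and the exponent is $\operatorname{Tr}_n(v\Phi_F(A))$. Each $A\in\mathcal{F}_{2,n}$ arises from exactly $24$ triples. There are $4$ choices of $x\in A$. Once $x$ is fixed, $(a,b)$ must be an ordered pair of distinct elements among the three nonzero elements of the direction space of $A$, which gives $6$ choices.

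Combining the two classes gives
$$\nu(F_v)=24S_{v,F}+3\cdot2^{2n}-2^{n+1}.$$
Since $3\cdot 2^{2n}=2^{2n+1}+2^{2n}$, this rearranges to the stated formula. The only step needing care is the bookkeeping of the multiplicity $24$ and of the degenerate pairs.

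For i), recall that $F$ is APN iff $\mathcal{Z}_{n,F}=\emptyset$. By \eqref{eq:zeros_of_phi}, $|\mathcal{Z}_{n,F}|=0$ iff $\sum_{v\neq0}S_{v,F}=-|\mathcal{F}_{2,n}|$.

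For iii), the quantity $|\mathcal{Z}_{n,2}|$ in the statement is read as $|\mathcal{Z}_{n,F}|$. Sum ii) over $v\neq0$ and substitute $\sum_{v\ne0}S_{v,F}=2^n|\mathcal{Z}_{n,F}|-|\mathcal{F}_{2,n}|$, again from \eqref{eq:zeros_of_phi}. This gives
$$\sum_{v\neq0}\nu(F_v)=24\cdot 2^n|\mathcal{Z}_{n,F}|-24|\mathcal{F}_{2,n}|+(2^n-1)(3\cdot2^{2n}-2^{n+1}).$$
Now use $|\mathcal{F}_{2,n}|=2^{n-2}(2^{n-1}-1)(2^n-1)/3$, so that
$$24|\mathcal{F}_{2,n}|=(2^n-1)(2^{2n}-2^{n+1}).$$
The constant terms then collapse to $(2^n-1)2^{2n+1}$, and the leading term is $24\cdot2^n=3\cdot 2^{n+3}$, which is the claimed identity.
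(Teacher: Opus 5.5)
Your proposal is correct and follows essentially the paper's proof: the paper also writes $S_{v,F}$ as $\frac{1}{24}$ times a sum over triples, but parametrizes them as $(x,y,a)$ starting from $\nu(F_v)=\sum_a W_{D_aF_v}^2(0)$, removing the $a=0$ term ($2^{2n}$) and the forbidden choices $y\in\{x,x+a\}$ ($2^{n+1}(2^n-1)$), which together give exactly your degenerate count $2^n(3\cdot 2^n-2)$ in the $(x,a,b)$ parametrization. Your explicit computation for iii) via $24|\mathcal{F}_{2,n}|=(2^n-1)(2^{2n}-2^{n+1})$ fills in a step the paper only asserts, and reading $|\mathcal{Z}_{n,2}|$ as $|\mathcal{Z}_{n,F}|$ is the intended meaning.
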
 
\begin{proof} The first item is trivial from \eqref{eq:zeros_of_phi}. Let us show $ii)$. Since each flat can be represented by $3!\binom{4}{3} = 24$ triplets with pairwise distinct elements, we have 
 $$S_{v,F}=\sum_{A\in \mathcal{F}_{2,n}} (-1)^{\operatorname{Tr}_n(v \Phi_F(A))} =\frac{1}{24}\sum_{\substack{(x,y,a)\in \Ftn^2\times \Ftn^*,\\ x\not=y, x\not=y+a}} (-1)^{F_v(x)+F_v(x+a)+F_v(y)+F_v(y+a)}.$$  This equality can then be rewritten as 
 
 \begin{align*}
 &\frac{1}{24}\sum_{\substack{(x,y,a) \in \Ftn^2\times \Ftn^*,\\ x\not=y, x\not=y+a}} (-1)^{F_v(x)+F_v(x+a)+F_v(y)+F_v(y+a)}=\sum_{\substack{(x,y,a)\in \Ftn^2\times \Ftn^*,\\ x\not=y, x\not=y+a}} (-1)^{D_aF_v(x)+D_aF_v(y)},
 \end{align*}
 where $v\in\Ftn^*$.
On the other hand, for nonzero $v\in\Ftn$, $$\sum_{a\in\Ftn^*}W_{D_aF_v}^2(0) = \sum_{a\in\Ftn^*}\left(\sum_{x\in \Ftn} (-1)^{ D_a F_v(x)}\right)^2 =
\sum_{(x,y,a)\in \Ftn^2\times \Ftn^*} (-1)^{D_aF_v(x)+D_aF_v(y)}.$$
Thus, $$S_{v,F}=\frac{1}{24}\left( \sum_{a\in\Ftn^*}W_{D_aF_v}^2(0) - 2^{n+1}(2^n-1)\right),$$ where the number $2^{n+1}(2^n-1)$ comes from the number of \emph{forbidden} values given by 
$$\sum_{a\in\Ftn^*} \sum_{x\in \Ftn^2} \sum_{y=x, y= x+a} (-1)^{{D_aF_v(x)}+D_aF_v(y)}.$$
Then,  
$ S_{v,F} = \frac{1}{24} \left(\nu(F_v) - 2^{2n}-2^{2n+1}+2^{n+1} \right),$ where we tacitly used $W_{D_0F_v}^2(0) = 2^{2n}.$  Item $iii)$ follows now from $ii)$ and \eqref{eq:zeros_of_phi}.
\end{proof}

\begin{rem}\label{rem:solution_OP4} Note that $iii)$ in Lemma \ref{lem:Sv} generalizes Corollary 1 of \cite{BeCaChLa06} thereby giving the exact value of $\sum_{v\in \Ftn^* } \nu(F_v)$. Moreover, we provide a positive answer to Open Problem 4 in \cite{BeCaChLa06}, which asks for the existence of a 4-uniform permutation over $\Ftn$, for $n$ even, such that $$\sum_{v\in \Ftn^* } \nu(F_v) = (2^n-1)2^{2n+1} + 3\cdot 2^{n+3} |\mathcal{Z}_{n,F}| := (2^n-1)2^{2n+1} + A2^{n+3},$$ and $A=3 |\mathcal{Z}_{n,F}| < 2^n-1.$  In general, for a $4$-uniform function $F$, $$3|\mathcal{Z}_{n,F}| = \sum_{a\in\Ftn^*, b\in \Ftn} \binom{\delta_F(a,b)/2}{2} = |\{ (a,b) \in \Ftn^*\times \Ftn : \delta_F(a,b) = 4 \}|.$$
Thus, the problem is equivalent to finding a $4$-uniform permutation with strictly less than $2^n-1$ fours in its differential distribution table. For this, consider Dillon's APN permutation $G$ over $\mathbb{F}_{2^6}$, where $\omega$ generates $\F_{2^6}^*$ and it annihilates the polynomial $x^6 + x^4 + x^3 + x + 1$. Define $F$ as follows, 
$$F(x) = 
\begin{cases}
G(x), & x\not\in \{\omega, \omega^{-2}\};\\
G(\omega), & x = \omega^{-2};\\
G(\omega^{-2}), & x = \omega.
\end{cases}$$
In other words, $F$ is $G$ precomposed by the transposition $(\omega\ \omega^{-2})$ (swapping two values). It can be verified that $F$ is a $4$-uniform permutation with $54<2^6-1=63$ fours in its difference distribution table.   
\end{rem}

We are interested in determining when $\Phi_F$ is balanced over $\Ftn^*$.  Since the codomain of $\Phi_F$ is $\Ftn$, there are two possibilities according to whether $\Phi_F^{-1}(0) = \emptyset$ or not. We will say that $\Phi_F$ is \emph{balanced} if for any $x,x'\in\mathbb{F}_{2^n}^*$, $|\Phi_F^{-1}[x]|=|\Phi_F^{-1}[x']|,$
 and $\Phi_F^{-1}[0] = \emptyset$. On the other hand, we will call $\Phi_F$ \emph{balanced-like} if for any $x,x'\in\mathbb{F}_{2^n}^*$, $|\Phi_F^{-1}[x]|=|\Phi_F^{-1}[x']|>0,$
 and $\Phi_F^{-1}[0]\neq \emptyset$. 
If $\Phi_F$ is balanced, then, for each nonzero $x\in\Ftn$, $$|\Phi_F^{-1}[x]| = \frac{|\mathcal{F}_{2,n}|}{2^n-1} = \frac{2^{2n-3}-2^{n-2}}{3}.$$ Since $2^{2n-3}-2^{n-2} \equiv (-1)^{2n-3} - (-1)^{n-2} \pmod{3}$, it must be that $2n-3\equiv n-2 \pmod{2}$, equivalently, $n$ must be odd. If $\Phi_F$ is balanced-like, it holds that $$|\Phi_F^{-1}[x]| = \frac{|\mathcal{F}_{2,n}|- |\mathcal{Z}_{n,F}|}{2^n-1} = \frac{2^{3n-3} -2^{2n-2}-2^{2n-3} +2^{n-2} - \sum_{a\in\Ftn^*, b\in \Ftn}\binom{\delta(a,b)}{2}}{3(2^n-1)}.$$

To analyze the balancednees properties of $\Phi_F$, define the function $\psi_F:\Ftn\to\mathbb{C}$ given by \begin{equation}\label{eq:psi}
    \psi_F(c) = |\Phi_F^{-1}[c]|
\end{equation} The Fourier transform of $\psi$ is $$\widehat{\psi_F}(v) = \sum_{c\not=0} \psi_F(c) \chi_{v}(c),$$ where $\chi_v:\Ftn\to \mathbb{C}$ denotes the character $\chi_v(c) = (-1)^{\operatorname{Tr}_n(c v)}$. Since $\widehat{\psi_F}(v) = \sum_{A\in \mathcal{F}_{2,n}} \chi_{v}(\Phi_F(A))$,  we get  $\widehat{\psi_F}(v) = \sum_{A\in \mathcal{F}_{2,n}}(-1)^{\operatorname{Tr}_n(v \Phi_F(A))} = S_{v,F}.$
Using Fourier inversion on $\Ftn$, 
\begin{equation}\label{eq:Fourier}
\psi_F(c) =2^{-n} \sum_{v\in \Ftn} \widehat{\psi_F}(v)\chi_v(c).
\end{equation}

\begin{lemma} \label{lemma:characterization_balanced} 
Let $n$ be an odd integer, $n>1$. The function $\Phi_F: \mathcal{F}_{2,n} \to \F_{2^n}$ is balanced if and only if for every nonzero $v\in \Ftn$, the number $S_{v,F}$, as defined in \eqref{eq:sv}, is independent of $v$ and $\Phi_F^{-1}[0] = \emptyset$ (i.e., $F$ is APN), in which case $S_{v,F} = -\frac{|\mathcal{F}_{2,n}|}{2^n-1}.$
\end{lemma}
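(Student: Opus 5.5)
The plan is to work entirely with the counting function $\psi_F(c)=|\Phi_F^{-1}[c]|$ from \eqref{eq:psi}. By construction its Fourier transform is $\widehat{\psi_F}(v)=S_{v,F}$, and inversion is given by \eqref{eq:Fourier}. Two facts will be used throughout. First, $\widehat{\psi_F}(0)=\sum_c\psi_F(c)=|\mathcal{F}_{2,n}|$. Second, taking \eqref{eq:Fourier} at $c=0$ gives $\psi_F(0)=|\mathcal{Z}_{n,F}|=2^{-n}\bigl(|\mathcal{F}_{2,n}|+\sum_{v\neq 0}S_{v,F}\bigr)$, which is exactly \eqref{eq:zeros_of_phi}.

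For the forward direction, assume $\Phi_F$ is balanced. Then $\psi_F(0)=0$, which by definition of $\mathcal{Z}_{n,F}$ and the non-vanishing characterization means $F$ is APN. Moreover $\psi_F(c)=N:=|\mathcal{F}_{2,n}|/(2^n-1)$ for every $c\neq0$; this is an integer because $n$ is odd, as computed just before the lemma. For $v\neq0$ we then get
\[
S_{v,F}=\sum_{c\neq0}N\chi_v(c)=N\Bigl(\sum_{c\in\Ftn}\chi_v(c)-1\Bigr)=-N,
\]
using the orthogonality relation $\sum_{c}\chi_v(c)=0$ for $v\neq0$. This value is independent of $v$ and equals the stated constant.

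For the converse, assume $\Phi_F^{-1}[0]=\emptyset$ and $S_{v,F}=s$ for all $v\neq0$. Lemma~\ref{lem:Sv}~i) (or \eqref{eq:zeros_of_phi} with $|\mathcal{Z}_{n,F}|=0$) gives $(2^n-1)s=-|\mathcal{F}_{2,n}|$, so $s=-N$. Inserting this into \eqref{eq:Fourier}, for $c\neq0$ we obtain
\[
\psi_F(c)=2^{-n}\Bigl(|\mathcal{F}_{2,n}|-N\sum_{v\neq0}\chi_v(c)\Bigr)=2^{-n}\bigl(|\mathcal{F}_{2,n}|+N\bigr)=2^{-n}N\,2^n=N,
\]
since $\sum_{v\neq0}\chi_v(c)=-1$ for $c\neq0$. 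Hence $\psi_F$ is constant on $\Ftn^*$ and vanishes at $0$, which is precisely balancedness.

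No step looks like a real obstacle; the argument is essentially Fourier inversion. The only care needed is in the bookkeeping. One point is that the transform of $\psi_F$ is written as a sum over $c\neq0$, but $\psi_F(0)$ contributes $\chi_v(0)=1$, so one must confirm that $\widehat{\psi_F}(v)=S_{v,F}$ is really the full transform. The other is to use the oddness of $n$ only to guarantee that $N$ is an integer, so that the balanced case can occur at all.
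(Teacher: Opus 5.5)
Your proposal is correct and follows essentially the same route as the paper. The forward direction is the same orthogonality computation $S_{v,F}=\frac{|\mathcal{F}_{2,n}|}{2^n-1}\sum_{c\neq0}\chi_v(c)=-\frac{|\mathcal{F}_{2,n}|}{2^n-1}$, and the converse combines \eqref{eq:zeros_of_phi} with Fourier inversion of $\psi_F$ in the same way; the paper only inverts first and fixes the constant afterwards, and your caveat about the $c=0$ term in $\widehat{\psi_F}$ is harmless because $\psi_F(0)=0$ in both directions.
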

\begin{proof}
Suppose that $\Phi_F$ is balanced. The equality 
\begin{align*}
   S_{v,F}= \sum_{A\in \mathcal{F}_{2,n}} (-1)^{\operatorname{Tr}_n(v \Phi_F(A))} &=  \sum_{c\in\Ftn^*} |\Phi^{-1}[c]|(-1)^{\operatorname{Tr}_n(v c)}  = \frac{|\mathcal{F}_{2,n}|}{2^n-1}\sum_{c\in\Ftn^*}(-1)^{\operatorname{Tr}_n(v c)} = -\frac{ |\mathcal{F}_{2,n}|}{2^n-1}.
\end{align*}
implies that $S_{v,F}$ is independent of $v$. The sufficiency follows now immediately from \eqref{eq:zeros_of_phi} since it gives $\Phi_F^{-1}[0] = \emptyset$. Suppose that for each $v\in \Ftn^*$, $S_{v,F} = S$ is independent of $v$ and $\Phi_F^{-1}[0] = \emptyset$. Consider the function $\psi_F$ as defined in \eqref{eq:psi}. Since $\widehat{\psi_F}(0)=|\mathcal{F}_{2,n}|$ and $\widehat{\psi_F}(v) = S$, we get 
    $\psi_F(c) = 2^{-n} \left( |\mathcal{F}_{2,n}|+S\sum_{v\not=0}\chi_v(c)\right).$
For $c\not=0$, this gives $\psi_F(c) = 2^{-n} \left( |\mathcal{F}_{2,n}|- S\right)$. Since  $\psi_F(0) = 0,$ it must be $S = -\frac{|\mathcal{F}_{2,n}|}{2^n-1}$ by \eqref{eq:zeros_of_phi}. Therefore, $\psi_F(c) =-S$. 
\end{proof}

\begin{lemma} \label{lemma:characterization_balanced-like} The function $\Phi_F: \mathcal{F}_{2,n} \to \F_{2^n}$ is balanced-like if and only if $\Phi_F^{-1}[0] \not= \emptyset$ and, for every nonzero $v\in \Ftn$, the sum $S_{v,F}$, defined in \eqref{eq:sv}, is independent of $v$, in which case $S_{v,F} = \frac{ 2^n|\mathcal{Z}_{n,F}|-|\mathcal{F}_{2,n}|}{2^n-1}.$
\end{lemma}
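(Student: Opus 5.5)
The proof follows the one for Lemma \ref{lemma:characterization_balanced}. We use the Fourier pair $\psi_F$, $\widehat{\psi_F}$ and the identity $\widehat{\psi_F}(v)=S_{v,F}$. Here the sum defining $\widehat{\psi_F}$ should be read over all $c\in\Ftn$; this is the version consistent with $\widehat{\psi_F}(v)=\sum_{A}\chi_v(\Phi_F(A))$. We also use the counting identity \eqref{eq:zeros_of_phi}, i.e. $\psi_F(0)=|\mathcal{Z}_{n,F}|=2^{-n}\sum_{v\in\Ftn}S_{v,F}$, together with $\widehat{\psi_F}(0)=|\mathcal{F}_{2,n}|$.

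\textbf{Necessity.} Assume $\Phi_F$ is balanced-like, so $\Phi_F^{-1}[0]\neq\emptyset$ holds by definition, and $\psi_F(c)=M$ for every $c\neq 0$. Counting all $2$-flats gives $|\mathcal{F}_{2,n}|=|\mathcal{Z}_{n,F}|+(2^n-1)M$, so $M=\frac{|\mathcal{F}_{2,n}|-|\mathcal{Z}_{n,F}|}{2^n-1}$. For $v\neq 0$ we compute
$$S_{v,F}=\sum_{c\in\Ftn}\psi_F(c)\chi_v(c)=|\mathcal{Z}_{n,F}|+M\sum_{c\neq 0}\chi_v(c)=|\mathcal{Z}_{n,F}|-M,$$
which does not depend on $v$. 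Substituting $M$ gives $|\mathcal{Z}_{n,F}|-M=\frac{2^n|\mathcal{Z}_{n,F}|-|\mathcal{F}_{2,n}|}{2^n-1}$, which is the stated value.

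\textbf{Sufficiency.} Assume $S_{v,F}=S$ for all $v\neq0$ and $\Phi_F^{-1}[0]\neq\emptyset$. Apply the inversion formula \eqref{eq:Fourier}. For $c\neq 0$ we have $\sum_{v\neq0}\chi_v(c)=-1$, so
$$\psi_F(c)=2^{-n}\bigl(|\mathcal{F}_{2,n}|-S\bigr),$$
which is independent of $c$. This is the equal-fiber condition on $\Ftn^*$.

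It remains to show that this common value is positive; this is the only delicate point. Evaluate \eqref{eq:zeros_of_phi} at $c=0$: it gives $|\mathcal{Z}_{n,F}|=2^{-n}\bigl(|\mathcal{F}_{2,n}|+(2^n-1)S\bigr)$, so $S=\frac{2^n|\mathcal{Z}_{n,F}|-|\mathcal{F}_{2,n}|}{2^n-1}$. Substituting this into the formula for $\psi_F(c)$ yields
$$\psi_F(c)=\frac{|\mathcal{F}_{2,n}|-|\mathcal{Z}_{n,F}|}{2^n-1}.$$
Positivity therefore amounts to showing that not every $2$-flat is vanishing. Suppose, for contradiction, that every $2$-flat vanishes. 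Then the common fiber size on $\Ftn^*$ would be $0$, so $\psi_F$ would be supported only at $0$. Its transform would then be constant, $S_{v,F}=|\mathcal{F}_{2,n}|$ for every $v$. This is consistent with the hypothesis and does not by itself give a contradiction.

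So the degenerate case, in which $\Phi_F\equiv 0$ (equivalently, $F$ has algebraic degree at most $1$), needs separate handling. The plan is either to note that affine $F$ must be excluded, or to argue that the definition of balanced-like then fails, so that the equivalence holds exactly for non-affine $F$. We would state this caveat explicitly, since for affine $F$ both sides of the claimed equivalence do not match. For every non-affine $F$ some $2$-flat is non-vanishing, so the common fiber size is positive and $\Phi_F$ is balanced-like.
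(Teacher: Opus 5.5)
Your proposal is correct and follows the paper's own route. The forward direction is the direct evaluation of $S_{v,F}$ from constant nonzero fibers, and the converse is Fourier inversion of $\psi_F$ together with the zero-count identity; your reading of $\widehat{\psi_F}$ as a sum over all $c$ is the one the paper's proof actually uses. Your caveat is also valid and is something the paper's proof passes over. The paper never checks that the common fiber size $\frac{|\mathcal{F}_{2,n}|-|\mathcal{Z}_{n,F}|}{2^n-1}$ is positive. For affine $F$ one has $\Phi_F\equiv 0$ and $S_{v,F}=|\mathcal{F}_{2,n}|$ for every $v$, so the right-hand condition holds while $\Phi_F$ is not balanced-like. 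The stated equivalence therefore holds exactly for non-affine $F$, as you concluded.
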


\begin{proof}
The number $S_{v,F}=\sum_{A\in \mathcal{F}_{2,n}} (-1)^{\operatorname{Tr}_n(v \Phi_F(A))}$ is equal to 
\begin{align*}
    |\Phi_F^{-1}[0]|+ \sum_{c\in\Ftn^*} |\Phi^{-1}[c]|(-1)^{\operatorname{Tr}_n(v c)} = |\mathcal{Z}_{n,F}| + \frac{|\mathcal{NZ}_{n,F}|}{2^n-1}\cdot \sum_{c\in\Ftn^*}(-1)^{\operatorname{Tr}_n(v c)} = \frac{2^n|\mathcal{Z}_{n,F}| - |\mathcal{F}_{2,n}|}{2^n-1}.
\end{align*}
Then, the sufficiency follows. For the necessity, suppose that for each $v\in \Ftn^*$, $S_{v,F} = S$ is independent of $v$. We proceed similarly as in the proof of Lemma \ref{lemma:characterization_balanced} and obtain that in this case the Fourier transform of $\psi_F$ is $$\widehat{\psi_F}(v) = \psi_F(0) \chi_v(0) +\sum_{c\neq 0} \psi_F(c) \chi_{v}(c) = \psi_F(0) +\sum_{c\neq 0} \psi_F(c) \chi_{v}(c). $$ As $\widehat{\psi_F}(v) = \sum_{A\in \mathcal{F}_{2,n}}(-1)^{\operatorname{Tr}_n(v \Phi_F(A))}.$
Then, for $c\not=0$,
we have, $\psi_F(c) = 2^{-n} \left( |\mathcal{F}_{2,n}|-S\right) = S,$ and for $c=0$, it gives $\psi_F(c) = 2^{-n}(|\mathcal{F}_{2,n}|+ S\cdot (2^n-1))= \frac{S-|\mathcal{F}_{2,n}|}{2^n}$=$|\mathcal{Z}_{n,F}|$.
\end{proof}

In fact, the balanced property of $\Phi_F$ is equivalent to the CAPNness of $F$. A proof of this fact was essentially given in \cite{CarletCAPN18}. We provide a direct argument using our context. 

\begin{theo}\label{th:CAPN} Let $n>1$ be an odd integer (resp. any integer) and $F:\Ftn\to\Ftn$ be any function. The following properties are equivalent.

\begin{enumerate}[i)]
    \item The function $\Phi_F$ is balanced (resp. balanced-like);
    \item For every nonzero $v\in\Ftn$, $\nu(F_v)$ takes on a single value, in which case $\nu(F_v)=2^{2n+1}$ (resp. $\nu(F_v) = 2^{2n+1} +  \frac{3\cdot 2^{n+3}|\mathcal{Z}_{n,F}|}{2^n-1}$);
    \item For every nonzero $v\in\Ftn$, we have $\sum_{a\in\Ftn} W_{F_v}^4(a)=2^{3n+1}$, that is, $F$ is CAPN (resp. $\sum_{a\in\Ftn} W_{F_v}^4(a)=2^{3n+1} + \frac{3\cdot 2^{2n+3}|\mathcal{Z}_{n,F}|}{2^n-1}$).
\end{enumerate}
\end{theo}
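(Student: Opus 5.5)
The plan is to run the whole argument through the character sums $S_{v,F}$. The equivalence i)$\Leftrightarrow$ii) will follow from Lemma \ref{lem:Sv} ii) combined with Lemmas \ref{lemma:characterization_balanced} and \ref{lemma:characterization_balanced-like}. The equivalence ii)$\Leftrightarrow$iii) will follow from the identity $\nu(f)=2^{-n}\sum_{a}W_f^4(a)$ recalled in Section \ref{sec:preliminaries}, applied to each component $F_v$.

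\textbf{Step 1 (i)$\Leftrightarrow$ii)).} By Lemma \ref{lem:Sv} ii), for every $v\neq 0$ we have $24S_{v,F}=\nu(F_v)-2^{2n+1}+2^{n+1}-2^{2n}$. Hence $S_{v,F}$ is independent of $v$ if and only if $\nu(F_v)$ is independent of $v$.

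In the balanced case ($n$ odd), Lemma \ref{lemma:characterization_balanced} says that $\Phi_F$ is balanced exactly when $S_{v,F}$ is constant and $\Phi_F^{-1}[0]=\emptyset$. In that case $S_{v,F}=-|\mathcal F_{2,n}|/(2^n-1)=-(2^{2n-3}-2^{n-2})/3$. Substituting gives $\nu(F_v)=2^{2n+1}+2^{2n}-2^{n+1}-8(2^{2n-3}-2^{n-2})=2^{2n+1}$.

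Conversely, suppose every $\nu(F_v)$ equals a common value. Lemma \ref{lem:Sv} iii) then gives $(2^n-1)\nu=3\cdot2^{n+3}|\mathcal Z_{n,F}|+(2^n-1)2^{2n+1}$. In the balanced statement the value is $\nu=2^{2n+1}$, which forces $|\mathcal Z_{n,F}|=0$. So $F$ is APN, and Lemma \ref{lemma:characterization_balanced} applies.

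\textbf{Step 2 (balanced-like case).} Assume ii) with the stated common value. Lemma \ref{lem:Sv} iii) is consistent with it and shows that the value is forced to equal $2^{2n+1}+3\cdot2^{n+3}|\mathcal Z_{n,F}|/(2^n-1)$ once $\nu$ is constant. Lemma \ref{lemma:characterization_balanced-like} then gives balanced-likeness, provided $\Phi_F^{-1}[0]\neq\emptyset$.

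\textbf{Main obstacle.} The delicate point is the bookkeeping of the zero fiber. Balanced-like requires $\mathcal Z_{n,F}\neq\emptyset$ and positive nonzero fibers, while balanced requires $\mathcal Z_{n,F}=\emptyset$. I would handle this as follows:
\begin{itemize}
\item When $n$ is even, APN-ness together with constant $S_{v,F}$ cannot give integral fiber sizes, by the divisibility computation preceding Lemma \ref{lemma:characterization_balanced}. So in that situation $\mathcal Z_{n,F}\neq\emptyset$ automatically.
\item When $|\mathcal Z_{n,F}|=0$, the two formulas coincide, so the statement should be read with the convention that the balanced case is the one covered.
\end{itemize}
I would also recheck the constant in Lemma \ref{lemma:characterization_balanced-like} by plugging $S=(2^n|\mathcal Z_{n,F}|-|\mathcal F_{2,n}|)/(2^n-1)$ into $24S+2^{2n+1}+2^{2n}-2^{n+1}$. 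The target is $2^{2n+1}+3\cdot2^{n+3}|\mathcal Z_{n,F}|/(2^n-1)$, and I will verify that the $|\mathcal F_{2,n}|$ terms cancel.

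\textbf{Step 3 (ii)$\Leftrightarrow$iii)).} Multiply by $2^n$ using $\sum_aW_{F_v}^4(a)=2^n\nu(F_v)$. This gives $2^{3n+1}$ in the balanced case and $2^{3n+1}+3\cdot2^{2n+3}|\mathcal Z_{n,F}|/(2^n-1)$ in the balanced-like case.
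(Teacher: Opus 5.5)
Your proposal is correct and follows the paper's own route: Lemma~\ref{lem:Sv}~ii) together with Lemmas~\ref{lemma:characterization_balanced} and~\ref{lemma:characterization_balanced-like} gives i)$\Leftrightarrow$ii), the total-sum identity pins down the common value (you use Lemma~\ref{lem:Sv}~iii), which is equivalent to the paper's use of \eqref{eq:zeros_of_phi}), and $\nu(f)=2^{-n}\sum_a W_f^4(a)$ gives ii)$\Leftrightarrow$iii). Your care with the zero fiber is justified, because the paper's proof ignores that a CAPN function with $n$ odd satisfies the ``resp.'' version of ii) without being balanced-like; the positivity requirement you list but do not check fails in the same way exactly for affine $F$ (there $\Phi_F\equiv 0$, $S_{v,F}=|\mathcal{F}_{2,n}|$ and $\nu(F_v)=2^{3n}$, matching the ``resp.'' formula), so that case must also be excluded by convention.
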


\begin{proof} The equivalence $i) \iff ii)$ follows from Lemmata \ref{lem:Sv} and  \ref{lemma:characterization_balanced}. Namely, $\Phi_F$ is balanced if and only if, for each $v\in\Ftn^*$, $S_{v,F} =-\frac{|\mathcal{F}_{2,n}|}{2^n-1} = \frac{2^{n-2}-2^{2n-3}}{3}$ if and only if, for each $v\in\Ftn^*$, $\nu(F_v)=2^{2n+1}.$
Similarly, we can obtain the respective result for the balanced-like property by applying Lemma \ref{lemma:characterization_balanced-like} instead of Lemma \ref{lemma:characterization_balanced}.

Suppose that $ii)$ holds. If $\nu(F_v)$ takes on a single value for each $v\in \Ftn^*$, then so does $S_{v,F}$ by the equivalence $i) \iff ii)$, say, that this value is $S$. From \eqref{eq:zeros_of_phi}, it must be that $2^n |\mathcal{Z}_{n,F}|-|\mathcal{F}_{2,n}|=\sum_{v\not=0} S_{v,F} = S (2^n-1)$, so that $S = \frac{2^n |\mathcal{Z}_{n,F}|-|\mathcal{F}_{2,n}|}{2^n-1}$. Thus, $$\nu(F_v) = 24S + 2^{2n+1} - 2^{n+1} + 2^{2n} = \frac{3\cdot 2^{n+3} |\mathcal{Z}_{n,F}|}{2^n-1}+2^{2n+1},$$
by Lemma \ref{lem:Sv}. The equivalence $ii) \iff iii)$ now follows from the equation $\nu(F_v)=2^{-n}\sum_{a\in\Ftn} W_{F_v}^4(a).$
\end{proof}

From Theorem \ref{th:CAPN}, we note that the balanced-like property of $\Phi_F$ gives a natural extension of the CAPN concept to all dimensions.    

\begin{defi} The function $F:\Ftn\to \Ftn$ is said to be \emph{CAPN-like} if $\Phi_F$ is balanced-like.
\end{defi}

\begin{ex}\label{ex:power_perm} Let $F(x)=x^d$ be any power permutation defined over $\mathbb{F}_{2^n}$, where $\gcd(2^n-1,d)=1$. By Proposition 5 in \cite{BeCaChLa06}, we know that $\nu(F_v) = \nu(F_1) = \sum_{\lambda\in \Ftn} W_{D_1 F_{\lambda}}^2(0)$ for each $v\in\Ftn^*$, i.e., $\nu(F_v)$ takes on a single for each $v\in\Ftn$, thus $$\nu(F_v) = 2^{2n+1} +  \frac{3\cdot 2^{n+3}|\mathcal{Z}_{n,F}|}{2^n-1}$$ and $F$ is CAPN or CAPN-like, depending on wheter $F$ is APN ($n$  odd, necessarily), or not by Theorem \ref{th:CAPN}. Note that there are CAPN-like permutations for $n$ odd, too. For instance, the function $F(x)=x^7$ over $\F_{2^7}$ is a $6$-uniform permutation, thus it is CAPN-like.
\end{ex}

\begin{ex} Continuing with Example \ref{ex:power_perm}, the multiplicative inverse function $F$ given by $x\mapsto x^{2^n-2}$ is CAPN for $n$ odd and CAPN-like for $n$ even. Indeed, for the multiplicative inverse permutation, the values of the  sum-of-square indicator are given by $$\nu(F_v) = \begin{cases}
    2^{2n+1}, & 2\nmid n,\\
    2^{2n+1} +  2^{n+3}, & 2\mid n,
\end{cases}$$
since $F$ is APN for $n$ odd, whereas for $n$ even, it is $4$-differentially uniform, and $$|\mathcal{Z}_{n,F}|= \frac{1}{3}\sum_{a\in\Ftn^*,\ b\in \Ftn}\binom{ \delta(a,b)/2}{2}=\frac{2^n-1}{3}.$$ 
Theorem \ref{th:CAPN} implies that $\Phi_F$ is balanced (resp. balanced-like), when $n$ is odd (resp. $n$ is even).
\end{ex}

We say that $\Phi_F$ is $k$-\emph{balanced} if we can partition $\mathcal{F}_{2,n} = \cup_{i=1}^k \mathcal{F}_i$, where $ \mathcal{F}_i \cap \mathcal{F}_{j}=\emptyset$, for each $1\leq i<j\leq k$ such that $\Phi_F|_{\mathcal{F}_{i}}$ is balanced, for each $1\leq i\leq k$. In other words, $\Phi_F$ is $k$-balanced if for any $x\in \mathbb{F}_{2^n}$ we have
$$|\Phi_F^{-1}[x]|= \begin{cases}
  \frac{|\mathcal{F}_{i}|}{|V_i|},  & \text{if }x\in V_i, \\
\end{cases}$$
where $\Ftn^*=\cup_{i=1}^k V_i$, $V_i\cap V_{j}=\emptyset$, for each $1\leq i<j\leq k$. Note that this definition implies that $F$ is APN. 
\begin{defi} A function $F:\Ftn \to \Ftn$ is $k$-CAPN if $\Phi_F$ is $k$-balanced.
\end{defi}

For any (APN) function $F:\Ftn \to \Ftn$, there always exists $1\leq k\leq 2^n $ such that  $F$ is $k$-CAPN. If $k=1$, this is just the balanced definition of $\Phi_F$. We will particularly be interested in the case when $k=2.$ In the following, an affine function $L:\Ftn\to \Ftn$ will be represented by a linear function $M:\Ftn\to\Ftn$ and an element $b\in\Ftn$ in such a way that $L(x) = M(x)+b$. For different affine functions, the underlying linear functions and elements will be clear from the context.

\begin{prop} Let $F:\Ftn \to \Ftn$ be any function and let $F':\Ftn\to \Ftn$ be a function EA-equivalent to $F$, i.e., there exist affine automorphisms $L, L'$ of $\Ftn$ and an affine function $L'':\Ftn \to \Ftn$ such that $F'=L\circ F \circ L'+L''$. Then, $\Phi_{L\circ F \circ L'+L''}(A) = M(\Phi_{F}( L'(A) )).$ In particular, the CAPN, CAPN-like and the $k$-CAPN properties are EA-invariant.
\end{prop}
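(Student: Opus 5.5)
The plan is to compute $\Phi_{F'}(A)$ directly for an arbitrary $2$-flat $A=\{x_1,x_2,x_3,x_4\}\in\Tf$. Write $L(x)=M(x)+b$, $L'(x)=M'(x)+b'$ and $L''(x)=M''(x)+b''$, with $M,M'$ invertible linear maps. Then
\begin{equation*}
\Phi_{F'}(A)=\sum_{i=1}^4 \bigl(M(F(L'(x_i)))+b\bigr)+\sum_{i=1}^4\bigl(M''(x_i)+b''\bigr).
\end{equation*}
Since there are four terms, the constants $b$ and $b''$ each appear an even number of times and cancel. The second sum equals $M''(\sum_i x_i)=M''(0)=0$, because $\sum_i x_i=0$ for every $2$-flat. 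By linearity of $M$, the first sum is $M\bigl(\sum_i F(L'(x_i))\bigr)$. As $L'$ is an affine automorphism, $L'(A)$ is again a $2$-flat, exactly as in the proof of the affine invariance of the breaking property. Hence this equals $M(\Phi_F(L'(A)))$, which is the claimed identity.

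For the invariance statements, I will observe that $A\mapsto L'(A)$ is a bijection of $\Tf$ onto itself, with inverse induced by $(L')^{-1}$. Combined with the identity, this gives
\begin{equation*}
\Phi_{F'}^{-1}[y]=(L')^{-1}\bigl(\Phi_F^{-1}[M^{-1}(y)]\bigr),
\end{equation*}
so $|\Phi_{F'}^{-1}[y]|=|\Phi_F^{-1}[M^{-1}(y)]|$ for every $y$. Because $M$ is a linear bijection, it fixes $0$ and permutes $\Ftn^*$. Therefore $\Phi_F^{-1}[0]=\emptyset$ if and only if $\Phi_{F'}^{-1}[0]=\emptyset$, and the fibre sizes over $\Ftn^*$ are permuted. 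This shows that balancedness and balanced-likeness (and hence CAPN and CAPN-like, by definition) are preserved.

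For $k$-balancedness, suppose $\Tf=\bigcup_i\mathcal{F}_i$ is a partition with $\Phi_F|_{\mathcal{F}_i}$ balanced onto the part $V_i$ of $\Ftn^*$. Then I take the transported partition $\mathcal{F}_i'=(L')^{-1}(\mathcal{F}_i)$ and $V_i'=M(V_i)$. The identity shows that $\Phi_{F'}|_{\mathcal{F}'_i}$ takes values in $V'_i$ with fibre sizes $|\mathcal{F}_i|/|V_i|=|\mathcal{F}'_i|/|V'_i|$. The reverse implication follows by symmetry, since EA-equivalence is symmetric.

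There is no real obstacle here; the main care point is checking that the affine constants cancel, which relies on each $2$-flat having exactly four points and zero sum.
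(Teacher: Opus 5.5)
Your proof is correct and follows essentially the same route as the paper: you compute $\Phi_{F'}(A)$ directly using that $L'(A)$ is again a $2$-flat, then transfer fibre sizes via $|\Phi_{F'}^{-1}[M(c)]|=|\Phi_F^{-1}[c]|$. You also make explicit two points the paper leaves implicit. The first is the vanishing of the affine constants and of $\sum_{x\in A}L''(x)$ over a $2$-flat. The second is the transported partition $\bigl((L')^{-1}(\mathcal{F}_i), M(V_i)\bigr)$ that witnesses $k$-balancedness of $F'$.
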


\begin{proof} Compute the value for $\Phi_{L\circ F \circ L' + L''}(A)$, namely, 
$$\Phi_{L\circ F \circ L' + L''}(A) = \sum_{x\in A} \big( L\circ F \circ L'(x) + L''(x) \big) = \sum_{y\in L'(A) } L(F(y)) = M\big(\sum_{y\in L'(A) } F(y) \big).$$ Since $L'(A)$ is a $2$-flat, the value of $\Phi_F( L'(A))$ is well-defined, and $\Phi_{L\circ F \circ L' + L''}(A) = M(\Phi_F( L'(A)))$. As $M$ is an automorphism, it holds that, for any $c\in\Ftn$, $\Phi_F(A) = c$ if and only if $M(\Phi_F(A)) = M(c)$. Since $L'$ is an affine automorphism, we have that $M(\Phi_F(A)) = M(c)$ if and only if $M(\Phi_F( L'(B) )) = M(c)$ for a unique $B\in \mathcal{F}_{2,n}$ such that $L'(B)=A$. By the first part, $M(\Phi_F( L'(B) )) =  \Phi_{L\circ F\circ L'+L''}(B).$ Putting all together yields, for each $c\in\Ftn$, $$\Phi_F(A) = c \iff  \Phi_{L\circ F\circ L'+L''}(B) = M(c)$$ for $B=(L')^{-1}(A)$. This equivalence implies that $|\Phi_F^{-1}[c]|  = |\Phi_{L \circ F\circ L'+L''}^{-1} [M(c)]|$ for any $c\in\Ftn$. From here, it is clear that if $F$ is balanced, balanced-like or $k$-balanced, then so is $F'$.
\end{proof}

\begin{rem} In general, the property of being $k$-CAPN is not CCZ-invariant. For instance, it was shown in \cite{Budaghyan2006} that the function $F(x)=x^3+(x^2+x+1)\operatorname{Tr}_n(x^3)$ defined over $\mathbb{F}_{2^6}$ is CCZ-equivalent EA-inequivalent to $x^3$. However, computer-based simulations show that $F$ is a $3$-CAPN function, whereas $x^3$ is CAPN.
\end{rem}

\begin{lemma} \label{lemma:characterization_even} Let $\Phi_F: \mathcal{F}_{2,n} \to \F_{2^n}$ be a $2$-balanced function, whose associated partition of $\Ftn^*$ is given by $(V_1,V_2)$, such that $\gamma_i := \frac{|\Phi_F^{-1}[V_i]|}{|V_i|}$ for $i\in\{1,2\}$. Suppose that one (exactly one, necessarily) of $V_1\cup\{0\}$ or $V_2\cup\{0\}$ is an $l$-dimensional subspace, denoted by $W$. Then, $S_{v,F}$, as defined in \eqref{eq:sv}, is $2$-valued. Moreover, 
$$ S_{v,F} = 
\begin{cases}
   -\gamma_i, & v\not\in W^\perp;\\
   2^l(\gamma_i-\gamma_{j}) - \gamma_i, &  v \in W^\perp\setminus \{0\},\\
\end{cases}
$$
where $i,j$ are such that $W = V_i\cup\{0\}$ and $V_j = \Ftn \setminus W$, respectively, and $\perp$ denotes the orthogonal complement. Conversely, if $F$ is APN and, for every $v\in \Ftn^*$, $S_{v,F}\in \{\theta_1 ,\theta_2\}$ with $\theta_1\not=\theta_2$ such that either $\{v\in \Ftn : S_{v,F} = \theta_1\}\cup\{0\}$ or $\{v\in \Ftn : S_{v,F} = \theta_2\}\cup\{0\}$ is a subspace. Then $\Phi_F$ is $2$-balanced.
\end{lemma}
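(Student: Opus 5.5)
The plan is to compute $S_{v,F}=\widehat{\psi_F}(v)$ directly from the fiber sizes, as in the proofs of Lemmas \ref{lemma:characterization_balanced} and \ref{lemma:characterization_balanced-like}. Since $\Phi_F$ is $2$-balanced, $F$ is APN, so $\psi_F(0)=0$. Moreover, $\psi_F(c)=\gamma_1$ for $c\in V_1$ and $\psi_F(c)=\gamma_2$ for $c\in V_2$; here I read $\gamma_i$ as the common fiber size $|\mathcal{F}_i|/|V_i|$. Fix $i,j$ with $W=V_i\cup\{0\}$ a subspace of dimension $l$ and $V_j=\Ftn\setminus W$. First I note that only one of $V_1\cup\{0\}$, $V_2\cup\{0\}$ can be a subspace. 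If both were, two subspaces meeting only in $0$ would have to cover $\Ftn$. This is impossible, since the union of two proper subspaces is never the whole space, and neither can be all of $\Ftn$ because both $V_i$ are nonempty.

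Next I write
\[
S_{v,F}=\gamma_i\sum_{c\in W\setminus\{0\}}\chi_v(c)+\gamma_j\sum_{c\notin W}\chi_v(c).
\]
I then use the standard orthogonality facts $\sum_{c\in W}\chi_v(c)=2^l[v\in W^\perp]$ and $\sum_{c\in\Ftn}\chi_v(c)=0$ for $v\neq 0$. The first sum equals $2^l[v\in W^\perp]-1$. The second equals $-2^l[v\in W^\perp]$. Hence $S_{v,F}=-\gamma_i$ for $v\notin W^\perp$, and $S_{v,F}=2^l(\gamma_i-\gamma_j)-\gamma_i$ for $v\in W^\perp\setminus\{0\}$. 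This is the claimed two-valuedness. The only subtlety is that if $\gamma_i=\gamma_j$ the two values coincide, which is the balanced case; I would remark that ``2-valued'' is then understood with possible coincidence.

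For the converse, assume $F$ is APN, $S_{v,F}\in\{\theta_1,\theta_2\}$, and $U:=\{v: S_{v,F}=\theta_1\}\cup\{0\}$ is a subspace, say of dimension $m$. I apply Fourier inversion \eqref{eq:Fourier} with $\widehat{\psi_F}(0)=|\mathcal{F}_{2,n}|$:
\[
2^n\psi_F(c)=|\mathcal{F}_{2,n}|+\theta_1\Big(\sum_{v\in U}\chi_v(c)-1\Big)+\theta_2\Big(\sum_{v\in \Ftn}\chi_v(c)-\sum_{v\in U}\chi_v(c)\Big).
\]
For $c\neq0$ this depends only on whether $c\in U^\perp$, because $\sum_{v\in U}\chi_v(c)=2^m[c\in U^\perp]$. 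So $\psi_F$ is constant on $U^\perp\setminus\{0\}$ and on $\Ftn\setminus U^\perp$. Setting $V_1=U^\perp\setminus\{0\}$, $V_2=\Ftn\setminus U^\perp$, and $\mathcal{F}_i=\Phi_F^{-1}[V_i]$ gives the partition. We have $\Phi_F^{-1}[0]=\emptyset$ by APNness, so each restriction $\Phi_F|_{\mathcal{F}_i}$ is balanced onto $V_i$, and $\Phi_F$ is $2$-balanced.

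The main obstacle is the degenerate cases in the converse. If $U=\{0\}$ or $U=\Ftn$, one of the $V_i$ is empty, and we fall back to the $1$-balanced case of Lemma \ref{lemma:characterization_balanced}, which I would allow as a trivially refined partition. I also need the fiber values on each part to be positive so that $V_i$ lies in the image; this holds because an APN $\Phi_F$ is surjective onto $\Ftn^*$, as noted before \eqref{eq:sv}. Finally, I would check for consistency that the two displayed values match those of the direct part with $W=U^\perp$.
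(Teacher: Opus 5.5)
Your proof is correct and takes the same route as the paper. For the forward direction you expand $S_{v,F}=\widehat{\psi_F}(v)$ over the two fiber classes and evaluate the character sum over $W$ by orthogonality, which the paper phrases as the Poisson summation formula. For the converse you apply Fourier inversion \eqref{eq:Fourier} with the same orthogonality over $U$, showing that $\psi_F$ is constant on $U^\perp\setminus\{0\}$ and on $\Ftn\setminus U^\perp$. Your extra remarks are careful refinements the paper leaves implicit: why at most one of $V_1\cup\{0\}$, $V_2\cup\{0\}$ can be a subspace, the possible coincidence of the two values, and the degenerate cases $U\in\{\{0\},\Ftn\}$.
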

\begin{proof} Without loss of generality, we will assume that $W := V_2\cup \{0\}$ is a subspace. Let $v\in \Ftn^*$ and consider the sum $S_{v,F} = \sum_{A\in \mathcal{F}_i} (-1)^{\operatorname{Tr}_n(v\Phi_F(A))}$. Since $\Phi_F^{-1}[0] = \emptyset$, we have 
\begin{equation}\label{eq:Sv_2valued}
    S_{v,F} = \sum_{c\in \Ftn^*} |\Phi_F^{-1}[c]|(-1)^{\operatorname{Tr}_n(vc)} = \gamma_1 \sum_{c\in V_1} (-1)^{\operatorname{Tr}_n(vc)} +  \gamma_2 \sum_{c\in V_2} (-1)^{\operatorname{Tr}_n(vc)}.
\end{equation}
Since $\sum_{c\in V_1} (-1)^{\operatorname{Tr}_n(vc)} + \sum_{c\in V_2} (-1)^{\operatorname{Tr}_n(vc)} + 1 = \sum_{c\in \Ftn} (-1)^{\operatorname{Tr}_n(cv)} = 0$, the value of the sum over $V_1$ is fully determined by the value of the sum over $V_2$. By the \emph{Poisson summation formula} (see \cite{CarletBook}), we get $$ \sum_{c\in V_2} (-1)^{\operatorname{Tr}_n(vc)} +1 = \frac{1}{|W^\perp|}\sum_{u\in W^\perp } \sum_{x\in\Ftn} \chi_{v+u}(x).$$
The sum $\sum_{x\in\Ftn} \chi_{v+u}(x)$ is null unless $v=u$, thus, $$\frac{1}{|W^\perp|}\sum_{u\in W^\perp } \sum_{x\in\Ftn} \chi_{v+u}(x) = \begin{cases}
    \frac{2^n}{|W^\perp|},& v\in W^{\perp}\\
    0,& v\not\in W^{\perp}.
\end{cases}  $$
From here, we obtain $$S_{v,F} = \frac{2^n}{|W^\perp|}(\gamma_2-\gamma_1) -\gamma_1 = 2^l (\gamma_2-\gamma_1) -\gamma_2,$$ whenever $v\in W^\perp\setminus\{0\}$, and $S_{v,F} = -\gamma_2$, otherwise.

Conversely, assume that $S_{v,F}\in \{\theta_1,\theta_2\}$. W.l.o.g, suppose that $U := \{v\in \Ftn : S_{v,F} = \theta_2\}\cup\{0\}$ is a subspace. Consider $\psi_F$ as defined in \eqref{eq:psi}. From \eqref{eq:Fourier}, we get 
\begin{equation}\label{eq:psicnew}
    \psi_F(c) = 2^{-n} \left( |\mathcal{F}_{2,n}|+\theta_2\sum_{v\in U\setminus \{0\}}\chi_v(c) + \theta_1\sum_{v\in \Ftn \setminus U}\chi_v(c)\right)
\end{equation}
since $\widehat{\psi_F}(0)=|\mathcal{F}_{2,n}|$
By another application of the Poisson summation formula, we get $$\sum_{v\in U\setminus \{0\}}\chi_v(c) + 1 =  \frac{1}{|U^\perp|}\sum_{u\in U^\perp } \sum_{x\in\Ftn} \chi_{c+u}(x) = \begin{cases}
    \frac{2^n}{|U^\perp|},& c\in U^{\perp}\\
    0,& c\not\in U^{\perp}.
\end{cases}$$
Note that $\sum_{v\in \Ftn \setminus U}\chi_v(c) = - \sum_{v\in U\setminus\{0\}}\chi_v(c) -1$. From \eqref{eq:psicnew}, it must be that, for $c\not=0$, $$\psi_F(c) = \begin{cases}
    \frac{|\mathcal{F}_{2,n}|}{2^n} +\frac{(\theta_2-\theta_1)}{|U^\perp|} - \frac{\theta_2}{2^n},& c\in U^{\perp}\setminus\{0\}\\
    \frac{|\mathcal{F}_{2,n}|}{2^n} -\frac{\theta_2}{2^n},& c\not\in U^{\perp}.
\end{cases}$$
In other words, $\Phi_F$ is $2$-balanced.
\end{proof}

Let $\alpha$ be a primitive element of $\Ftn^*$. Define cyclotomic classes of order 3 
$$
C_j:=\{\alpha^{3k+j}\mid 0\leq k<\tfrac{2^n-1}{3}\},
\qquad j\in\{0,1,2\}.
$$
It is easy to see that $vC_j = vC_{(j+i) \mod 3}$ when $v\in C_i$, for $i,j\in \{0,1,2\}$. Define
$$
\eta_j:=\sum_{c\in C_j}(-1)^{\operatorname{Tr}_n(c)},
\qquad j\in\{0,1,2\}.
$$
so that $\eta_0+\eta_1+\eta_2=-1.$  Moreover, when $n$ is even, the classes $C_1$ and $C_2$ are nonempty. Since the Frobenius automorphism restricted to $C_1$ has image $C_2$, $|C_1|=|C_2|$. 
It then holds that $\eta_1 = \eta_2$ as $\operatorname{Tr}_n(\alpha^{3k+1})=0 \implies \operatorname{Tr}_n(\alpha^{6k+2}) = (\operatorname{Tr}_n((\alpha^{3k+1})^2)) = (\operatorname{Tr}_n(\alpha^{3k+1}))^2=0$ and $\operatorname{Tr}_n(\alpha^{3k+2})=0 \implies \operatorname{Tr}_n(\alpha^{6k+4}) = (\operatorname{Tr}_n((\alpha^{3k+2})^2)) = (\operatorname{Tr}_n(\alpha^{3k+2}))^2=0$.

\begin{lemma}\label{lem:2-balanced_Gold} Let $\Phi_F:\mathcal{F}_{2,n} \to \Ftn$ be a 2-balanced function, where $n$ is even, whose associated partition of $\Ftn^*$ is given by $(V_1,V_2)$, where  $V_1 = \{x^3 \mid x\in \Ftn^*\}=C_0$, s.t. $\gamma_i = \frac{|\Phi_F^{-1}[V_i]|}{|V_i|}$ for $i\in \{1,2\}$. Then, $S_{v,F}$ is $2$-valued. In particular, $S_{v,F} =  \eta_{i-1}(\gamma_1-\gamma_{2}) - \gamma_2$, when $v\in V_i$, $i=1,2$. Conversely, if $F$ is APN and, for every $v\in \Ftn^*$, $S_{v,F}\in \{\theta_1 ,\theta_2\}$ with $\theta_1\not=\theta_2$ such that either $\{v\in \Ftn : S_{v,F} = \theta_1\} = C_0$ or $\{v\in \Ftn : S_{v,F} = \theta_2\}=C_0$. Then $\Phi_F$ is $2$-balanced. 
\end{lemma}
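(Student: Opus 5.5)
The plan is to follow the proof of Lemma \ref{lemma:characterization_even}, replacing the Poisson summation over a subspace with the multiplicative structure of the cyclotomic classes $C_0,C_1,C_2$. For $n$ even we have $3\mid 2^n-1$. Then $V_1=C_0$ and $V_2=\Ftn^*\setminus C_0=C_1\cup C_2$.

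\textbf{Forward direction.} Since $\Phi_F$ is $2$-balanced, $F$ is APN, so $\Phi_F^{-1}[0]=\emptyset$ and $\psi_F(c)=\gamma_i$ for $c\in V_i$. Exactly as in \eqref{eq:Sv_2valued},
$$S_{v,F}=\gamma_1\sum_{c\in C_0}(-1)^{\operatorname{Tr}_n(vc)}+\gamma_2\sum_{c\in C_1\cup C_2}(-1)^{\operatorname{Tr}_n(vc)}.$$
Take $v\in C_i$. The map $c\mapsto vc$ sends $C_0$ bijectively onto $vC_0=C_i$, so the first sum equals $\eta_i$. The two sums together equal $\sum_{c\in\Ftn^*}(-1)^{\operatorname{Tr}_n(vc)}=-1$, so the second sum is $-1-\eta_i$. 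Hence
$$S_{v,F}=\gamma_1\eta_i+\gamma_2(-1-\eta_i)=\eta_i(\gamma_1-\gamma_2)-\gamma_2.$$
For $v\in V_1=C_0$ this is $\eta_0(\gamma_1-\gamma_2)-\gamma_2$. For $v\in V_2$ we have $v\in C_1$ or $v\in C_2$, and the identity $\eta_1=\eta_2$ established just before the lemma shows that both cases give $\eta_1(\gamma_1-\gamma_2)-\gamma_2$. This is the claimed formula with index $\eta_{i-1}$ for $v\in V_i$, so $S_{v,F}$ takes at most two values.

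\textbf{Converse.} Assume $F$ is APN and, without loss of generality, that $S_{v,F}=\theta_1$ on $C_0$ and $S_{v,F}=\theta_2$ on $C_1\cup C_2$. Using Fourier inversion \eqref{eq:Fourier} with $\widehat{\psi_F}(0)=|\mathcal{F}_{2,n}|$ gives
$$\psi_F(c)=2^{-n}\Big(|\mathcal{F}_{2,n}|+\theta_1\sum_{v\in C_0}\chi_v(c)+\theta_2\sum_{v\in C_1\cup C_2}\chi_v(c)\Big).$$
Take $c\in C_j$. By the same multiplicative argument (now translating $v\mapsto vc$), the first sum is $\eta_j$ and the second is $-1-\eta_j$. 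Therefore
$$\psi_F(c)=2^{-n}\big(|\mathcal{F}_{2,n}|-\theta_2+\eta_j(\theta_1-\theta_2)\big).$$
This depends on $c$ only through $\eta_j$. Since $\eta_1=\eta_2$, $\psi_F$ is constant on $C_0$ and constant on $C_1\cup C_2$. APNness gives $\psi_F(0)=0$. Taking $\mathcal{F}_1=\Phi_F^{-1}[C_0]$ and $\mathcal{F}_2=\Phi_F^{-1}[C_1\cup C_2]$, the restrictions of $\Phi_F$ to $\mathcal{F}_1$ and $\mathcal{F}_2$ are balanced on $C_0$ and on $C_1\cup C_2$ respectively. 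So $\Phi_F$ is $2$-balanced with $(V_1,V_2)=(C_0,C_1\cup C_2)$.

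\textbf{Main obstacle.} The only genuinely delicate step is collapsing $C_1$ and $C_2$ into one class, which rests entirely on $\eta_1=\eta_2$. That identity follows from the Frobenius map $x\mapsto x^2$ sending $C_1$ bijectively to $C_2$ while preserving the trace, since $2\equiv -1 \pmod 3$. I would state this cleanly via the bijection rather than the implication-style argument given in the text.

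Two further remarks. If $\gamma_1=\gamma_2$, or if $\eta_0=\eta_1$, the two values of $S_{v,F}$ coincide, so "$2$-valued" should be read as "at most $2$-valued". The same invariance $vC_j=C_{i+j}$ is used in both directions, in the forward direction for $c$ and in the converse for $v$.
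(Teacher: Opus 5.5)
Your proposal is correct and follows the paper's proof essentially step by step. The forward direction evaluates $S_{v,F}$ on $C_0$ and on $C_1\cup C_2$ using $vC_0=C_i$ together with $\eta_1=\eta_2$, and the converse applies Fourier inversion to $\psi_F$ with the same multiplicative translation of the cyclotomic classes. Two differences are only cosmetic refinements: you treat $v\in C_i$ uniformly via $\sum_{c\in C_1\cup C_2}(-1)^{\operatorname{Tr}_n(vc)}=-1-\eta_i$ rather than through the paper's separate subcases, and you add the caveat that ``$2$-valued'' should be read as ``at most $2$-valued'' (e.g.\ when $\gamma_1=\gamma_2$).
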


\begin{proof} By equation (\ref{eq:Sv_2valued}),
$$S_{v,F}=\gamma_1\sum_{c\in C_0}(-1)^{\operatorname{Tr}_n(vc)}+\gamma_2\left(\sum_{c\in C_1}(-1)^{\operatorname{Tr}_n(vc)}+\sum_{c\in C_2}(-1)^{\operatorname{Tr}_n(vc)}\right).$$
We will consider the following two cases.
\begin{itemize}
    \item[1)] Suppose that $v\in V_1=C_0$.  Then, we have $vC_j=C_j$ for each $ j\in\{0,1,2\}.$ Consequently,
$$S_{v,F}=\gamma_1\eta_0+\gamma_2(\eta_1+\eta_2)=\gamma_1\eta_0+\gamma_2(-1-\eta_0) =\eta_0(\gamma_1-\gamma_2)-\gamma_2.$$
    \item[2)] Now suppose that $v\in V_2=C_1\cup C_2$. There are two possibilities:
    \begin{itemize}
        \item[$1^\circ$]  

 If $v\in C_1$, then $vC_0=C_1,$ $vC_1=C_2,$ and $vC_2=C_0$, and therefore
$$S_{v,F}=\gamma_1\eta_1+\gamma_2(\eta_2+\eta_0)=\eta_1(\gamma_1-\gamma_2)-\gamma_2.$$

          \item[$2^\circ$] Similarly, if $v\in C_2$, then $vC_0=C_2$, $vC_1=C_0$ and $vC_2=C_1$, so
$$S_{v,F}=\gamma_1\eta_2+\gamma_2(\eta_0+\eta_1)=\eta_2(\gamma_1-\gamma_2)-\gamma_2.$$
    \end{itemize}
Since $\eta_1=\eta_2$ (as $n$ is even), $S_{v,F}$ takes a single value for all $v\in V_2$.
\end{itemize}
Now, suppose that $S_{v,F} \in \{\theta_1,\theta_2\}$ and, w.l.o.g., suppose that  $V_1 := \{v\in \Ftn^*:S_{v,F}=\theta_1\} = C_0$. Then  $V_2=C_1\cup C_2$. Consider the function $\psi_F:\Ftn\to\mathbb{C}$ given by \eqref{eq:psi}.
Since $\widehat{\psi_F}(0)=|\mathcal{F}_{2,n}|$, we get 
\begin{align}
     \psi_F(c) &= 2^{-n} \left( |\mathcal{F}_{2,n}|+\theta_1\sum_{v\in V_1}(-1)^{\operatorname{Tr}_n(vc)} + \theta_2\sum_{v\in V_2}(-1)^{\operatorname{Tr}_n(vc)}\right)\nonumber \\
     &=  2^{-n} \left( |\mathcal{F}_{2,n}|+\theta_1\sum_{v\in C_0}(-1)^{\operatorname{Tr}_n(vc)} + \theta_2\left(\sum_{v\in C_1}(-1)^{\operatorname{Tr}_n(vc)}+\sum_{v\in C_2}(-1)^{\operatorname{Tr}_n(vc)}\right)\right)\label{eq:psi_gold}.
\end{align}
Consider the following two cases.
\begin{itemize}
    \item[1)] If $c\in V_1=C_0$, then equation (\ref{eq:psi_gold}) becomes
$ \psi_F(c)=2^{-n} \left( |\mathcal{F}_{2,n}|+\theta_1\eta_0 + \theta_2\left(\eta_1+\eta_2\right)\right).$
    \item[2)] If $c\in V_2=C_1\cup C_2$, then there are two possibilities:
    \begin{itemize}
        \item[$1^\circ$]  If $c\in C_1$, then $ \psi_F(c)=2^{-n} \left( |\mathcal{F}_{2,n}|+\theta_1\eta_1 + \theta_2\left(\eta_2+\eta_0\right)\right).$
          \item[$2^\circ$] Similarly, if $c\in C_2$, then 
          $\psi_F(c)=2^{-n} \left( |\mathcal{F}_{2,n}|+\theta_1\eta_2 + \theta_2\left(\eta_0+\eta_1\right)\right).$
    \end{itemize}
\end{itemize}
Thus, it must be that, for $c\not=0$, $$\psi_F(c) = \begin{cases}
    2^{-n} \left( |\mathcal{F}_{2,n}|+\eta_0(\theta_1-\theta_2)-\theta_2\right),& c\in C_0,\\
    2^{-n} \left( |\mathcal{F}_{2,n}|+\eta_1(\theta_1-\theta_2)-\theta_2\right),& c\in C_1\cup C_2.
\end{cases}$$
In other words, $\Phi_F$ is $2$-balanced.\end{proof}

\begin{theo} \label{theo:2bal_nu} Let $F:\Ftn\to \Ftn$ be any function, for $n$ even. The following statements are equivalent.
\begin{enumerate}[i)]
    \item The function $F$ is $2$-CAPN, where one of the partition sets is an $\frac{n}{2}$-dimensional subspace with the zero vector removed (resp. the set of nonzero cubes).
    \item The function $\Phi_F$ is $2$-balanced with one of the parts being an $\frac{n}{2}$-dimensional subspace with the zero vector removed (resp. the set of nonzero cubes).
    \item $F$ is APN and, for each $v\not=0$, it holds $S_{v,F} \in \{\theta_1,\theta_2\}$ for $\theta_1\not=\theta_2 \in \mathbb{Z}$ and one of $\{v \in \Ftn : S_{v,F}= \theta_1\}$ or $\{v \in \Ftn : S_{v,F}= \theta_2\}$ is an $\frac{n}{2}$-dimensional subspace without the zero vector (resp. the set of nonzero cubes).
    \item $F$ is APN and, for each $v\not=0$, the sum-of-square indicator $\nu(F_v)\in \{\Theta_1,\Theta_2\}$ takes on two values and one of $\{v \in \Ftn : \nu(F_v)= \Theta_1\}$ or $\{v \in \Ftn : \nu(F_v)= \Theta_2\}$ is an $\frac{n}{2}$-dimensional subspace without the zero vector (resp. the set of nonzero cubes).\footnote{In \cite{Gillot2026}, the property of a function whose sum-of-square indicator (equivalently, its fourth Walsh moments) takes on $k$ values was called $k$ levels of fourth spectral moments.}
\end{enumerate}
\end{theo}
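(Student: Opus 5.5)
The plan is to prove the chain i) $\iff$ ii) $\iff$ iii) $\iff$ iv). Both variants (subspace and nonzero cubes) are handled in parallel: Lemma \ref{lemma:characterization_even} serves the subspace case and Lemma \ref{lem:2-balanced_Gold} the cube case.

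The equivalence i) $\iff$ ii) is immediate from the definition: $F$ is $k$-CAPN exactly when $\Phi_F$ is $k$-balanced. The partition condition on $\Ftn^*$ is the same in both statements.

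For ii) $\Rightarrow$ iii), first note that a $2$-balanced $\Phi_F$ forces $\Phi_F^{-1}[0]=\emptyset$, so $F$ is APN, as already observed after the definition of $k$-balancedness.
\begin{itemize}
\item Subspace case. Let $W=V_i\cup\{0\}$ have dimension $n/2$. Lemma \ref{lemma:characterization_even} gives $S_{v,F}=-\gamma_i$ for $v\notin W^\perp$ and $S_{v,F}=2^{n/2}(\gamma_i-\gamma_j)-\gamma_i$ for $v\in W^\perp\setminus\{0\}$. Here $W^\perp$ is again an $\frac n2$-dimensional subspace, since $\dim W^\perp=n-\dim W$ with respect to the trace form. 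So the level set of the second value, together with $0$, is an $\frac n2$-dimensional subspace, as required. The two values are distinct precisely when $\gamma_1\neq\gamma_2$, which holds because the partition is genuinely two-part.
\item Cube case. Lemma \ref{lem:2-balanced_Gold} gives $S_{v,F}=\eta_0(\gamma_1-\gamma_2)-\gamma_2$ on $C_0$ and $\eta_1(\gamma_1-\gamma_2)-\gamma_2$ on $C_1\cup C_2$. To get two distinct values with level set $C_0$, I must check $\eta_0\neq\eta_1$. From $\eta_0+2\eta_1=-1$, equality would force $\eta_0=-1/3$, which is not an integer. Hence they differ.
\end{itemize}

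For iii) $\Rightarrow$ ii), apply the converse halves of the same lemmas.
\begin{itemize}
\item Subspace case. The converse of Lemma \ref{lemma:characterization_even} shows that $\psi_F$ is constant on $U^\perp\setminus\{0\}$ and on its complement in $\Ftn^*$. Since $U$ is $\frac n2$-dimensional, so is $U^\perp$. Thus $\Phi_F$ is $2$-balanced with one part an $\frac n2$-dimensional subspace minus zero.
\item Cube case. The converse of Lemma \ref{lem:2-balanced_Gold} gives the parts $C_0$ and $C_1\cup C_2$ directly.
\end{itemize}
One point needs care in both cases: the two computed fibre sizes must actually differ, so that the partition is truly two-part. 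If they coincided, $\Phi_F$ would be balanced, and by Lemma \ref{lemma:characterization_balanced} this is impossible for even $n$. Equivalently, it contradicts the divisibility argument showing that balancedness forces $n$ odd.

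Finally, iii) $\iff$ iv) follows from Lemma \ref{lem:Sv} ii). There $S_{v,F}$ is an affine function of $\nu(F_v)$ with nonzero slope $1/24$ and a constant term independent of $v$, namely $S_{v,F}=\frac{1}{24}\bigl(\nu(F_v)-2^{2n+1}+2^{n+1}-2^{2n}\bigr)$. Hence the level sets of $v\mapsto S_{v,F}$ and $v\mapsto\nu(F_v)$ coincide, and $\theta_1\neq\theta_2$ if and only if $\Theta_1\neq\Theta_2$. APNness is common to both statements.

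The main obstacle is the bookkeeping in the subspace case. The subspace on the $S_{v,F}$ side is the orthogonal complement of the one on the $\Phi_F$ side, so I must verify that the dimension $\frac n2$ is preserved under $\perp$. I must also rule out degenerate coincidences of values, which I do via the parity/divisibility argument above.
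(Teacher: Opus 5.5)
Your proof is correct and follows the paper's route: $i)\iff ii)$ is the definition of $2$-CAPN, $ii)\iff iii)$ comes from Lemma~\ref{lemma:characterization_even} (resp.\ Lemma~\ref{lem:2-balanced_Gold}), and $iii)\iff iv)$ comes from the affine relation in Lemma~\ref{lem:Sv}~ii), which are the lemmas the paper's one-line proof cites (there with the item labels shifted by one). Your extra checks fill in details the paper leaves implicit: that $\dim W^\perp=\frac n2$, that $\eta_0\neq\eta_1$ because $\eta_0+2\eta_1=-1$, and that $\gamma_1\neq\gamma_2$ because balancedness is impossible for even $n$ (for this last point cite the divisibility argument, since Lemma~\ref{lemma:characterization_balanced} is stated only for odd $n$).
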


\begin{proof} The equivalence of $i)$ and $ii)$ is given by Lemma \ref{lemma:characterization_even} (resp. Lemma \ref{lem:2-balanced_Gold}). The equivalence of $ii)$ and $iii)$ follows at once from Lemma \ref{lem:Sv}.
\end{proof}

\begin{ex}
     Dillon's APN permutation $F:\mathbb{F}_{2^6}\to \mathbb{F}_{2^6}$ is $2$-CAPN since  $ \forall x \in V_{1}, \gamma_1 = |\phi^{-1}_F(x)|=160 $ and $\forall x \in V_{2},\gamma_2  = |\phi^{-1}_F(x)|=208 $, where $|V_1| = 56$ and $|V_2|=7$. In this case we have $|\mathcal{F}_{1}| =8960$ and  $|\mathcal{F}_2| =1456$.  Moreover, $$W=V_2\cup\{0\} = \langle 1,\omega+\omega^5,\omega^3+\omega^4 \rangle,$$ where $\omega$ is a primitive element of $\F_{2^6}^*$ such that $\omega^6 + \omega^4 + \omega^3 + \omega + 1=0$. 
     Its orthogonal complement is $W^\perp = \langle w^4, w^5+ w, w^5+ w^2 +1 \rangle $.
     This yields that $S_{v,F}$ is $2$-valued and $$S_{v,F} = \begin{cases}
   -208, & v\not\in W^\perp;\\
   176, &  v \in W^\perp \setminus \{0\},\\
\end{cases}$$
by Lemma \ref{lemma:characterization_even}, that is, the value $208$ is taken $56$ times, whereas the value $176$ is taken $7$ times. Additionally, we have $$ \nu(F_v) = \begin{cases}
   7168 = 2^{2n} + 2^{3s} - 2^{3n-2s} , & v\not\in W^\perp;\\
   16384 = 2^{n+2s}, &  v \in W^\perp\setminus \{0\},
   \end{cases}$$ 
where $s=\frac{n}{2}+1=4.$
\end{ex}

\begin{ex}\label{ex:Gold}
 The Gold function $F:\Ftn \to \Ftn$ given by $F:x\mapsto x^{2^i+1}$, where $\gcd(i,n)=1$, is 2-CAPN for $n$ even. Indeed, when $n$ is even, then there are bent and non-bent plateuaed components, so that $F_v$ is non-bent plateaued if and only if $v$ is a cube in
$\mathbb{F}_{2^n}^*$, with amplitude $\lambda = 2^{\frac{n}{2}+1}$ and $F_v$ is bent otherwise. Since
$\mathbb{F}_{2^n}^*$ has $\frac{2^n-1}{3}$ cubes, there are $\frac{2^n-1}{3}$ non-bent plateaued components and $\frac{2(2^n-1)}{3}$ bent components. Hence 
\begin{equation}\label{eq:nu_Gold}
    \nu(F_v) = \begin{cases}
    2^{2n+2}, & v=w^3 \text{ for some } w\in \Ftn^*\\
    2^{2n}, & \text{otherwise.}
\end{cases}
\end{equation}
Theorem \ref{theo:2bal_nu}, implies that $\Phi_F$ is 2-balanced.  In this case, $$S_{v,F} = \begin{cases}
   \frac{2^{n+1}-2^{2n+1}}{24}, & v=w^3 \text{ for some } w\in \Ftn^*,\\
    \frac{2^{2n}+2^{2n+1}}{24}, & \text{otherwise.}
\end{cases}$$

\end{ex}

Example \ref{ex:Gold} shows that a $2$-CAPN function is not necessarily an (APN) permutation. On the other hand, all known scarce examples of APN permutations are either CAPN, for $n$ odd, or, $2$-CAPN, for $n$ even. There may be a structural/combinatorial property on the set of $2$-flats which is preserved by $\Phi_F$, when $F$ is a permutation. Hence, it is natural to ask whether an APN permutation is forced to be either CAPN or $2$-CAPN. 

\begin{op} For an APN permutation $F$, is it true that $\Phi_F$ is balanced, for $n$ odd, and $2$-balanced, for $n$ even?
\end{op}

\section{Conclusions}\label{sec:conc} 
In this article, we have provided several results concerning notions related to APN mappings. In particular, we have extended the study of the breaking porpery to general mappings and characterized the $2$-breaking property of APN functions. Namely, we specifed exactly the number of unbroken 2-dimensional flats for APN functions. Moreover, using this property, we derived a lower bound on the vectorial nonlinearity of $(n,n)$-functions. We provided an in-depth analysis of the relations among the breaking property, and two recently introduced generalizations of the APN property, strongly non-normality and sum-freedom. We also introduced the concept of {\em strong breaking} of $k$-dimensional flats, as a refinement of the breaking property, and derived several structural results for both notions. Moreover, a characterization of a subclass of APN functions in terms of the $2$-strongly breaking property has been given. Finally, we refined a different perspective of the non-vanishing property via a natural character transformation first employed by Carlet. In particular, we derived a precise value for the total sum of the sum-of-square indicators of a polynomial $F$. Notably, we solved Open Problem 4 in \cite{BeCaChLa06} by employing these tools. With the help of this framewrok, we introduced some balancedness properties related to polynomials. One of them is equivalent to the notion of \emph{component-wise APNess}, for odd dimensions. Then, we have proposed a natural extension to any dimension via the property of being balanced-like. A final balancedness property of APN functions has been examined, which we show is satisfied by notable functions such as Dillon's APN permutation and the Gold functions. We suggest four interesting research challenges: show that any APN function $F$ is $k$-breaking for $n-1\geq k\geq \lfloor \frac{n}{2} \rfloor$+1; provide a tight lower bound on the vectorial nonlinearity of quadratic APN functions; prove that a $k$-th-sum order-free polynomial that is also $(k-1)$-st-sum order-free is necessarily $k$-breaking, for $k>3$; and the challenging task of showing that $\Phi_F$ is balanced, for $n$ odd, and $2$-balanced, for $n$ even, whenever $F$ is an APN permutation $F$.

\addresseshere

\newpage
\section{Appendix: computational data}\label{sec:comput}

Table \ref{table: APNs of deg le 3} summarizes the known APN functions for $n=6$, up to CCZ equivalence, with an additional row for the known APN function. Table \ref{table: APNs unbroken flats} shows the number of unbroken $2$-flats, the number of non-strongly broken $2$-flats, and the preimage distribution of $\Phi_F$, for functions in Table \ref{table: APNs of deg le 3}. Similarly, Table \ref{table:unbroken-vanishing-3} displays the number of unbroken $k$-flats and the number of non-strongly broken $k$-flats for $k\in\{3,4\}.$

\begin{table}[H]
    \captionsetup{font=scriptsize}
	\caption{\scriptsize Representatives of CCZ-equivalence classes of APN functions in six variables. Here, $\alpha$ is a root of the primitive polynomial $X^6+X^4+X^3+X+1\in\F_2[x]$.}\label{table: APNs of deg le 3}
	\centering
	\resizebox{0.75\textwidth}{!}{
		\begin{tabular}
			{|c|l|c|} \hline $F$ & \multicolumn{1}{c|}{Univariate representation of $F$} & Preimage dist. \\ \hline
			$D_{1\phantom{3}}$ & $x^{3}$ & $\{* 1, 3^{21} *\}$ \\
			$D_{2\phantom{3}}$ & $x^{3} +\alpha^{11}x^{6} + \alpha x^{9}$ & $\{* 1, 3^{21} *\}$\\
			$D_{3\phantom{3}}$ & $\alpha x^{5} + x^{9} + \alpha^{4}x^{17} + \alpha x^{18} + \alpha^{4}x^{20} + \alpha x^{24} + \alpha^{4}x^{34} + \alpha x^{40}$  & $ \{* 1^{25}, 2^{10}, 3^{5}, 4 *\}$ \\
			$D_{4\phantom{3}}$ & $\alpha^{7}x^{3} + x^{5} + \alpha^{3}x^{9} + \alpha^{4}x^{10} + x^{17} + \alpha^{6}x^{18}$& $\{* 1^{26}, 2^{11}, 3^{4}, 4 *\}$ \\
			$D_{5\phantom{3}}$ & $x^{3} + \alpha x^{24} + x^{10}$ & $\{* 1^{15}, 2^{14}, 3^{7} *\}$ \\
			$D_{6\phantom{3}}$ & $x^{3} + \alpha^{17}(x^{17} + x^{18} + x^{20} + x^{24})$ & $\{* 1^{25}, 2^{9}, 3^{4}, 4, 5 *\}$\\
			$D_{7\phantom{3}}$ & $x^{3} + \alpha^{11}x^{5} + \alpha^{13}x^{9} + x^{17} + \alpha^{11}x^{33} + x^{48}$ & $\{* 1^{25}, 2^{13}, 4^{2}, 5 *\}$  \\
			$D_{8\phantom{3}}$ & $\alpha^{25}x^{5} + x^{9} + \alpha^{38}x^{12} + \alpha^{25}x^{18} + \alpha^{25}x^{36}$ & $\{* 1^{22}, 2^{15}, 3^{4} *\}$  \\
			$D_{9\phantom{3}}$ & $\alpha^{40}x^{5} + \alpha^{10}x^{6} + \alpha^{62}x^{20} + \alpha^{35}x^{33} + \alpha^{15}x^{34} + \alpha^{29}x^{48}$ & $\{* 1^{23}, 2^{10}, 3^{3}, 4^{3} *\}$ \\
			$D_{10}$ & $\alpha^{34}x^{6} + \alpha^{52}x^{9} + \alpha^{48}x^{12} + \alpha^{6}x^{20} + \alpha^{9}x^{33} + \alpha^{23}x^{34} + \alpha^{25}x^{40}$ & $\{* 1^{26}, 2^{11}, 3^{4}, 4 *\}$\\
			$D_{11}$ & $x^{9} + \alpha^{4}(x^{10} + x^{18})+ \alpha^{9}(x^{12} + x^{20} + x^{40})$  & $\{* 1^{28}, 2^{13}, 3^{2}, 4 *\}$ \\
			$D_{12}$ & $\alpha^{52}x^{3} + \alpha^{47}x^{5} + \alpha x^{6} + \alpha^{9}x^{9} + \alpha^{44}x^{12} + \alpha^{47}x^{33} + \alpha^{10}x^{34} + \alpha^{33}x^{40}$  & $\{* 1^{27}, 2^{12}, 3^{3}, 4 *\}$ \\
			$D_{13}$ & $\alpha(x^{6} + x^{10} + x^{24} + x^{33}) + x^{9} + \alpha^{4}x^{17}$   & $\{* 1^{26}, 2^{11}, 3^{4}, 4 *\}$   \\  \hline
			APN Perm. & long univariate representation  & $\{* 1^{64} *\}$\\ \hline
            
			\multirow{2}{*}{$EP$} & $x^3 + \alpha^{17}(x^{17} + x^{18} + x^{20} + x^{24}) + \alpha^{14}(\alpha^{18}x^9 + \alpha^{36} x^{18} + \alpha^9 x^{36} $ &  \multirow{2}{*}{$\{* 1^{21}, 2^{15}, 3, 4, 6 *\}$}  \\ 
			& $+ x^{21} + x^{42}+\operatorname{Tr}_n(\alpha^{27} x + \alpha^{52} x^3 + \alpha^6 x^5 + \alpha^{19} x^7 + \alpha^{28} x^{11} + \alpha^2 x^{13}))$ & \\ \hline		
		\end{tabular}
	}
\end{table}

\begin{table}[H]
    \captionsetup{font=scriptsize}
	\caption{\scriptsize Number of unbroken and non-strongly broken 2-flats of some APN functions in $6$ variables.}\label{table: APNs unbroken flats} 
	\centering
	\resizebox{0.55\textwidth}{!}{
		\begin{tabular}{|c|c|c|c|} \hline $F$ &  $|\mathcal{U}_{6,F}|$ & $|\mathcal{NS}_{6,F}|$ & Preimage dist. of $\Phi_F$ \\ \hline
			$D_{1}$ & $21$ & 0 & $\{* 144^{42}, 208^{21} *\}$\\\ 
			$D_{2\phantom{3}}$ & $21$ & 0 & $\{* 144^{42}, 208^{21} *\}$\\\
			$D_{3\phantom{3}}$ & $ 9$ & 13 & $\{* 80,  112^{5}, 144^{20}, 176^{26}, 208^{10}, 240\}$ \\
			$D_{4\phantom{3}}$ & $8$ & 13 & $\{* 144^{42}, 208^{21} *\}$  \\
			$D_{5\phantom{3}}$ & $7$ & 21 & $\{* 144^{42}, 208^{21} *\}$ \\
			$D_{6\phantom{3}}$ & $18$& 9 & $\{* 80,  112^{5}, 144^{20}, 176^{26}, 208^{10}, 240\}$ \\
			$D_{7\phantom{3}}$ & $18$& 15 & $\{* 80^{6}, 112^{10}, 144^{10}, 176^{16}, 208^{15}*, 240^6\}$ \\
			$D_{8\phantom{3}}$ & $4$ & 9 & $\{* 80,  112^{5}, 144^{20}, 176^{26}, 208^{10}, 240\}$\\
			$D_{9\phantom{3}}$ & $15$& 17 & $\{* 80,  112^{5}, 144^{20}, 176^{26}, 208^{10}, 240\}$\\
			$D_{10}$ & $8$  & 13 & $\{* 80,  112^{5}, 144^{20}, 176^{26}, 208^{10}, 240\}$  \\
			$D_{11}$ & $6$  & 13 & $\{* 112^6, 144^{22}, 176^{24}, 208^9, 240^2*\}$\\
			$D_{12}$ & $7$  & 14 & $\{* 112^6, 144^{22}, 176^{24}, 208^9, 240^2*\}$ \\
			$D_{13}$ & $8$  & 11 & $\{* 80,  112^{5}, 144^{20}, 176^{26}, 208^{10}, 240\}$\\  \hline
			APN Perm & $0$ & 0 & $\{* 160^{56}, 208^7*\}$ \\ \hline
			$EP$ & $25$ & 21 & $\{* 112, 128^6, 144^3,160^{32},176^7,192^{10},208^4*\}$\\ \hline		
		\end{tabular}
	}
\end{table}

\begin{table}[H]
	  \captionsetup{font=scriptsize}
	\caption{Number of unbroken and non-strongly broken $3$-flats and $4$-flats of APN functions in $6$ variables.}\label{table:unbroken-vanishing-3}
	\centering
	\resizebox{0.45\textwidth}{!}{
		\begin{tabular}
			{|c|c|c|c|c|} \hline
			$F$ & $|\mathcal{UB}_{6,3,F}|$ & $|\mathcal{NSB}_{6,3,F}|$ & $|\mathcal{UB}_{6,4,F}|$ & $|\mathcal{NSB}_{6,3,F}|$\\
            \hline
			$D_{1\phantom{3}}$ & 9 & 3285 & 0 & 714\\ 
			$D_{2\phantom{3}}$ & 0 & 3285 & 0 & 714  \\
			$D_{3\phantom{3}}$ & 0 & 7779   & 0 & 546  \\
			
			$D_{4\phantom{3}}$ & 1 & 7721 & 0 & 549   \\
	
			$D_{5\phantom{3}}$ &23 & 6561 & 0 & 203  \\
	
			$D_{6\phantom{3}}$ & 1 & 7332  & 0 & 497 \\
	
			$D_{7\phantom{3}}$ & 1 & 7073 & 0 & 454 \\
	
			$D_{8\phantom{3}}$ & 0 & 7617  & 0 & 473 \\
		
			$D_{9\phantom{3}}$ & 10 & 6640  & 0 & 329 \\
			
			$D_{10\phantom{3}}$ & 1 & 7711 & 0 & 531  \\
			
			$D_{11\phantom{3}}$ & 1 & 7809 & 0 & 559  \\
			
			$D_{12\phantom{3}}$ & 1 & 7586& 0 & 494  \\
			
			$D_{13\phantom{3}}$ & 0 & 7502 & 0 & 499 \\
			\hline
			APN Perm & 0 & 184 & 0 & 812 \\\hline
			EP & 0 & 4336 & 0 & 391 \\\hline
			
		\end{tabular}
	}
\end{table}

\end{document}